\DeclareMathOperator{\E}{\mathbb{E}}
\numberwithin{equation}{section}
\theoremstyle{plain}
\newtheorem{theorem}{Theorem}[section]
\newtheorem{lemma}{Lemma}[section]
\newtheorem{corollary}{Corollary}[section]
\newtheorem{assumption}{Assumption}
\newtheorem{remark}{Remark}[section]
\definecolor{Red}{rgb}{0,0,0}
\definecolor{DR}{rgb}{0,0,0}
\definecolor{Blue}{rgb}{0,0,0}
\newcommand{\Blue}{\color{Blue}}
\definecolor{Green}{rgb}{0,0,0}
\newcommand{\Green}{\color{Green}}
\definecolor{Grey}{rgb}{0,0,0}
\title{Kernel Estimation of Spot Volatility with Microstructure Noise Using Pre-Averaging}
\author{Jos\'e E. Figueroa-L\'opez\thanks{{Department of Mathematics and Statistics, Washington University in St. Louis, St. Louis, MO 63130, USA ({\tt figueroa-lopez@wustl.edu}). Research supported in part by the NSF Grants: DMS-2015323, DMS-1613016.}} \and Bei Wu\thanks{{Department of Mathematics and Statistics, Washington University in St. Louis, St. Louis, MO 63130, USA ({\tt bei.wu@wustl.edu}).}}}
\date{Feb. 6, 2022}
\begin{document}
\maketitle
\begin{abstract}
We first revisit the problem of estimating the spot volatility of an It\^o semimartingale using a kernel estimator. We prove a Central Limit Theorem with an optimal convergence rate for a general two-sided kernel. Next, we introduce a new pre-averaging/kernel estimator for spot volatility to handle the microstructure noise of ultra high-frequency observations. We prove a Central Limit Theorem for the estimation error with an optimal rate and study the optimal selection of the bandwidth and kernel functions. We show that the pre-averaging/kernel estimator's asymptotic variance is minimal for two-sided exponential kernels, hence, justifying the need of working with kernels of unbounded support as opposed to the most commonly used uniform kernel. We also develop a feasible implementation of the proposed estimators with optimal bandwidth. Monte Carlo experiments confirm the superior performance of the devised method.

\medskip
\noindent
\textbf{AMS 2000 subject classifications}: 62M09, 62G05.

\smallskip
\noindent
\textbf{Keywords and Phrases}: Spot volatility estimation; kernel estimation; pre-averaging; microstructure noise; bandwidth selection; kernel function selection.
\end{abstract} \hspace{10pt}

\section{Introduction}
{It\^o semimartingale} models for the dynamics of asset returns have been widely used in financial econometrics. Such a process takes the form 
\begin{equation}\label{ItoModel00}
dX_t = \mu_t dt + \sigma_t d W_t+{dJ_{t}},
\end{equation}
where {\(\{W_t\}_{t\geq{}0}\)} is a standard Brownian motion and {$\{J_{t}\}_{t\geq{}0}$ is the jump component}. {The spot volatility \(\sigma_t\) is a key feature of the model as it} plays a crucial rule in option pricing, portfolio management, and financial risk management. Since last decade, there has been some growing interest in the estimation of volatility due to the wide availability of high frequency data. In this work, we are concerned with spot volatility estimation in an It\^o semimartingale model {\Blue via kernel smoothing}. This is one of the most widely used nonparametric methods in statistics, dating back to the seminal work of  \cite{rosenblatt1956remarks} and \cite{parzen} (see also the monograph \cite{wand1995monographs}).

One of the earliest works on kernel-based estimation of spot volatility dates back to \cite{foster1994continuous_optKernel}, where they studied a weighted rolling window estimator, which is essentially a kernel estimator with compact support.  Asymptotic normality was established under abstract conditions that {\Blue were not directly stated in terms of the coefficients of the It\^o semimartingale (\ref{ItoModel00})}. Concretely, they worked with a time series discretization of the model (\ref{ItoModel00}). 
 \cite{fan2008spot} established the asymptotic normality for a general kernel estimator, {\Blue this time working directly with the model (\ref{ItoModel00})  {\Blue under relatively mild conditions on the coefficients,} but without jumps. {\Blue However, the} result therein also required a certain condition on the convergence rate of the bandwidth to $0$}, which allowed them to neglect the ``target error" coming from approximating the spot volatility by a kernel {\Blue weighted volatility}. 
 As a result, {\Blue the convergence rates of the estimators were suboptimal (see Section 6 in \cite{FigLi} for more details)}. 
 {\Blue \cite{kristensen2010nonparametric} also proved a Central Limit Theorem (CLT) for kernel-based estimators  under the absence of jumps and a non-leverage condition (i.e., $\sigma$ and $W$ were assumed to be independent). \cite{Yuetal2014} generalized Kristensen's result by allowing a jump component of finite activity (FA), but still assuming non-leverage effects. \cite{mancini2015estimation} considered more general It\^o semimartingales, but again FA jumps.  All these works {\Blue only considered CLTs with \emph{suboptimal convergence rates}.}}
\cite{alvarez2012estimation} proposed an estimator of \(\sigma^p_t \) by considering forward finite difference approximations of the realized power variation process of order \(p\), which is essentially a forward-looking kernel estimator with uniform kernel. \cite{JacodProtter} {\Blue (Section 13.3 therein)} considered both backward and forward finite difference approximations of the realized quadratic variation.  Both works obtained the best possible convergence rates for their CLTs {\Blue for a rather} general It\^o semimartingale model ({\Blue in the case \cite{JacodProtter}, also including jumps}). We also refer to \cite{jacodaitsahalia},  {\Blue Chapter 8}, 
for an extensive review of the relevant literature.

More recently, \cite{FigLi} studied the leading order terms of the mean-square error (MSE) of kernel-based estimators {for continuous It\^o semimartingales} under a certain local condition on the covariance function of the spot variance $\sigma_{t}^{2}$, which covers not only Brownian driven volatilities but also those driven by fractional Brownian motion and other Gaussian processes. Using the asymptotics for the MSE, the optimal convergence rate was established and formulas for the optimal bandwidth and kernel functions were derived {\Blue under a non-leverage condition}. CLTs for general {\Blue right-sided} kernel estimators were also obtained (see also Remark 8.10 in \cite{jacodaitsahalia}, where a result for a general right-sided kernel with compact support was stated without proof).
One of the objectives of the present work is then to extend the results of \cite{FigLi} {\Blue and \cite{Yuetal2014}\footnote{{\Blue As explained above, \cite{Yuetal2014} established a CLT for a general two-sided kernel but with {\Blue a} suboptimal converge rate, FA jumps, and non-leverage.}}}, and prove a CLT for a general \emph{two-sided} kernel {\Blue of unbounded support,} {\Blue with optimal convergence rate and in the presence of jumps and leverage {\Blue effects}}. As {\Blue proved} in this paper in greater generality, such kernels can have better performance than either one-sided  or {compactly supported} kernels. Until now, this fact seems to have eluded the literature, which has almost exclusively focussed on uniform kernels.

 While the results described in the previous paragraph are {important} for intermediate intraday frequencies (e.g., 1 to 5 minute), it is widely accepted that financial returns at ultra high-frequency are contaminated by market microstructure noise. 
Specifically, high-frequency asset prices exhibit several stylized features, which cannot be accounted by It\^o semimartingales, such as clustering noises, bid/ask bounce effects, and roundoff errors  {\Blue (cf. \cite{Campbell}, Chapter 3, \cite{Zeng:2003}, \cite{jacodaitsahalia}, Chapter 2)}. Such discrepancies between macro and micro movements are typically modeled by an additive noise. The literature of statistical estimation methods under microstructure noise has grown extensively since last decade and is still a highly researched subject (see \cite{zhang2005tale}, \cite{HanLun}, \cite{Bandi}, \cite{MyklandZhang2012}, \cite{barndorff2008designing}, \cite{podolskij2009estimation}, and \cite{jacod2009microstructure}  for a few seminal works in the area as well as the monograph \cite{jacodaitsahalia}).
Most of the existing literature on volatility estimation for high frequency data with microstructure noise has mainly focused on the estimation of {\Blue the} integrated volatility or variance (IV), defined as $IV_{T}=\int_{0}^{T}\sigma_{t}^2dt$.  \cite{zhang2005tale} showed that scaled by \((2n)^{-1}\),  the realized variance estimator, the gold standard for IV estimation in the absence of microstructure noise, consistently estimates the variance of the microstructure noise, instead of the integrated volatility, as the sampling frequency \(n\) increases. There are several approaches to overcome this problem: the Two Scale Realized Variance (TSRV) estimator by \cite{zhang2005tale} and the efficient Multiscale Realized Variance by \cite{Zhang2006}; the Realized Kernel estimator by \cite{barndorff2008designing};  the pre-averaging method by \cite{podolskij2009estimation} and \cite{jacod2009microstructure}; and the Quasi-Maximun Likelihood Estimator (QMLE) by \cite{xiu2010quasi}.

Spot volatility estimation is often viewed as a byproduct of integrated volatility estimation {\Blue since, in principle, we can recover the spot volatility $\sigma_t^2$ as a finite-difference approximation of an estimate of the integrated volatility}. Following this idea,  \cite{zu2014estimating} {constructed} {\Blue a} Two-Scale Realized Spot Variance (TSRSV)  estimator  based on the TSRV integrated variance estimator {\Blue of \cite{zhang2005tale}}. They proved consistency and derived the asymptotic distribution of the estimation error with a convergence rate of  \(n^{-1/12}\), which is suboptimal. 

The second objective of our work is to construct a kernel based estimator of the spot volatility based on the pre-averaging integrated variance estimator of \cite{jacod2009microstructure}. The basic idea is simple and natural. If we denote $\widehat{IV}^{pre-av}_{t}$ the pre-averaging estimator of $IV_{t}=\int_{0}^{t}\sigma_{s}^{2}ds$, our estimators combines this with a kernel localization technique as follows:
\[
	\hat{\sigma}_{t}^{2}=\int_{0}^{t}\frac{1}{b_n}K\left(\frac{s-t}{b_n}\right)d IV^{pre-av}_{s},
\]
where $K$ is a suitable kernel function and $b_n>0$ is the bandwidth, which should converge to $0$ at an appropriate rate.   We establish the asymptotic mix normality of our estimator and identify two asymptotic regimes for two different bandwidth convergence regimes. One of those regimes yields the optimal convergence rate of \(n^{-1/8}\) for our estimator. It is important to point out that the asymptotic theory for the kernel/pre-averaging estimator cannot be derived from that for the pre-averaging integrated variance and also is substantially different and harder than that for kernel based estimators in the absence of microstructure noise. 

{\Blue Though combining pre-averaging and kernel {\Blue smoothing} is a natural idea, to the best of knowledge, there are only two related results in the literature.} \cite{jacodaitsahalia} {\Blue (Section 8.7 therein)}, stated, without proof, a stable convergence result of {\Blue a pre-averaging estimator for the spot volatility of a continuous It\^o semimartingale}\footnote{{\Blue The estimator therein is different from ours. Our estimator includes a debiasing term, which is {\Blue omitted} in  \cite{jacodaitsahalia}. Our Monte Carlo experiments show that such a correction is important in finite samples.}}, but only in the case of a one-sided uniform kernel  $K(t)={\bf 1}_{[0,1]}(t)$ {\Blue (see also \cite{chen2018inference} for a similar estimator)}. Here we consider {\Blue a truncated version to handle the jumps and a general two-side kernel} (see below as to the need of considering such kernels). {\Blue \cite{Yu2} also proposed {\Blue a pre-averaging kernel estimator for the spot volatility, slightly} different from our estimator. They established asymptotic normality with suboptimal convergence rate for their untruncated estimator in the case of a continuous It\^o semimartingale, and for their truncated estimator in the presence of L\'evy jumps of bounded variation. {\Blue In both situations, a non-leverage condition was adopted. In our case, we consider not only leverage effects, but also more general jump processes, not necessarily of L\'evy type and with no restriction in the index of jump activity, under both the suboptimal and optimal convergence rate regimes.}}

As an important application of our results, we study the problem of bandwidth and kernel function selection.  Using our CLT, we first derive the optimal bandwidth and then the optimal kernel function (the one that minimizes the limiting variance) at the optimal rate. 
We {then} show that the optimal kernel  is a two-sided exponential or Laplace function $K(x)=\frac{1}{2}e^{-|x|}$. This fact justifies the necessity of developing the asymptotic theory for general kernels {\Blue of unbounded support} over the more widely used uniform kernels. Again, we emphasize that our work is critical because it calls into question the indiscriminate use of uniform one-sided kernels in the literature. If we were constrained to compactly supported kernels {\Blue in the suboptimal asymptotic regime}, a uniform kernel would be the best, but this is no longer the case if we allow kernels with unbounded support {\Blue and/or consider an optimal convergence rate regime}. Similarly, two-sided kernels will perform better than one-sided, even if compactly supported.

The implementation of the optimum bandwidth (at the optimum rate) is more challenging because it involves the vol vol and the spot volatility itself. Hence, to implement it we develop a new method, which iteratively estimates the spot volatility, the vol vol, and the optimal bandwidth. Using Monte Carlo simulation,  we compare our estimator with the TSRSV estimator of \cite{zu2014estimating} and show a significant improved accuracy. We also illustrate the improvement achieved by the optimal exponential kernel and the calibrated optimal bandwidth via our iterative method. 

We finish the introduction by giving one more reason as to the importance of estimating the spot volatility.
 As mentioned above, while spot volatility estimation can, at least conceptually, be seen as a byproduct of integrated variance estimation, interestingly enough, one can also use spot volatility estimation as an intermediate step toward the estimation of integrated volatility functionals of the form $I_{T}(g):=\int_{0}^{T}g(\sigma_{s}^{2})ds$. Specifically, once an estimator $\hat{\sigma}_{t}^{2}$ of $\sigma^{2}_{t}$ has been developed, one can naturally devise an estimator for $I_{T}(g)$ of the form $\hat{I}_{T}(g)=\Delta_{n}\sum_{i=1}^{n}g(\hat{\sigma}^{2}_{t_{i}})$, where $t_{i}=i\Delta_{n}$ and $\Delta_{n}=T/n$, followed by an appropriate bias correction adjustment. In the absence of noise, \cite{jacod2013quarticity}, \cite{li2019efficient}, and \cite{mykland2009inference} have developed methods for the estimation of these functionals (see also \cite{li2016generalized}, \cite{ait2019principal}, and \cite{li2017adaptive} for related methods and other applications thereof). Recently, \cite{chen2018inference} developed an estimator for $\hat{I}_{T}(g)$ based on a forward finite difference approximation of the standard pre-averaging estimator of the integrated variance.

The rest of the paper is organized as follows. Section \ref{setting} introduces the setting of the problem and the main result. Section \ref{optimal_para} shows an application of our main theorem: the optimal parameter and kernel selection. The simulations are provided in Section \ref{simulation}.  {\Blue Some conclusions are given in Section \ref{ConcludeSec}}. Proofs of our main results can be found in two appendices.

\section{The Setting, Estimator, and Main Results} \label{setting}
Throughout, {we consider an It\^o semimartingale of the form:}
\begin{equation} \label{eq:X}
\begin{split} 
X_{t}=&X_0 + \int_0^t\mu_{s} ds +\int_0^t\sigma_{s} d W_{s}  \\
&+\int_{0}^{t} \int_{E} \delta(s, z) \mathbbm{1}_{\{{|\delta(s, z)|} \leq 1\}}(\mathfrak{p}-\mathfrak{q})(d s, d z)
+\int_{0}^{t} \int_{E} \delta(s, z) \mathbbm{1}_{\{{|\delta(s, z)|}>1\}} \mathfrak{p}(d s, d z),
\end{split}
\end{equation} 
where all stochastic processes ($\mu :=\left\{\mu_{t}\right\}_{t \geq 0}$, $\sigma :=\left\{\sigma_{t}\right\}_{t \geq 0}$,  $W :=\left\{W_{t}\right\}_{t \geq 0}$, $\mathfrak{p}:=\{\mathfrak{p}(B):B\in\mathcal{B}(\mathbb{R}_{+}\times E)\}$) are defined on a complete filtered probability space  $\left(\Omega^{(0)}, \mathcal{F}^{(0)}, \mathbb{F}^{(0)}, \mathbb{P}^{(0)}\right)$  with filtration $ \mathbb{F}^{(0)}=\big(\mathcal{F}^{(0)}_{t}\big)_{t \geq 0}$  and {are assumed to satisfy standard conditions for $X$ to be well-defined}. Here, $W$ is a standard Brownian Motion (BM)  adapted to the filtration \(\mathbb{F}^{(0)}\), {and} $\mathfrak{p}$   is a Poisson random measure on $\mathbb{R}_{+} \times E$  for some arbitrary Polish space $E$ with compensator  $\mathfrak{q}(\mathrm{d} u, \mathrm{d} x)=\mathrm{d} u \otimes \lambda(\mathrm{d} x)$, {where} \(\lambda\) is a \(\sigma-\) finite measure on \(E\) having no atom. {For further details regarding It\^o semimartingales, see Section 2.1.4 in \cite{JacodProtter}.}

We denote the spot variance process \(c_t = \sigma_t^2\) and assume it is also an It\^o semimartingale with the following {dynamics:}
\begin{equation} \label{eq:sigma}
c_{t}=c_0 + {\int_0^t{\tilde{\mu}_{s}} \mathrm{d} s+\int_{0}^{t}\tilde{\sigma}_{s} \mathrm{d}B_{s}  + \int_{0}^{t}\int_{E} \tilde{\delta}(s, z)(\mathfrak{p}-\mathfrak{q})(\mathrm{d} s, \mathrm{d} z)},
\end{equation}
where {$B:=\{B_t\}_{t\geq0}$} is a standard Brownian Motion adapted to  $\mathbb{F}^{(0)}$ {so that} $d\left<W,B\right>_{t}=\rho_{t} d t$. Here, $\{\tilde{\mu}_t\}_{t\geq{}0}$ is adapted locally bounded; \(\{\rho_t\}_{t\geq{}0}\) is adapted, locally bounded, c\`adl\`ag;  $\{\tilde{\sigma}_t\}_{t\geq{}0}$ is adapted c\`adl\`ag  and $\tilde{\delta}$  is a predictable function on ${\mathbb{R}_{+}} \times E$  satisfying standard conditions for the process above to be well-defined {(see \cite{JacodProtter})}.

We now state the {main assumption on the process \(X\)}:
\begin{assumption} \thlabel{X}
The process \(X\) satisfies (\ref{eq:X}) with \(c_t= \sigma^2_{t}\) satisfying (\ref{eq:sigma}) and, for some \(r \in [0,2]\), measurable functions $\Gamma_{m}, \Lambda_{m}:E\to\mathbb{R}_{+}$, {constants $C_{m}<\infty$,} and a localizing sequence of stopping times {$\left(\tau_{m}\right)_{m\geq{}1}$ such that $\tau_{m}\to\infty$}, we have
\begin{equation*}
t \in\left[0, \tau_{m}\right] \Longrightarrow\left\{\begin{array}{l}
\left|\mu_{t}\right|+\left|\sigma_{t}\right|+\left|\tilde{\mu_{t}}\right|+\left|\tilde{\sigma_{t}}\right| \leq C_m, \\
|\delta(t, z)| \wedge 1 \leq \Gamma_{m}(z), \quad \text{ where }\int \Gamma_{m}(z)^{r} \lambda(d z)<\infty,\\
\left|\tilde{\delta}(t, z)\right| \wedge 1 \leq \Lambda_{m}(z), \quad \text{ where }\int \Lambda_{m}(z)^{2} \lambda(d z)<\infty.
\end{array}\right. 
\end{equation*}
\end{assumption}
The parameter $r$ plays a key role in our asymptotic results. In short, $r$ determines the jump activity of the process: the larger $r$ is, the more active or frequent are the small jumps of the process. When $r=1$, the process exhibit finite many jumps in any bounded time interval (in that case, we say that the jumps are of finite activity). When $r<1$, the jump component of the process is of bounded variation. 

To establish {the} central limit theorem for the kernel estimator $\hat{c}_{t}$, we need some assumptions on the kernel.
\begin{assumption} \thlabel{kernel} 
The kernel function $K : \mathbb{R} \rightarrow \mathbb{R}$ is bounded and 
\begin{enumerate}
\item $\int K(x) d x=1$;
\item K is Lipschitz and piecewise $C^1$  on $(-\infty,\infty)$;
\item (i) $\int|K(x)  x| d x<\infty$ ; (ii) $K(x) x^{2} \rightarrow 0$,  as $|x| \rightarrow \infty$ ; (iii) $\int |K^{\prime}(x)|dx < \infty$.
\end{enumerate}
\end{assumption}
For an arbitrary process $\{U_{t}\}_{t\geq{}0}$ and a given time span $\Delta_{n}>0$, we shall use the notation 
\[
	U_{i}^{n}:=U_{i\Delta_{n}},\qquad 
	\Delta_{i}^{n} U:=U_{i}^{n}-U_{i-1}^{n}.
\]
Stable convergence in law is denoted by \( \stackrel{st}{\longrightarrow}\). See (2.2.4) in \cite{JacodProtter} for the definition of this type of convergence. As usual, $a_{n}\sim b_{n}$  means that $a_{n}/b_{n}\to{}1$ as $n\to\infty$.

Throughout the paper, we consider two settings: observations with and without market microstructure noise. In the absence of microstruture noise, we use standard kernel estimation, while to {handle} the noise we propose a type of pre-averaging kernel estimator.  These two settings together with the main results are presented in the following two subsections.

\subsection{Observations without microstructure noise}
In this subsection, we assume that we can directly observe the process $X$ in (\ref{eq:X}) at discrete times   $t_{i} :=t_{i, n} :=i \Delta_{n}$,  where $\Delta_{n} :=T / n$ and $T\in(0,\infty)$ is a given fixed time horizon.  We also consider a sequence of truncation levels $v_{n}$ satisfying
\begin{equation} \label{eq:trucation_para}
v_{n}=\alpha \Delta_{n}^{\varpi} \quad \text { for some } \alpha>0, \quad \varpi \in\left(0, \frac{1}{2}\right).
\end{equation}
To estimate the spot volatility $c_{\tau}$, at a given time $\tau \in (0,T)$, we adopt the kernel estimator, studied in \cite{fan2008spot,kristensen2010nonparametric}  and {its truncated version}, studied in \cite{Yuetal2014} {\Blue and \cite{mancini2015estimation}}:
\begin{align}\label{MDKNN}
	{\hat{c}^{n}\left({m}_{n}\right)_{\tau}} &:=\sum_{i=1}^{n} K_{{m}_{n}\Delta_n}\left(t_{i-1}-\tau\right)\left(\Delta_{i}^{n} X\right)^{2},\\
	\label{MDKNN_truncated}
	{\hat{c}^{n}\left({m}_{n},v_n\right)_{\tau}} &:=\sum_{i=1}^{n} K_{{m}_{n}\Delta_n}\left(t_{i-1}-\tau\right)\left(\Delta_{i}^{n} X\right)^{2}\mathbbm{1}_{\left\{\left|\Delta_{i}^{n} X\right| \leq v_{n}\right\}},
\end{align}
where \(K_{b}(x):=K(x / b) / b\), $m_{n}\in\mathbb{N}$, and $b_n:={m}_{n} \Delta_n$ is the bandwidth of the kernel function\footnote{Here, ${m}_{n}$ is equivalent  to $k_n$ in the Theorem 13.3.7 of \cite{JacodProtter}, while ${m}_{n}\Delta_n$ is equivalent to the bandwidth $h_n$ of \cite{FigLi}.}. The asymptotic behavior of this estimator with one-sided uniform kernels (i.e., $K(x)={\bf 1}_{[0,1]}(x)$ or $K(x)={\bf 1}_{[-1,0]}(x)$) was studied in \cite{JacodProtter}. {\cite{Yuetal2014} showed a CLT for (\ref{MDKNN_truncated}) at the suboptimal convergence rate ($\beta=0$) under a nonleverage condition (i.e., $d\left<W,B\right>_{t}=0$ in (\ref{eq:X})-(\ref{eq:sigma}) and compound Poisson jump component}. 
In this part, we extend the results to general two-sided kernels with possibly unbounded support, {optimal rate, and more general type of It\^o semimartingale}. There is an important motivation for considering {general unbounded} kernels since, as proved in \cite{FigLi} {in the non-leverage case and without jumps}, exponential and some other nonuniform unbounded kernels can yield estimators with {significantly} better performance than those based on uniform kernels. {In Section \ref{optimal_para} below, we show that this is also true under the more general semimartingale model (\ref{eq:X})-(\ref{eq:sigma}).}

We now proceed to describe the limiting distribution of the estimation error of (\ref{MDKNN})-(\ref{MDKNN_truncated}).  Let \(V, V^{\prime}\) be independent centered Gaussian variables,  {independent of $\mathcal{F}^{(0)}$,} defined on a ``very good'' filtered extension {$\Big(\widetilde{\Omega}^{(0)}, \widetilde{\mathcal{F}}^{(0)},\big(\widetilde{\mathcal{F}}_{t}^{(0)}\big)_{t \geq 0}, \widetilde{\mathbb{P}}^{(0)}\Big)$ 
of $\Big(\Omega^{(0)}, \mathcal{F}^{(0)}, \big(\mathcal{F}_{t}^{(0)}\big)_{t \geq 0}, \mathbb{P}^{(0)}\Big)$} (see \cite{JacodProtter} for definition) such that
\begin{equation} \label{eq:Y_Yprime}
\E\left( V^2\right) =2\int K^2(u) du, \quad \E\left(V^{\prime 2} \right)= \int L^2(t)dt, 
\end{equation} where \(L(t) = \int_{t}^{\infty} K(u) d u \mathbf{1}_{\{t>0\}}-\int_{-\infty}^{t} K(u) d u \mathbf{1}_{\{t \leq 0\}}\). 
Next, let \(Z^{(0)}_{\tau}, Z^{\prime (0)}_{\tau}\) be defined as
\begin{equation} \label{eq:limiting_Z}
Z^{(0)}_{\tau} = c_{\tau}  V, \quad Z^{\prime (0)}_{{\tau}} =  {\tilde{\sigma}_{\tau} }V^{\prime}.
\end{equation}
Now we are ready to introduce our main theorem for a general kernel estimator in the absence of microstructure noise. The proof is given in Appendix \ref{PrfOfMnRsltTh1}.
\begin{theorem} \thlabel{thm_no_noise}
Let the sequence $\{{m}_{n}\}_{n\geq{}1}$ that controls the bandwidth of the kernel estimator be such that
${m}_{n} \rightarrow \infty$, ${m}_{n}\Delta_n\rightarrow 0$, and 
\begin{equation}\label{IDBWN}
	{\Blue {m}_{n} \sqrt{\Delta_{n}} \rightarrow \beta, \quad\text{with}\quad \beta \in [0,\infty]}.
\end{equation} Then, under \thref{X,kernel} above, at a given time \(\tau \in [0,T],\) 
we have: \begin{enumerate}[label=\alph*)]
\item If $X$ is continuous, both the truncated version (\ref{MDKNN_truncated}) and the non-truncated version (\ref{MDKNN}) satisfy the following stable convergence in law,  as $n\to\infty$:
\begin{equation} \label{eq:stable_convergence}
\begin{aligned}
\rm (i) &\quad  \sqrt{{m}_{n}}\left(\hat{c}^{{n}}_{\tau}-c_{\tau}\right)\stackrel{st}{\longrightarrow}Z^{(0)}_{\tau}+\beta Z_{\tau}^{\prime (0)},\quad \text{if}\quad \beta < \infty,\\
\rm (ii) &\quad \frac{1}{\sqrt{{m}_{n}\Delta_n}}\left(\hat{c}^{{n}}_{\tau}-c_{\tau}\right) \stackrel{st}{\longrightarrow}Z_{\tau}^{\prime (0)},\quad\text{if}\quad \beta =\infty,\end{aligned}
\end{equation} where  \(Z^{(0)}_{\tau}, Z_{\tau}^{\prime  (0)}\) are defined as in (\ref{eq:limiting_Z}).
\item {\Blue Suppose 
\begin{equation} \label{eq:m_n_range}
m_{n} \Delta_{n}^{a} \rightarrow \beta^{\prime} \in(0, \infty), \quad \text {where } a \in(0,1),  
\end{equation} 
so that \eqref{IDBWN} holds with $\beta=0$ when $a<1/2$, $\beta=\beta'\in(0,\infty)$ when $a=1/2$, {\Blue or} $\beta=\infty$ when $a>1/2$. Then, when $X$ is discontinuous,} we have (\ref{eq:stable_convergence}) for the non-truncated version (\ref{MDKNN}), as soon as 
\begin{equation*}
\text{either } r<\frac{4}{3}, \quad \text { or }\quad \frac{4}{3} \leq r<\frac{2}{1+a} \quad\left(\text{and then } a<\frac{1}{2}\right).
\end{equation*}
\item Under (\ref{eq:m_n_range}), when $X$ is discontinuous, we have (\ref{eq:stable_convergence}) for the truncated version, as soon as
\begin{equation} \label{eq:r_range_truncated}
r<\frac{2}{1+a \wedge(1-a)}, \quad \varpi>\frac{a\wedge(1-a)}{2(2-r)}.
\end{equation}
\end{enumerate}
\end{theorem}
\begin{remark}
{The CLTs  above generalize the results in  \cite{FigLi}, where only right-sided kernels were considered under the absence of jumps,  in \cite{JacodProtter} and \cite{alvarez2012estimation}, where only one-sided uniform kernels (i.e., $K(x)={\bf 1}_{[0,1]}(x)$ or $K(x)={\bf 1}_{[-1,0]}(x)$) were studied, and in \cite{jacodaitsahalia}, where a CLT for a general right-sided kernel with compact support was stated without proof.  The proof of Theorem \ref{thm_no_noise} is also different from that in} \cite{FigLi} and is based on the approach of \cite{JacodProtter}. {\Blue The case with $\beta=0$ produces a CLT with convergence rate $m_n^{-1/2}$, which vanishes slower than $\Delta_n^{1/4}$, the optimal rate. In that case, our result generalizes  \cite{fan2008spot}, \cite{kristensen2010nonparametric}, \cite{Yuetal2014}, and \cite{mancini2015estimation} by allowing jumps of both finite and infinite activity and {\Blue dependence} between the volatility and the Brownian motion driving the log-return process $X$ ({\Blue leverage effects}).}
\end{remark}
\begin{remark}
	{\Blue As stated by the points (b)-(c) above, in the presence of jumps, both estimators (\ref{MDKNN}) and \eqref{MDKNN_truncated} can attain the optimal convergence rate of $\Delta_n^{1/4}$, but only if the index of jump activity is less than $4/3$. In the presence of higher jump activity, the estimators can only achieved the suboptimal convergence rate of $m_n^{-1/2}\gg \Delta_n^{1/4}$ (case $a<1/2$ and $\beta=0$). It is worth noting the surprising fact that, even in the presence of jumps, the untruncated kernel estimator (\ref{MDKNN}) can still consistently estimate the spot volatility. In the case of finitely many jumps, we may explain this fact by noting that in a small local window, there could be at most a finite number of jumps, while, in the limit, there are increasingly more increments that do not contain jumps\footnote{{\Blue We thank a referee for pointing out this interesting insight.}}. Nevertheless, in practice and for better finite sample performance, one typically would prefer the truncated version of the estimator.}
\end{remark}

\begin{remark}
{As explained in the introduction, it is critical to expand the results to general two-sided kernels of unbounded support since these kernels exhibit superior performance. For instance, in the suboptimal rate case ($\beta=0$), the kernel $K$ with support $[0,1]$ that minimizes the asymptotic variance $2\int K^{2}(u)du$ is the uniform kernel $K_{unif}(x)={\bf 1}_{[0,1]}(x)$ since, by Jensen's inequality, $\int_0^1 K^{2}(u)du\geq{}(\int_{0}^{1}K(x)dx)^{2}=1=\int_0^1 K_{unif}^{2}(u)du$. However, there are many other kernels that are two-sided or of unbounded support and that attain smaller variance, even in the suboptimal rate case $\beta=0$. For instance, both $K(x)=2^{-1}{\bf 1}_{[-1,1]}(x)$ and $K_{exp^+}(x)=e^{-x}{\bf 1}_{(0,\infty)}(x)$ are such that $\int K^{2}(u)du=1/2$. In the optimal rate case ($\beta\in(0,\infty)$), the optimal kernel is the two-sided exponential $K(x)=2^{-1}e^{-|x|}$ as shown in Subsection \ref{OptNoMicroSection} below.} 
\end{remark}

\subsection{Observations in the presence of microstructure noise}
In this part, we assume  that our observations of $X$ are contaminated by ``microstructure'' noise. That is, we assume we observe  
\begin{equation} \label{SmplSchm0}
	Y_{{t_i}}:=X_{t_i} +  \epsilon_{t_i},
\end{equation}
where $\epsilon = \{\epsilon_t\}$ is the noise process and, as before, $t_{i} :=t_{i, n} :=i \Delta_{n}$, $0 \leq i \leq n$,  with $\Delta_{n} :=T / n$ and a fixed time horizon $T\in(0,\infty)$. 
We allow the noise $\epsilon $ to depend on $X$, but in such a way that, conditionally on the whole process $X$, $\{\epsilon_t\}_{t\geq 0} $ is a family of independent, centered random variables. More formally, following the framework of \cite{JacodProtter}, for each time $t$, we consider a transition probability $Q_{t}\left(\omega^{(0)}, d z\right)$ from $\left(\Omega^{(0)}, \mathcal{F}^{(0)}_{t}\right)$ into $(\mathbb{R},\mathcal{B}(\mathbb{R}))$, and the canonical process $\{\epsilon_{t}\}_{t\geq{}0}$ on $\mathbb{R}^{[0, \infty)}$ defined as $ \epsilon_{t}(\tilde{\omega})=\tilde{\omega}(t)$ for $t\geq{}0$ and $\tilde{\omega}\in\mathbb{R}^{[0, \infty)}$. Next, we construct a new probability space	$\left(\mathbb{R}^{[0, \infty)}, \mathbb{B}, \sigma\left(\epsilon_{s} : s \in[0, t)\right), \mathbb{Q}\right) $, where $\mathbb{B}$ is the product Borel $\sigma$-field and $\mathbb{Q}=\otimes_{t \geq 0} Q_{t}$. We then define an enlarged  filtered probability space $\left(\Omega, \mathcal{F},\left(\mathcal{F}_{t}\right)_{t \geq 0}, \mathbb{P}\right)$  and a filtration $(\mathcal{H}_t) $ as follows:
\begin{equation*}
\left\{\begin{array}{l}\Omega=\Omega^{(0)} \times {\mathbb{R}^{[0, \infty)}}, \\ {\mathcal{F}_{t}=\mathcal{F}^{(0)}_{t} \otimes \sigma\left(\epsilon_{s} : s \in[0,t)\right), \quad \mathcal{H}_{t}=\mathcal{F}^{(0)}\otimes \sigma\left(\epsilon_{s} : s \in[0,t)\right)} \\ {\mathbb{P}(d \omega^{(0)}, {d \tilde{\omega}}  )=\mathbb{P}^{(0)}(d \omega^{(0)}) \mathbb{Q}(\omega^{(0)}, d {\tilde{\omega}}).}\end{array}\right.
\end{equation*}
Any variable or process in either $\Omega^{(0)} $ or $\mathbb{R}^{[0,\infty)} $ can be {extended} in the usual way to a variable or a process on $\Omega $. 
We now state the assumptions on the $\mathcal{F}^{(0)} $-conditional law of the noise process as well as some slightly different assumptions on the spot variance process and kernel function.
\begin{assumption} \thlabel{noise}
All variables $\left(\epsilon_t: t \geq  0 \right)$ are independent conditionally on $\mathcal{F}^{(0)}$, and we have
\begin{itemize} 
\item $\mathbb{E}\left(\left.\epsilon_{t} \right| \mathcal{F}^{(0)}\right)=0$,
\item For all $p>0$, the process $ \mathbb{E}\left(\left. \left|\epsilon_{t}\right|^{p} \right| \mathcal{F}^{(0)}\right)$ is $\left(\mathcal{F}^{(0)}_{t}\right)$-adapted and locally bounded,
\item The conditional variance process $\gamma_{t}= \mathbb{E}\left(\left. \left|\epsilon_{t}\right|^{2} \right| \mathcal{F}^{(0)}\right)$ is c\`adl\`ag.
\end{itemize}
\end{assumption}


Along the lines of \cite{JacodProtter} (originally proposed in \cite{jacod2009microstructure}), to construct the pre-averaging estimator, we need:
\begin{itemize}
\item[(i)] A sequence of positive integers $k_n$, which represent the length of the pre-averaging window,  satisfying
    \begin{equation}\label{AsympCndkn}
        k_{n}=\frac{1}{\theta \sqrt{\Delta_{n}}}+\mathrm{o}\left(\frac{1}{\Delta_{n}^{1/4}}\right), \quad \mbox{ {for some} }\; \theta>0;
    \end{equation}
\item[(ii)] A real-valued weight function $g$ on [0, 1], satisfying that g is continuous, piecewise $C^1$ with a piecewise Lipschitz derivative $g'$ such that\footnote{{It is enough to ask $g\in L^{2}([0,1])$, but, since the pre-averaging estimator is invariant to scalings of the weight function $g$, without loss of generality, we can impose the condition $|g|_{L^{2}}=1$.}}
\begin{equation*}
    g(0)=g(1)=0, \quad   \int_{0}^{1} g(s)^{2} d s=1{.}
\end{equation*}
\item[(iii)] A sequence $v_n$ representing the truncation level, satisfying 
\begin{equation} \label{eq:v_n_noise}
    v_{n}=\alpha \left(k_{n}\Delta_n \right)^{\varpi} \quad \text{       for some  } \alpha > 0, \;\varpi \in (0,\frac{1}{2});
\end{equation}
\end{itemize}
Next, for an arbitrary process $U$, we define the sequences:
\begin{equation} \label{eq:notation_bar_hat}
      \begin{array}{l}
{\overline{U}_{i}^{n}=\sum_{j=1}^{k_{n}-1} g\left(\frac{j}{k_{n}}\right) \Delta_{i+j-1}^{n} U} = -\sum_{j=1}^{k_{n}} \left(g\left(\frac{j}{k_{n}}\right) - g\left(\frac{j-1}{k_{n}}\right)\right) U_{i+j-2}^{n},\\
{\widehat{U}_{i}^{n}=\sum_{j=1}^{k_{n}}\left(g\left(\frac{j}{k_{n}}\right)-g\left(\frac{j-1}{k_{n}}\right)\right)^{2} \left(\Delta_{i+j-1}^{n} U\right)^2}.
\end{array} 
\end{equation}
As seen from the definition, $\overline{U}_{i}^{n} $ is the weighted average of the increments $\Delta_{i+j-1}U,    j= 1, \cdots, k_n -1 $, while $ \widehat{U}_{i}^{n}$ is a de-biasing term.
For a weight function $g$ as above, let
\begin{equation} \label{eq:phi}
    \phi_{k_{n}}(g)=\sum_{i=1}^{k_{n}} g(\frac{i}{k_{n}})^{2};\quad
    \phi_{k_{n}}^{\prime}(h)=\sum_{i=1}^{k_{n}}\left(g(\frac{i}{k_{n}})-g(\frac{i-1}{k_{n}})\right)^{2},
\end{equation}
and note that
\begin{equation} \label{eq:phi_g}
    \begin{array}{l}
\phi_{k_{n}}(g) =k_{n} \int_0^1 g^2(s) ds+\mathrm{O}(1) = k_n +\mathrm{O}(1) ,\\ 
\phi_{k_{n}}^{\prime}(g) =\frac{1}{k_{n}} \int_0^1 {(g^{\prime}(s))^{2}} ds+\mathrm{O}\left(\frac{1}{k_{n}^{2}}\right). \end{array}
\end{equation}
Now we can define the pre-averaging estimator of the spot variance $c_{\tau} $ at $\tau\in(0,T)$. We consider a non-truncated version, defined as
\begin{equation} \label{eq:non_truncated_pre}
\hat{c}\left(k_{n}, {m}_{n}\right)_{\tau} =\frac{1}{\phi_{k_{n}}\left(g\right)} \sum_{j=1}^{n-k_n+1} K_{{m}_{n} \Delta_n}\left(t_{j-1} - \tau\right)\left(\left(\overline{Y}_{j}^{n}\right)^2   - \frac{1}{2}\widehat{Y}_{j}^{n}\right),
\end{equation}
as well as, two truncated versions:
\begin{equation}\label{PreAverEst0} 
\begin{split}
\hat{c}\left(k_{n}, {m}_{n},v_n,1\right)_{\tau} &=\frac{1}{\phi_{k_{n}}\left(g\right)} \sum_{j=1}^{n-k_n+1} K_{{m}_{n} \Delta_n}\left(t_{j-1} - \tau\right)\left(\left(\overline{Y}_{j}^{n}\right)^2 \mathbbm{1}_{\{|\bar{Y}^n_j | \leq v_n\}}  - \frac{1}{2}\widehat{Y}_{j}^{n}\right),\\
\hat{c}\left(k_{n}, {m}_{n},v_n,2\right)_{\tau} &=\frac{1}{\phi_{k_{n}}\left(g\right)} \sum_{j=1}^{n-k_n+1} K_{{m}_{n} \Delta_n}\left(t_{j-1} - \tau\right)\left(\left(\overline{Y}_{j}^{n}\right)^2  - \frac{1}{2}\widehat{Y}_{j}^{n}\right)\mathbbm{1}_{\{|\bar{Y}^n_j | \leq v_n\}} .
\end{split}
\end{equation} 
{\Blue The basic idea is the same as in the case where the efficient process $X$ is observed without noise. We see $\overline{Y}_{j}^{n}$ as a noise-free proxy of the increment $\Delta_{j}^n X$. By properly choosing the truncation  {\Green level} $v_n\to{}0$ (e.g., $v_n\gg \sqrt{u_n ln(1/u_n)}$ with $u_n:=k_n \Delta_n$), the event $|\bar{Y}^n_j | >v_n$ will indicate the occurrence of a ``big" jump happening during the time interval $[j\Delta_n,(j+k_n)\Delta_n]$ and, thus, we eliminate such a term from the summations in \eqref{PreAverEst0}. The estimator $\hat{c}\left(k_{n}, {m}_{n},v_n,2\right)_{\tau}$ is closer to the one defined in \cite{Yu2}, while $\hat{c}\left(k_{n}, {m}_{n},v_n,1\right)_{\tau}$ is similar to the one considered in \cite{chen2018inference}, though therein only the one-sided kernel $K(x)={\bf 1}_{[0,1]}(x)$ is studied. It will be interesting to compare their statistical properties and finite-sample performance.} 

Before giving the asymptotic behavior of the pre-averaging estimators  {\Blue (\ref{eq:non_truncated_pre}) and (\ref{PreAverEst0})}, we introduced the limiting distributions. Below, $Z_{\tau}, Z^{\prime}_{\tau} $ are defined on a good extension  $\Big(\widetilde{\Omega}, \widetilde{\mathcal{F}},\big(\widetilde{\mathcal{F}}_{t}\big)_{t>0}, \widetilde{\mathbb{P}}\Big)$   of the space $\left(\Omega, \mathcal{F},\left(\mathcal{F}_{t}\right)_{t \geq 0}, \mathbb{P}\right)$ {so that,} conditionally on $ \mathcal{F}$, {they} are independent Gaussian random variables with conditional variance  \begin{equation} \label{eq:delta12}
\begin{aligned}
&\delta_1^2(\tau) := \widetilde{\mathbb{E}}\left(Z_{\tau}^2 | \mathcal{F}\right) =  4\left(\Phi_{22} c_{\tau}^{2}/\theta+2 \Phi_{12} c_{\tau} \gamma_{\tau}\theta+\Phi_{11} \gamma_{\tau}^{2}\theta^{3}\right) \int K^2(u) d u,\\
&\delta_2^2(\tau) := \widetilde{\mathbb{E}}\left(Z_{\tau}^{\prime 2} | \mathcal{F}\right)= \tilde{\sigma}_{\tau}^2\int L^2(t)dt,
\end{aligned}
\end{equation} with $\phi_{1}(s)=\int_{s}^{1} g^{\prime}(u) g^{\prime}(u-s) \mathrm{d} u$, $\phi_{2}(s)=\int_{s}^{1} g(u) g(u-s) \mathrm{d} u$, $ \Phi_{i j}=\int_{0}^{1} \phi_{i}(s) \phi_{j}(s) \mathrm{d} s$, and \(L(t) = \int_{t}^{\infty} K(u) d u \mathbf{1}_{\{t>0\}}-\int_{-\infty}^{t} K(u) d u \mathbf{1}_{\{t \leq 0\}}\). 
The following result establishes the asymptotic behavior of the estimation error for the proposed estimators. The proof is given in Appendix \ref{PrfOfMnRslt}.
\begin{theorem} \label{thm}
Let $\{{m}_{n}\}_{n\geq{}1}$ be a sequence of positive integers such that $m_{n}\to\infty$, ${m}_{n} \Delta_n \rightarrow 0$, ${m}_{n}\sqrt{\Delta_n} \rightarrow \infty $, and ${m}_{n}\Delta_n^{3/4}\rightarrow \beta$ for some $\beta \in [0,\infty]$, and let $k_{n}$, $v_n$, and $g$ be as described in (i)-(iii) above. {Then,} under  \thref{X,kernel,noise}, we have:
\begin{enumerate}
\item When $X$ is continuous, the pre-averaging estimators {\Blue (\ref{eq:non_truncated_pre}) and (\ref{PreAverEst0})} are {\Blue all} such that, as $n\to\infty$, 
\begin{equation}\label{CLTTCs}
\begin{aligned}
\rm (i) &\quad {m}_{n}^{1 / 2} \Delta_{n}^{1 / 4} \left(\hat{c}_{\tau} -c_{\tau}\right) \stackrel{st}{\longrightarrow} Z_{\tau} + \beta Z^{\prime}_{\tau}, \mbox{ if } {\Blue \beta\in[0,\infty)}, \\
\rm (ii) &\quad  \frac{1}{\sqrt{{m}_{n} \Delta_n}} \left(\hat{c}_{\tau} -c_{\tau}\right) \stackrel{st}{\longrightarrow} Z^{\prime}_{\tau}, \mbox{ if } \beta = \infty;
\end{aligned}
\end{equation}
\item When $X$ is discontinuous and $r \in (0,2]$, with \begin{equation} \label{eq:m_n_range_noise}
m_{n} \Delta_{n}^{a} \rightarrow \beta^{\prime} \in(0, \infty), \quad \text { where } a \in(\frac{1}{2},1),  
\end{equation} and $r, \varpi$ satisfying 
\begin{equation} \label{eq:r_w_condition_noise}
r < \frac{5}{2} - 2\left[(a - \frac{1}{4})\wedge (1-a+\frac{1}{4})\right],\quad 
\varpi \geq \frac{(a- \frac{1}{4})\wedge (1-(a- \frac{1}{4} ) ) - \frac{1}{4}}{2-r},
\end{equation} 
the truncated pre-averaging estimator {\Blue $\hat{c}(k_n, m_n, v_n, 1)_{\tau} $ in (\ref{PreAverEst0}) } satisfies (\ref{CLTTCs})  with $\beta=0$ when $a<3/4$, with $\beta=\beta'$ when $a=3/4$, {\Blue or} with $\beta=\infty$ when $3/4<a<1$.
\item {\Blue When $X$ is discontinuous and $r \in (0,2]$, with (\ref{eq:m_n_range_noise}) and $r, \varpi$ satisfying 
\begin{equation} \label{eq:w_condition_noise_2}
{\Blue r \le 4-\frac{2}{a \vee \left(3/2-a \right)},\quad \frac{(a- \frac{1}{4})\wedge (1-(a- \frac{1}{4} ) ) - \frac{1}{4}}{2-r}\leq \varpi \leq  \frac{(2-2a) \vee (2a-1)}{r},}
\end{equation}
the truncated pre-averaging estimator $\hat{c}(k_n, m_n, v_n, 2)_{\tau} $ in (\ref{PreAverEst0}) satisfies (\ref{CLTTCs})  with $\beta=0$ when $a<3/4$, with $\beta=\beta'\in(0,\infty)$ when $a=3/4$, and with $\beta=\infty$ when $3/4<a<1$.}
\end{enumerate}
\end{theorem}

\begin{remark}
	{\Blue The second and third points in the above theorem show that the truncated pre-averaging estimators (\ref{PreAverEst0}) can achieve the optimal convergence rate of $\Delta_n^{1/8}$, but only if the index of jump activity is restricted to be $r<3/2$ for $\hat{c}(k_n, m_n, v_n, 1)$ and $r<4/3$ for $\hat{c}(k_n, m_n, v_n, 2)$. If the index of jump activity is larger than $3/2$ and $4/3$ respectively , the estimators can only achieved suboptimal convergence rates. When comparing their theoretical properties, $\hat{c}(k_n, m_n, v_n, 1) $ can, in principle, handle jumps with higher index $r$ than  $\hat{c}(k_n, m_n, v_n, 2) $ at the optimal bandwidth. However, as we will see in Section \ref{simulation}, $\hat{c}(k_n, m_n, v_n, 2) $ appears to be more effective at eliminating jumps when the jump size is large in the presence of finite activity jumps. The two estimators have similar performance when the jump's size is relatively small.}
\end{remark}

\begin{remark}
{Let us give some intuition or heuristic explanation of the estimator (\ref{PreAverEst0}) and {\Blue its asymptotic behavior established above.}}
For the estimation of the integrated variance (IV), \([X,X]_T = \int_0^T c_t dt  \), \cite{jacod2009microstructure} proposed the {following pre-averaging estimator:}
		 \begin{equation*}
		\widehat{[X,X]_s}  := \frac{1}{\phi_{k_n}(g)} \frac{s}{s- k_n \Delta_n}\sum_{j=1}^{[s/\Delta_n]-k_n+1}\left(\left(\overline{Y}_{j}^{n}\right)^2 - \frac{1}{2}\widehat{Y}_{j}^{n}\right), \quad s \in (0, T],
		\end{equation*}
for {\Blue a continuous It\^o semimartingale} $X$. It was shown that: 
\begin{equation*}
		\frac{1}{\Delta_{n}^{1/4}}\left(		\widehat{[X,X]_T} -[X,X]_T\right) \stackrel{st}{\longrightarrow} \mathcal{U}_{T}^{\text {noise }},
		\end{equation*} 
where $\mathcal{U}_{T}^{\text {noise }}$ is a centered Gaussian process with conditional variance
\[\delta_T:=\E\left(\left(\mathcal{U}_{T}^{\text {noise }}\right)^{2} | \mathcal{F}\right)= 
{\int_{0}^{T} \zeta_t d t}:=\int_{0}^{T} 4\left(\Phi_{22} c_{t}^{2}/\theta+2 \Phi_{12}c_t\gamma_{t}\theta+\Phi_{11} \gamma_{t}^{2}\theta^{3}\right) d t. \] 
{In the no-thresholding case ($v_{n}=\infty$)}, the spot volatility estimator (\ref{PreAverEst0}) can be viewed as a localization of {the IV process} in that  \begin{equation*}
\hat{c}_{t}\approx \int K_{m_n \Delta_n}(s-t)d\widehat{\left[X,X\right]}_{s}.
\end{equation*}
More specifically, the factor $\frac{s}{s-k_n \Delta_n} $ is omitted for the spot volatility estimator. 
{If we use the representation $\mathcal{U}_{t}^{\text {noise }}=\int_{0}^{t}(\zeta_{s})^{1/2} dB^{U}_s$, where $B^U$ is a Wiener process, we can then heuristically argue that $\hat{c}_{t}-c_{t}\approx \int K_{m_n \Delta_n}(s-t)d(\widehat{\left[X,X\right]}_{s}-{\left[X,X\right]}_{s})=\Delta_{n}^{1/4}\int K_{m_n \Delta_n}(s-t)dU_{s}^{\text{noise}}=\Delta_{n}^{1/4}\int K_{m_n \Delta_n}(s-t)(\zeta_{s})^{1/2}dB^{U}_{s}$. Therefore, the variance of the estimation error at time \(t\) is expected to be close to} 
\begin{equation*}
\sqrt{\Delta_n}\int K^2_{m_n\Delta_n}(s-t) {\zeta_{s}ds} \approx \frac{1}{m_n\sqrt{\Delta_n}}4\left(\Phi_{22} c_{t}^{2}/\theta+2 \Phi_{12} c_{t}\gamma_{t}\theta+\Phi_{11} \gamma_{t}^{2}\theta^{3}\right)   \int K^2(u) du.
\end{equation*} 
which is indeed the case, {but only} when \({m}_{n}\Delta_n^{3/4}\rightarrow \beta = 0\) as formally shown in Theorem \ref{thm}. {It is important to remark that the proof of Theorem \ref{thm} does not rely on the heuristic arguments above.}
\end{remark}

\section{An application: Optimal Parameter Tuning}\label{optimal_para}
In this section, as an application of our main {\Blue Theorems \ref{thm_no_noise} and} \ref{thm}, we show how to tune {\Blue the bandwidth parameter $\beta$ and the pre-averaging parameter $\theta$, as well as the kernel function $K$ of the estimator,}  in order to minimize the asymptotic variance of the estimation error {\Blue $\hat{c}_{\tau}-c_{\tau}$}.  {Two possible approaches can be taken. Minimize the asymptotic variance of $\hat{c}_{\tau}$, say {\Blue $\tilde{\delta}^2(\tau)$}, at each time $\tau$ or minimize the \emph{integrated asymptotic variance} {\Blue $\int_{0}^{T}\tilde{\delta}^{2}(t)dt$} over the period $[0,T]$. In our simulations of Section \ref{simulation}, we implemented both methods and found out that the second method yields slightly better results. An explanation for this is given in Remark \ref{ExplLocvsGlob} below. Therefore, in this part, we focus on the second approach.}

By necessity, the optimal choices of $\theta$ and $\beta$ {under the criterion of the previous paragraph} will be expressed in terms of the integrated variance and quarticity,  $IV_{T}:=\int_0^T c_{t}dt$ and $QrT_{T}:=\int_0^Tc_{t}^{2}dt$, respectively, the Integrated Volatility of Volatility (IVV), $\int_0^T \tilde{\sigma}_{t}^2 dt$, and the integrated variance  of the noise $\epsilon_{t}$, $\int_0^T \gamma_{t}d t $.  We can estimate  $\int_0^T \tilde{\sigma}_{t}^2 dt$ and $\int_0^T \gamma_{t}d t $ separately, while for $IV_{T}$ and $QrT_{T}$, we {propose an iterative procedure} in which an initial rough estimate of $c_{t}$ on a grid of $[0,T]$ is used to determine estimates of {$IV_{T}$ and $QrT_{T}$}. These estimates are then used to find suitable estimates of the optimal values for $\theta$ and $\beta$. {Finally,} these estimated $\hat{\theta}$ and $\hat{\beta}$  are applied in the kernel pre-averaging estimator {(\ref{PreAverEst0})} 
to refine our estimates of $c_{t}$ on the grid.

\begin{remark}
{\Blue A related problem, that is not being considered here in detail, is that of tuning the truncation level $v_n$ in the truncated estimators \eqref{MDKNN_truncated} and \eqref{PreAverEst0}. Most of the literature about this problem {\Blue has been} in the context of estimating the integrated variance $IV_T=\int_0^T \sigma_s^2ds$. It has been customary in econometric studies to adopt a power threshold of the form $v_n= c \Delta_n^\gamma$. The rule of thumb is to take a value of $\gamma$ close to $.5$ and $c$ that depends on an estimate of the volatility level. For instance, \cite{JacodTodorov:2014} took $\gamma=.49$ and $c=4\sqrt{BPV}$, where $BPV:= \frac{\pi}{2} \sum_{i=2}^n |\Delta_{i-1}^nX||\Delta_i^nX|$ is the Bipower variation. More recently, this issue has {\Blue also been studied in the literature} using more objective and {\Blue statistically valid approaches}, but only in the absence of microstructure noise. In the case of FA jumps and constant volatility $\sigma$, \cite{FigueroaLopezMancini:2019} showed that the optimal threshold (in terms of minimizing the conditional mean-square error) is asymptotically equivalent to $\sqrt{2\sigma^2 \Delta_n\ln(1/\Delta_n)}$ and proposed an itervative method to estimate $\sigma$. In the presence of small jumps that behave like those of an $\alpha$-stable L\'evy process, \cite{gong2021} showed that the optimal threshold is asymptotically equivalent to $\sqrt{(2-\alpha)\sigma^2 \Delta_n\ln(1/\Delta_n)}$. Again, these results are in the absence of microstructure noise and for the problem of estimating the integrated variance. However, given the local nature of spot volatility estimation, one can imagine that similar results may hold for the estimators \eqref{MDKNN_truncated} and \eqref{PreAverEst0}. We leave this problem for future research.}
\end{remark}

\subsection{Optimal selection of \texorpdfstring{$\theta$}{Lg} }
Recall we set $k_n = \frac{1}{\theta \sqrt{\Delta_n}} +\mathrm{o}\left(\frac{1}{\Delta_{n}^{1/4}}\right)$ and, thus, the parameter $\theta $ determines the length of the pre-averaging window $k_n$. The following corollary, which follows easily from Theorem \ref{thm}, {gives us a method to tune $\theta$ up.}
\begin{corollary}
{The optimal value $\theta^{*}$ of $\theta$, which is set to minimize the integrated asymptotic variance of the pre-averaging kernel estimator (\ref{PreAverEst0}), is such that}
\begin{equation} \label{eq:theta_optimal}
{(\theta^{\star})^{2} = \frac{\sqrt{\Phi_{12}^2 \left(\int_0^T c_t \gamma_t d t \right)^2 + 3\Phi_{11}\Phi_{22}\int_0^T \gamma_t^2 d t \int_0^T c_t^2 d t}  -  \Phi_{12}\int_0^T  c_t \gamma_t d t}{3\Phi_{11}\int_0^T\gamma_{t}^2 d t}.}
\end{equation}
\end{corollary}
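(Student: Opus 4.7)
The plan is to use Theorem \ref{thm} to isolate the dependence of the limiting conditional variance on $\theta$. In the regime $\beta\in[0,\infty)$, the limit $Z_{\tau}+\beta Z'_{\tau}$ has conditional variance $\delta_{1}^{2}(\tau)+\beta^{2}\delta_{2}^{2}(\tau)$, and, inspecting (\ref{eq:delta12}), only $\delta_{1}^{2}(\tau)$ depends on $\theta$, while $\delta_{2}^{2}(\tau)=\tilde{\sigma}_{\tau}^{2}\int L^{2}(t)\,dt$ depends solely on $\tilde{\sigma}_{\tau}$ and the kernel $K$. Consequently, minimizing $\int_{0}^{T}\bar{\delta}^{2}(t)\,dt$ in $\theta$ is equivalent to minimizing $\int_{0}^{T}\delta_{1}^{2}(t)\,dt$ and, after factoring out the common multiplicative constant $4\int K^{2}(u)\,du$, the problem reduces to the one-variable optimization
\[
f(\theta)\;=\;\frac{\Phi_{22}}{\theta}\int_{0}^{T}c_{t}^{2}\,dt\;+\;2\Phi_{12}\,\theta\int_{0}^{T}c_{t}\gamma_{t}\,dt\;+\;\Phi_{11}\,\theta^{3}\int_{0}^{T}\gamma_{t}^{2}\,dt,\qquad \theta>0.
\]

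Next, I would write down the first-order condition. Differentiating and multiplying $f'(\theta)=0$ by $\theta^{2}$ converts it into a quadratic in $u:=\theta^{2}$,
\[
3\Phi_{11}\!\left(\int_{0}^{T}\gamma_{t}^{2}\,dt\right)u^{2}\;+\;2\Phi_{12}\!\left(\int_{0}^{T}c_{t}\gamma_{t}\,dt\right)u\;-\;\Phi_{22}\int_{0}^{T}c_{t}^{2}\,dt\;=\;0.
\]
Since $c_{t},\gamma_{t}\ge 0$, the leading coefficient is strictly positive and the constant term is strictly negative, so the discriminant is positive and exactly one root of this quadratic is positive, namely the admissible value $u=\theta^{2}$. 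Solving with the quadratic formula and keeping the positive root produces the expression on the right-hand side of (\ref{eq:theta_optimal}).

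Finally, I would verify global optimality on $(0,\infty)$: $f(\theta)\to+\infty$ both as $\theta\downarrow 0$ (through the $\Phi_{22}/\theta$ term) and as $\theta\uparrow\infty$ (through the $\Phi_{11}\theta^{3}$ term), so $f$ attains its infimum at an interior critical point; by the uniqueness of the admissible root established above, this critical point coincides with $\theta^{\star}$. No genuine obstacle is expected: once Theorem \ref{thm} supplies the explicit form of $\delta_{1}^{2}(\tau)$, the argument is a one-variable calculus exercise, and the same scheme (with additional variables) will be reused to tune the bandwidth parameter $\beta$ and the kernel $K$ in the subsequent subsections.
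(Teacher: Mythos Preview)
Your proposal is correct and is precisely the natural argument the paper has in mind when it says the corollary ``follows easily from Theorem~\ref{thm}'': isolate the $\theta$-dependence in $\int_{0}^{T}\delta_{1}^{2}(t)\,dt$, differentiate, and solve the resulting quadratic in $\theta^{2}$, retaining the unique positive root. The paper gives no explicit proof, so your write-up simply supplies the omitted calculus details (and the boundary check for global optimality), in full agreement with the paper's approach.
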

\begin{remark} {Note that the local version of (\ref{eq:theta_optimal}) (i.e., the value of $\theta$ that minimizes the spot asymptotic variance $\delta^{2}_{1}(t)$) is such that:
\begin{equation} \label{eq:theta_optimalLocal}
(\theta^{\star,local}_{t})^{2} = c_{t}\frac{\sqrt{\Phi_{12}^2 + 3\Phi_{11}\Phi_{22}}  -  \Phi_{12}}{3\Phi_{11}{\Blue \gamma_{t}}}.
\end{equation}
In the context of integrated volatility estimation, \cite{jacod2015microstructure} obtained the same formula (see Eq.~(3.8) therein). It was also proposed a two-step procedure to implement it. However, in our simulation, we found out that the performance of the estimator  is less sensitive to the choice of \(\theta\) than to that of the bandwidth.}
\end{remark}

\subsection{Optimal bandwidth selection} \label{band_section}
From Theorem \ref{thm}, we can deduce that when ${m}_{n} $ (the {bandwidth in} $\Delta_{n}$ units) is of the form ${m}_{n} = \beta \Delta_n^{-3/4} $ for some constant $\beta \in (0,\infty)$, the optimal convergence rate of $\Delta_{n}^{1/8}$ is attained and 
we further have: 
\begin{equation*}
\Delta_n^{-1/8} \left(\hat{c}\left(k_{n}, {m}_{n},v_n\right)_{\tau} -c_{\tau}\right)\stackrel{st}{\longrightarrow} 
\beta^{-1/2} \left( Z_{\tau} + \beta Z^{\prime}_{\tau} \right).
\end{equation*} Therefore, the limiting distribution {\Blue above} has conditional variance ${\bar{\delta}^{2}(\tau)}:=\frac{1}{\beta}\delta_1^2(\tau)+\beta \delta_2^2(\tau)$, where \(\delta_1^2(\tau)\) and \( \delta_2^2(\tau) \) are given as in (\ref{eq:delta12}).
{The following result gives the optimal value of $\beta$ that minimizes $\int_{0}^{T}\bar{\delta}^{2}(\tau)d\tau$.}
\begin{corollary} \thlabel{bandwidth}
Let 
\begin{equation*}
	\Theta(\theta):=\Theta(\theta;g):=\frac{\Phi_{22}}{\theta}\int_0^Tc_{t}^{2}dt+2 \Phi_{12} \theta \int_0^T\gamma_{t}c_{t}dt+\Phi_{11}\theta^{3} \int_0^T\gamma_{t}^{2}dt.
\end{equation*}
With the bandwidth $b_n = m_n \Delta_n=\beta \Delta_{n}^{1/4}$, the optimal value of $b_n$, which is set to minimize {$\int_{0}^{T}\bar{\delta}^{2}_{\tau}d\tau$},
is given by 
\begin{equation} \label{eq:optimal_h}
{b}_{n}^{\star} =  \sqrt{\frac{\int_0^T\delta_1^2(t) dt}{\int_0^T\delta_2^2(t) dt}}\Delta_n^{1/4}= \Delta_n^{1/4}\sqrt{\frac{4{\Theta(\theta)} \int K^2(u) d u}{\int_0^T\tilde{\sigma}_{t}^2 dt\int L^2(v)dv}}.
\end{equation}
With this optimal bandwidth choice, the integrated variance {$\int_{0}^{T}\bar{\delta}^{2}(\tau)d\tau$} of the limiting distribution for the {\Blue scaled estimation error $\Delta_n^{-1/8} \left(\hat{c}\left(k_{n}, {m}_{n},v_n\right)_{\tau} -c_{\tau}\right)$} is given by 
\begin{equation}\label{AsymptVarbb}
2\sqrt{\int_0^T \delta_1^2(t)dt \int_0^T \delta_2^2(t) dt} = 4  \sqrt{{\Theta(\theta)}\int_0^T\tilde{\sigma}_{t}^2 dt
 \int K^2(u) d u \int L^2(v)dv}.
\end{equation}
\end{corollary}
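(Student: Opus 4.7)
The plan is to reduce the statement to an elementary one-variable optimization problem, using Theorem \ref{thm} to identify the integrated asymptotic variance as an explicit function of $\beta$, and then to minimize it.

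First, I would note that under the scaling $m_{n} = \beta \Delta_{n}^{-3/4}$ one has $m_{n}\sqrt{\Delta_{n}}\to\infty$ (since $\Delta_{n}^{-1/4}\to\infty$) and $m_{n}\Delta_{n}^{3/4}\to\beta\in(0,\infty)$, so Theorem \ref{thm} case (i) applies. In particular, $m_{n}^{1/2}\Delta_{n}^{1/4} = (m_{n}\Delta_{n}^{3/4})^{1/2}\Delta_{n}^{-1/8}\sim \beta^{1/2}\Delta_{n}^{-1/8}$, so the stable convergence in (\ref{CLTTCs})(i) can be rewritten as
\begin{equation*}
\Delta_{n}^{-1/8}\bigl(\hat{c}(k_{n},m_{n},v_{n})_{\tau}-c_{\tau}\bigr) \stackrel{st}{\longrightarrow} \beta^{-1/2}\bigl(Z_{\tau}+\beta Z^{\prime}_{\tau}\bigr).
\end{equation*}
Since $Z_{\tau}$ and $Z^{\prime}_{\tau}$ are, conditionally on $\mathcal{F}$, independent centered Gaussian variables with variances $\delta_{1}^{2}(\tau)$ and $\delta_{2}^{2}(\tau)$ respectively (by construction in (\ref{eq:delta12})), the conditional variance of the limit equals $\bar{\delta}^{2}(\tau) = \beta^{-1}\delta_{1}^{2}(\tau) + \beta\,\delta_{2}^{2}(\tau)$.

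Next, set $A := \int_{0}^{T}\delta_{1}^{2}(\tau)\,d\tau$ and $B := \int_{0}^{T}\delta_{2}^{2}(\tau)\,d\tau$. Then the integrated asymptotic variance is $F(\beta) := \int_{0}^{T}\bar{\delta}^{2}(\tau)\,d\tau = A/\beta + \beta B$. This function is strictly convex on $(0,\infty)$ with $F(\beta)\to\infty$ at both endpoints, so the unique minimizer is obtained by solving $F^{\prime}(\beta)= -A/\beta^{2}+B = 0$, yielding $\beta^{\star} = \sqrt{A/B}$ and minimum value $F(\beta^{\star})= 2\sqrt{AB}$. Converting to bandwidth units via $b_{n}=\beta\Delta_{n}^{1/4}$ gives $b_{n}^{\star} = \sqrt{A/B}\,\Delta_{n}^{1/4}$.

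Finally, I would substitute the explicit expressions from (\ref{eq:delta12}). Factoring out the constants $\int K^{2}(u)du$ and $\int L^{2}(v)dv$ from $\delta_{1}^{2}$ and $\delta_{2}^{2}$ and using the definition of $\Theta(\theta)$ yields $A = 4\Theta(\theta)\int K^{2}(u)du$ and $B = \bigl(\int_{0}^{T}\tilde{\sigma}_{t}^{2}dt\bigr)\int L^{2}(v)dv$. Plugging these into the formulas for $b_{n}^{\star}$ and $F(\beta^{\star})$ produces (\ref{eq:optimal_h}) and (\ref{AsymptVarbb}), respectively. There is no real obstacle beyond bookkeeping here: the result is essentially the AM--GM inequality applied to the two positive terms of $F(\beta)$, and the only non-routine step is recalling (from the construction of $Z_{\tau}$ and $Z^{\prime}_{\tau}$ on the extension) that their conditional independence allows the variance of the sum to split cleanly.
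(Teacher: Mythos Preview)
Your proposal is correct and follows essentially the same approach as the paper: the text preceding the corollary derives the form $\bar{\delta}^{2}(\tau)=\beta^{-1}\delta_{1}^{2}(\tau)+\beta\delta_{2}^{2}(\tau)$ from Theorem~\ref{thm} exactly as you do, and the corollary then records the outcome of the elementary minimization in $\beta$ (which the paper leaves implicit). Your write-up simply makes explicit the AM--GM/first-derivative step and the substitution of (\ref{eq:delta12}), which is precisely what the paper intends.
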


Note that \(b_n^{\star}\) contains unknown theoretical quantities that need to be estimated in order to devise a plug in type estimator.  Under the assumption of $\gamma_t \equiv \gamma $, the variance of the noise, \(\gamma \), can be estimated using the estimator in \cite{zhang2005tale}: 
\begin{equation*}
\hat{\gamma} = \frac{1}{2n}\sum_{i=1}^n \left(Y^n_{i} - Y^n_{i-1}\right)^2.
\end{equation*}
For the estimation of  the IVV,   \(\int_0^T \tilde{\sigma}_{t}^2 dt\),  
we start by obtaining a preliminary estimate of the spot variance $c$ on the grid $\tau\in\{t_{i}\}_{i=0,\dots,n}$, via the estimator (\ref{PreAverEst0}), staring with some sensible initial estimates of the tuning parameter values.  For example, we can set \( b_n =m_{n}\Delta_{n}= \Delta_n^{1/4}\). Let us denote these initial {estimates as $\hat{c}_{t_{i},0}$}. We then {compute the sparse realized quadratic variation of the $\hat{c}_{t_{i}}$'s to estimate  the Integrated Volatility of Volatility \({\rm IVV}=\int_0^T \tilde{\sigma}_{t}^2 dt\):}
\[
	{\widehat{IVV}_{T,0}:=\sum_{i=0}^{[n/p]-1}(\hat{c}_{t_{(i+1)p},0}-\hat{c}_{t_{i p},0})^{2},}
\]
for some positive integer $p\ll n$. We also implemented a pre-averaging integrated variance estimator for the IVV based on the spot variance estimates. However, the choice of tuning parameters here could be tricky and the performance is similar to the {simpler} sparse Realized Variance estimator above. As for \(\int_0^T c^2_{{t}}dt\), we can simply compute the sum of squares of the preliminary estimates {$\hat{c}^{2}_{t_{i},0}$} and multiply by $\Delta_n$\footnote{{In the simulations, we also tried the preaveraged quarticity estimator of \cite{jacod2009microstructure} (Eq.~(3.14) therein) but the results were suboptimal.}}. 
Now with these estimates, we can calculate an estimate of the optimal bandwidth \(b_n^{\star} \) using the result of Corollary \ref{bandwidth}. Such an approximate optimal bandwidth can then be used to refine our estimates of the spot variance grid. Continuing this procedure iteratively, we hope to obtain good estimates of the optimal bandwidth.

Note that (\ref{eq:optimal_h}) sets the same bandwidth for the entire path of $X$.
We can also consider a local or non-homogeneous bandwidth: for $\tau \in [0,T]$,  the local bandwidth {is set} to minimize the {asymptotic} variance of  the estimation error at time $\tau$. {Concretely, by setting ${m}_{n} = \beta \Delta_n^{-3/4} $ and minimizing the asymptotic spot variance ${\bar{\delta}^{2}(\tau)}=\beta^{-1}\delta_1^2(\tau)+\beta \delta_2^2(\tau)$},  the optimal bandwidth is given by
\begin{equation} \label{OptSpot00}
{b_n^{\star, local}(\tau)}=  \frac{\delta_1(\tau)}{\delta_2(\tau)}\Delta_n^{1/4}
={\Delta_n^{1/4}}\sqrt{\frac{4{\Theta_{\tau}(\theta)} \int K^2(u) d u}{\tilde{\sigma}_{\tau}^2\int L^{2}(u)du}},
\end{equation} 
with $\delta_{1}(\tau)$ and $\delta_{2}(\tau)$ defined as in Theorem \ref{thm} and $\Theta_{\tau}(\theta)$ defined as:
	\begin{equation*}
	\Theta_{\tau}(\theta):=\frac{\Phi_{22}}{\theta}c_{\tau}^{2}+2 \Phi_{12}  \gamma_{\tau}\theta c_{\tau}+\Phi_{11} \gamma_{\tau}^{2}\theta^{3}.
\end{equation*}
With this optimal bandwidth, the variance of the limiting distribution for the estimation error is given by 
\begin{equation}\label{AsymptVarbbb}
2\delta_1(\tau) \delta_2(\tau) = 4  \sqrt{{\Theta_{\tau}(\theta)}\tilde{\sigma}_{\tau}^2
 \int K^2(u) d u \int L^{2}(u) du}.
\end{equation}
Since the local bandwidth has the flexibility to adapt to the volatility level, we may expect {that a data-driven estimate of the bandwidth $b_n^{\star, local}(\tau)$ in (\ref{OptSpot00}) should outperform a data-driven estimate of the homogeneous bandwidth $b_n^{\star}$ in (\ref{eq:optimal_h}). However, in our Monte Carlo simulations of Section \ref{simulation}, we found out this is not always the case. A possible explanation for this is given below (see also Remark \ref{ExplLocvsGlob} for further analysis).}

\begin{remark}\label{PossExpl00}
	 We can see the constant bandwidth (\ref{eq:optimal_h}) as an approximation of the optimal local bandwidth (\ref{OptSpot00}), where the {mean} values $\int_{0}^{T}\Theta_{t}(\theta)dt/T$ and $\int_{0}^{T}\tilde{\sigma}_{t}^2dt/T$ are used as proxies of the spot values $\Theta_{\tau}(\theta)$ and $\tilde{\sigma}_{\tau}^{2}$, respectively. These global proxies have the advantages of being easier and more accurate to estimate. {This may be one of the reasons why a data-driven estimate of the constant bandwidth $b_n^{\star}$ may be able to outperform a data-driven estimate of the local version $b_n^{\star, local}(\tau)$ in some situations.}
\end{remark}

\subsection{Optimal kernel function}\label{OptKernelSection}
With the optimal bandwidths of Section \ref{band_section}, we can now obtain a formula for the asymptotic variance, which enjoys an explicit dependence on the kernel function $K$. It is then natural to attempt to find the kernel that minimizes such a variance. As observed from (\ref{AsymptVarbb})  or (\ref{AsymptVarbbb}), we only need to minimize 
\begin{equation*}
I(K)  =  \int K^2(u) d u \int L^2(u) du=
\int K^2(u) d u \iint_{x y\geq 0} K(x)K(y)(|x|\wedge{}|y|)d x d y,
\end{equation*} 
over all kernels $K$ such that $\int K(u)du=1$, where for the second equality above we used that \(L(t) = \int_{t}^{\infty} K(u) d u \mathbf{1}_{\{t>0\}}-\int_{-\infty}^{t} K(u) d u \mathbf{1}_{\{t \leq 0\}}\).
It has been proved in \cite{FigLi}, Section 4.1, that,  among all the kernel functions satisfying \thref{kernel}, the exponential kernel function $K^{\exp}(x)=\frac{1}{2} \exp (-|x|)$ is the one that minimizes the functional $I(K)$. {\Blue \cite{FigLi} (see Remark 4.2 therein) showed that, compare to the two-sided uniform (resp., Epanechnikov) kernels, the integrated asymptotic variance can be reduced by about 14\% (resp., 6\%) when using exponential kernel.}
\cite{FigLi} also showed that exponential kernels have {\Blue a computational advantage} since they {\Blue enable us} to reduce the time complexity for estimating the volatility on all the grid points {\Blue $t_{1}<\dots<t_n$}, from $O(n^{2})$ to $O(n)$. This property is particularly useful when working with high-frequency observations, {\Blue where $n$ is quite large.}


\subsection{Tuning parameters under the absence of microstructure noise}\label{OptNoMicroSection}
{By following the same arguments as above, we can determine the optimal bandwidth parameter and kernel function for the estimators (\ref{MDKNN})-(\ref{MDKNN_truncated}) under the no-microstructure-noise model (\ref{eq:X})-(\ref{eq:sigma}). Specifically, we first take a bandwidth of the form $b_{n}=\beta\Delta_{n}^{1/2}$ ($\beta\in(0,\infty)$), which, from Theorem \ref{thm_no_noise}, leads {\Blue to} the best possible rate of convergence $\Delta_{n}^{-1/4}$ of (\ref{MDKNN})-(\ref{MDKNN_truncated}). In that case, the asymptotic variance will take the form $\bar{\delta}_{\tau}^{2}=\beta^{-1}\delta_1^2(\tau)+\beta \delta_2^2(\tau)$, where 
\[
	\delta_{1}^{2}(\tau)=2 c_{\tau}^{2}\int K^2(u) du,\qquad 
	\delta_{2}^{2}(\tau)=2\tilde{\sigma}_{\tau}^{2} \int L^2(t)dt.
\]
Then, the optimal value of $\beta$ that minimizes the asymptotic variance is $\beta^{*}=\delta_{1}(\tau)/\delta_{2}(\tau)$, leading to the optimal bandwidth 
\begin{equation} \label{OptSpot00NMN}
\tilde{b}_n^{\star, local} =  \frac{\delta_1(\tau)}{\delta_2(\tau)}\Delta_n^{1/2}
=\Delta_n^{1/2}\sqrt{\frac{2 c_{\tau}^{2}\int K^2(u) d u}{\tilde{\sigma}_{\tau}^2\int L^{2}(u)du}}.
\end{equation} 
Plugging $\beta^{*}$ into $\bar{\delta}_{\tau}^{2}$, leads to the optimal asymptotic variance of 
\[
	2\delta_1(\tau)\delta_2(\tau)= 4 \sqrt{c_{\tau}^{2}\tilde{\sigma}_{\tau}^2
 \int K^2(u) d u \int L^{2}(u) du},
\]
which, as before, is minimized by the two-side exponential kernel $K(x)=2^{-1}e^{-|x|}$.}

\section{Simulation Study} \label{simulation}
In this section, we study the performance of the kernel pre-averaging estimators {\Blue (\ref{eq:non_truncated_pre}) and (\ref{PreAverEst0}),} together with the implementation procedure described in Subsection \ref{band_section}, and compare the results with  the Two Scale Realized Spot Variance (TSRSV) estimator proposed in \cite{zu2014estimating}.

\subsection{Simulation design and performance metrics} \label{heston_section}

 We implemented two {\Blue different data generating} models:  {\Blue a Heston model and a one-factor stochastic volatility (SV1F) model. More specifically, in Subsections \ref{ElmiJumpSec}-\ref{OptimalBdSec}, we consider the Heston model:}
\begin{equation} \label{HestonMld00}
\begin{split}
&{\Blue Y_{t_i}=X_{t_i}+\varepsilon_{t_i}},\\
    &\mathrm{d} X_{t}=\left(\mu-c_{t} / 2\right) \mathrm{d} t+{c_{t}^{1/2}} \mathrm{d}  W_{t} +  J_{t}^{X}\mathrm{d} N_{t}^{X},\\
     &\mathrm{d} c_{t}=\kappa\left(\alpha-c_{t}\right) \mathrm{d} t+\gamma c_{t}^{1 / 2} \mathrm{d}  B_{t} +\sqrt{c_{t-}} J_{t}^{c} \mathrm{d} N_{t}^{c},
     \end{split}
\end{equation}
where we assume $B_t = \rho W_t + \sqrt{1 - \rho^2} {\tilde{W}_t}$, with $\tilde{W}$ being a Brownian motion independent with $W$. 
We adopt the same parameter values as in  \cite{zhang2005tale}, but properly normalized so that the time unit is one day:
\begin{equation} \label{ParValHston00}
{\mu=0.05 / 252,\quad \kappa=5 / 252, \quad \alpha=0.04 / 252,\quad \gamma=0.5 / 252, \quad \rho=-0.5.}
\end{equation}
We set the noise as $\epsilon_{i}^{n}:=\epsilon_{t_{i}} \stackrel{i.i.d.}{\sim} \mathcal{N}\left(0,0.0005^{2}\right)$,  and the initial values to $X_0 = 1$ and $c_0 = 0.04/252$.  The jump parameters are taken from \cite{chen2018inference} and set to be \(J^X_t\, {\Blue \stackrel{{\rm iid}}{\sim}}\, N(-0.01,0.02^2)\), \(N_{t+\Delta}^{X}-N_{t}^{X} \sim \text {Poisson }(36 \Delta/252)\), \(\log \left(J_{t}^{c}\right) \,{\Blue \stackrel{{\rm iid}}{\sim}}\, N(-5, 0.8)\), and \(N_{t+\Delta}^{c}-N_{t}^{c} \sim \frac{1}{\sqrt{252}}\operatorname{Poisson}(12 \Delta/252)\), {\Blue with all these random processes being mutually independent.} 

{\Blue We also consider the One Factor Stochastic Volatility (SV1F) model (cf. \cite{zu2014estimating}, \cite{barndorff2008designing}, \cite{Yu2}):
\begin{equation} \label{eq:sv1f_model}
\begin{aligned}
& Y^n_i = X^n_i + \epsilon^n_i\\
&\mathrm{d} X_{t}=\mu \mathrm{d} t+\exp \left(\beta_{0}+\beta_{1} {\Green \gamma_{t}}\right) \mathrm{~d} W_{t} + \mathrm{~d} J_t, \\
&\mathrm{d} {\Green \gamma_{t}}=\alpha {\Green \gamma_{t}} \mathrm{~d} t+\mathrm{d} B_{t}.
\end{aligned}
\end{equation} 
The model above is adopted in Subsections \ref{ComSubH} and \ref{CmpTrNoTR} with different parameter values that will be specified therein.}

Throughout, we use the usual triangular weight function \(g(x) = 2 x \wedge(1-x) \). We simulate data for one day ($T=1$), and assume the data is observed once every second, with 6.5 trading hours per day. The number of observation is  then \(n = 23 400\).
For the $j$th simulated path \(\{{X^{(j)}_{t_i}}: 0\leq i\leq n, t_i = i T/n\}\), we estimate the corresponding skeleton of the spot variance process, \(\{c_{t_i,j}\}_{i=1,\dots,n}\), for a given pre-averaging parameter \(\theta\) and a bandwidth parameter \({\tilde{\beta}}\)  (the bandwidth is {\Blue then given by} \(\tilde{\beta} \Delta_n^{1/4}\)). The estimated path is denoted as \(\{\hat c_{t_i,j}\}_{i=1,\dots,n}\). Next, we calculate the average of the squared errors (ASE),
 \[
 	 ASE_{j} =  \frac{1}{n-2l+1}\sum_{i=l}^{n-l}\left(\hat{c}_{t_i,j} - c_{t_i,j} \right)^2.
\]
Here, \(l=[0.1n] \)  is used to further alleviate boundary effects. Then, we take the square root of the average of the \(ASEs\) over all the simulated  paths:
 \[
 	\widehat{RMSE} = \sqrt{\frac{1}{m}\sum_{j=1}^{m} ASE_{j}},
\] 
where $m$ is the number of simulations. This is an estimate of
 \[
 	RMSE = \sqrt{\E\left[\frac{1}{n-2l+1}\sum_{i=l}^{n-l}\left(\hat{c}_{t_i} - c_{t_i} \right)^2 \right]}.
\] 

\subsection{Elimination of jumps and truncation}\label{ElmiJumpSec}
{In this subsection, we will show that} the truncation in the estimator (\ref{PreAverEst0}) {does a good job in eliminating} the jumps of the process (\ref{HestonMld00}). {To this end, we compare the performance of the truncated estimator \( \hat{c}\left(k_{n}, {m}_{n},v_n,1\right)\) in (\ref{PreAverEst0}), with that of the non-truncated estimator 
\begin{equation}\label{PreAverEst0_cts}
\hat{c}\left(k_{n}, {m}_{n}, Y^*\right)_{\tau}:=\frac{1}{\phi_{k_{n}}\left(g\right)} \sum_{j=1}^{n-k_n+1} K_{{m}_{n} \Delta_n}\left(t_{j-1} - \tau\right)\left(\left(\overline{Y}_{j}^{*n}\right)^2   - \frac{1}{2}\widehat{Y}_{j}^{*n}\right),
\end{equation} 
applied to the continuous} Heston model:
\begin{equation} \label{HestonMld00_cts}
\begin{split}
&Y_{i}^{*n}=X_{i}^{*n}+\varepsilon_{i}^{n},\\
    &\mathrm{d} X^*_{t}=\left(\mu-c^*_{t} / 2\right) \mathrm{d} t+\sqrt{c^*_{t}} \mathrm{d}  W_{t} ,\\
     &\mathrm{d} c^*_{t}=\kappa\left(\alpha-c^*_{t}\right) \mathrm{d} t+\gamma\sqrt{c_{t}^{*}} \mathrm{d}  B_{t} .
     \end{split}
\end{equation}
We set \(\beta = 1\), \(\theta = 5\), and $v_n = 1.8\times \sqrt{BPV}(k_n \Delta_n)^{0.47}$, where $BPV = \frac{\pi}{2} \sum_{i=2}^{n}\left|\Delta_{i-1}^{n} X \| \Delta_{i}^{n} X\right|$\footnote{A similar threshold is applied in \cite{JacodTodorov:2014}.}. {\Blue The results are based on 2000 simulated paths of both \(Y\) and \(Y^{*}\)}. 
As proposed in \cite{kristensen2010nonparametric}, in order to alleviate the {edge effects, we replace $K_{{m}_{n} \Delta_n}\left(t_{i-1} - \tau\right)$ in (\ref{PreAverEst0}) and (\ref{PreAverEst0_cts}) with} 
\begin{equation*}
K^{adj}_{m_n \Delta_n}\left( t_{i-1} - \tau\right) = \frac{K_{{m}_{n} \Delta_n}\left(t_{i-1} - \tau\right)}{\Delta_n\sum_{j=1}^{n-k_n+1} K_{{m}_{n} \Delta_n}\left(t_{j-1} - \tau\right)}.
\end{equation*} 
The \(\widehat{RMSE}\) of the three estimators is reported in Table \ref{table:jump}.
\begin{table}[h!]
\begin{center} 
\begin{tabular}{ | c | c | c | c | c |}
\hline
  &{\begin{tabular}{c}
           \( \hat{c}\left(k_{n}, {m}_{n}, Y^*\right) \) 
     \end{tabular}}&{\begin{tabular}{c}
     \( \hat{c}\left(k_{n}, {m}_{n}, Y\right)\) 
     \end{tabular}} & {\begin{tabular}{c}
$\hat{c}\left(k_{n}, {m}_{n}, v_n\right)$
     \end{tabular}}\\
\hline
$\widehat{RMSE}\times 10^{5}$ &5.483386    &16.83420
 &5.419338    \\
\hline
\end{tabular}
\caption{Comparison between truncated and non-truncated estimators.}
\label{table:jump}
\end{center}
\end{table}

{\Blue These results suggests that} the {truncation procedure} can effectively eliminate the jumps under this Heston model, since the estimated RMSE of the truncated estimator for the model (\ref{HestonMld00}) is even less than that of the non-truncated estimator based on the continuous model (\ref{HestonMld00_cts}). 

\subsection{Validity of the asymptotic theory and necessity of de-biasing}

We first show that the asymptotic behavior of the estimation error is consistent with our theoretical result. By \thref{bandwidth}, the optimal rate of convergence of the estimation error is attained when the bandwidth takes the form  \(m_n^{\star} \Delta_n = \beta \Delta_n^{1/4}\), for some \(\beta \in (0,\infty) \), and, thus, we only analyze the {\Blue case 1(i) ($\beta\in(0,\infty)$)} of Theorem \ref{thm}. We aim to estimate the {\Blue spot variance \(c_{0.5}\) in the Heston model \eqref{HestonMld00} without jumps. Accordingly and for simplicity, we use the untruncated pre-averaging kernel estimator \eqref{eq:non_truncated_pre}. We take \(\beta = 1\) and exponential kernel}. The histogram of the estimation errors, $\hat{c}_{0.5} - c_{0.5}$, based on 25,000 simulated paths, is shown in Figure \ref{fig:hist_exp}. We also plot the theoretical density of the estimation error as prescribed by Theorem \ref{thm} but with the true parameter values for $\gamma$ and $\theta $, and replacing $c_{0.5}$ with the {\Blue average value} of $c_{0.5}$ over all 25,000 path. As it can be seen, the theoretical density is consistent with the empirical results.
\begin{figure}[h]
    \centering
    \includegraphics[scale = 0.5]{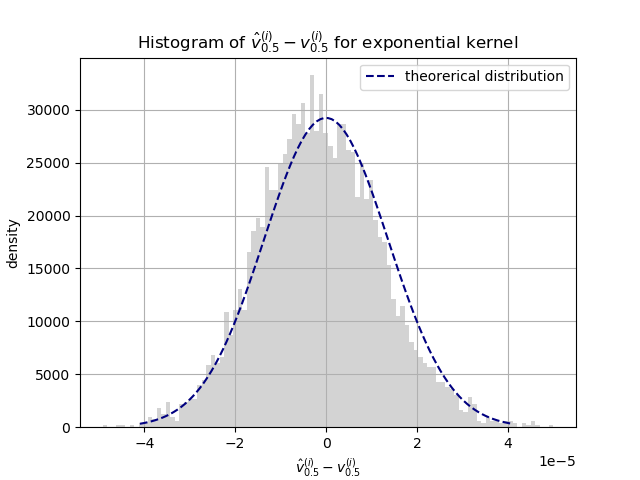}
    \caption{Histogram of $\hat{c}_t - c_t$ at $t=0.5$ and the density of the theoretical limiting distribution.}
    \label{fig:hist_exp}
\end{figure} 

To investigate the need of the bias correction term \(\widehat{Y}_{j}^{n}\)  in $\hat{c}\left(k_{n}, {m}_{n},v_n,1\right)_{\tau}$, let us consider a new estimator without the bias correction term, \(\tilde{c}_{\tau} = \sum_{j=1}^{n-k_n+1} K_{m_n \Delta_n}\left(t_{j-1} - \tau\right)\left(\overline{Y}_{j}^{n}\right)^2 \mathbbm{1}_{\{|\bar{Y}^n_j | \leq v_n\}}  \). We show the histogram of the estimation errors  $\tilde{c}_{0.5} - c_{0.5} $ for 25,000 simulated paths, and, for comparisons, also plot the same theoretical asymptotic density function of Figure \ref{fig:hist_exp}.   As shown in {left panel of} Figure \ref{fig:debias}, the estimator {\Blue $\tilde{c}_{0.5}$} significantly overestimates the spot variance, which shows the necessity of  the bias correction term  \(\widehat{Y}_{j}^{n}\)   in (\ref{PreAverEst0}).
\begin{figure}
    \centering
    \includegraphics[scale = 0.45]{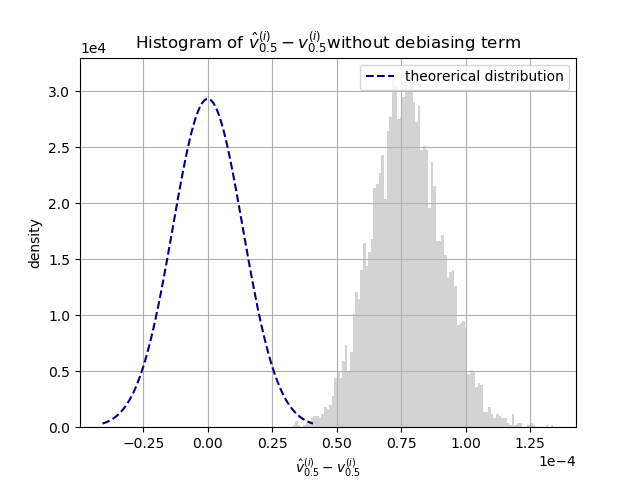}
    \includegraphics[scale = 0.45]{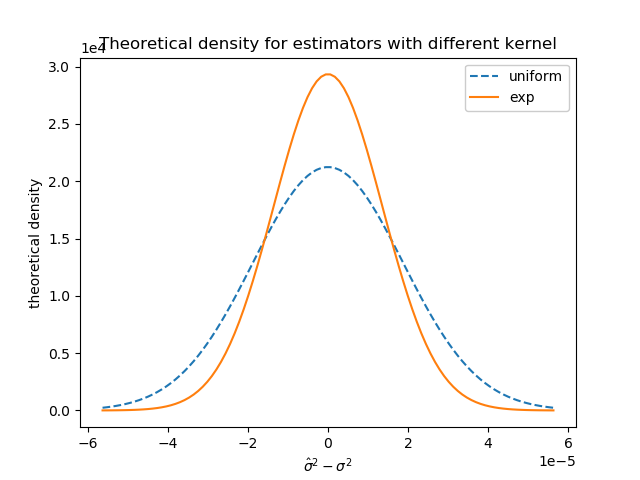}
    \caption{Left Panel: The effect of bias correction term. Right Panel: Comparison of the asymptotic distribution between uniform and exponential kernel.}
    \label{fig:debias}
\end{figure}

\subsection{Performance for different kernels}

Before analyzing the empirical performance of the estimators for different kernels, we compare the theoretical asymptotic densities of the estimation error for the exponential and uniform kernels. This is shown in {right panel of Figure \ref{fig:debias}}. We can see therein that, as predicted in Subsection \ref{OptKernelSection}, the exponential kernel estimator {\Blue has smaller} asymptotic variance.

We now proceed to compare the finite sample performance of {\Blue the untruncated pre-averaging kernel estimator \eqref{eq:non_truncated_pre}} 
for different kernels in the Heston model \eqref{HestonMld00} {\Blue without jumps}. We assume both a non-leverage setting (\(\rho =0\)) and a negative correlation setting (\(\rho = -0.5\)). 
{\Blue We} fix \(\theta  = 5\) and apply the iterative homogeneous bandwidth selection method  introduced in Subsection \ref{band_section} {\Blue with different} kernels. We report the estimated $RMSE$  with the initial bandwidth $\beta =1 $ and the result of iterative bandwidth selection method after one iteration in Table \ref{table:diff_kernel}  for the following {\Blue four kernels}:
\begin{equation*}
\begin{split}
&K_{{\rm exp}}(x) = \frac{1}{2} e^{-|x|} , \quad
K_{unif}(x) = \frac{1}{2} \mathbbm{1}_{\{|x|<1\}} \\
&K_{1}(x) = |1-x| \mathbbm{1}_{\{|x|<1\}} , \quad
K_{2}(x) = \frac{3}{4} (1-x^2) \mathbbm{1}_{\{|x|<1\}} .
\end{split}
\end{equation*}
This shows that, indeed, the exponential kernel provides the best performance.
 
\begin{table}[h!]
\begin{center} 
\begin{tabular}{ | l | l | c |}
\hline
 \multicolumn{3}{|c|}{{$\widehat{RMSE}\times 10^{5}$ ($\rho  = 0$)}} \\
\hline
Kernel & $\beta = 1$& \begin{tabular}{c}Optimal\\
 Bandwidth Selection\end{tabular}  \\
\hline
$K_{{\rm exp}}$ &1.400  & 1.068   \\
\hline
$K_{unif}$ &1.890  & 1.608 \\
\hline
$K_1$ &2.173  & 1.648 \\
\hline
$K_2$ &2.064  & 1.476 \\
\hline
\end{tabular}
\caption{Comparison of different kernel functions.}
\label{table:diff_kernel}
\end{center}
\end{table}

\subsection{Optimal bandwidth}\label{OptimalBdSec}
First, we  show that the suboptimal bandwidth, which corresponds to  $\beta = 0$ in Theorem \ref{thm},  indeed performs worse than the optimal bandwidth, even though its asymptotic variance is easier to estimate without the \(\beta Z^{\prime}_{\tau}\) term.  {\Blue For simplicity, we again only consider the Heston model \eqref{HestonMld00} without jumps and the untruncated pre-averaging kernel estimator \eqref{eq:non_truncated_pre}. We will compare the truncated and untruncated versions in more detail below in Subsection \ref{CmpTrNoTR}.} 

In Table \ref{table:opt_subopt}, we compare the optimal bandwidth \(h_1 = \beta \Delta_n^{1/4} \) with the suboptimal  bandwidths \( h_2 = \beta \Delta_n^{0.28} \) and \(h_3 = \beta \Delta_n^{0.3} \), using the exponential kernel with \(\beta = 1,2,3,4\) respectively, {\Blue based} on 1000 simulated path.  The results show the advantage in using the optimal bandwidth for the same level of the coefficient $\beta $.

\begin{table}[h!]
\begin{center} 
\begin{tabular}{ | c | c | c |c|}
\hline
 \multicolumn{4}{|c|}{{$\widehat{RMSE}\times 10^{5}(\rho = -0.5)$}} \\
 \hline
Bandwidth & $h_1$(optimal) &  $h_2$ (suboptimal) & $h_3$ (suboptimal)\\
\hline
$\beta = 1$ & 1.418  & 1.605   & 1.754 \\
\hline
$\beta = 2$ & 1.133 & 1.225  &  1.308\\
\hline
$\beta = 3$ & 1.077 & 1.121 & 1.678\\
\hline
$\beta = 4$ & 1.050 & 1.073 & 1.104 \\
\hline
\end{tabular}
\caption{Comparison between optimal bandwidth and suboptimal bandwidth}
\label{table:opt_subopt}
\end{center}
\end{table}

Next, we compare the results of the iterative homogeneous and local bandwidth selection methods, as discussed in Subsection \ref{band_section}. Based on some initial simulations, we observed that the parameter \(\theta  \), which controls the length of the pre-averaging window \(k_n\)  as \(k_n = \frac{1}{\theta\sqrt{\Delta_n}}\),  has comparatively smaller effect on the performance of estimator than  that of the bandwidth. Therefore, throughout this section, we fix \( \theta = 5\), which is computed by (\ref{eq:theta_optimal}) using true parameter values, and consider different bandwidth selection techniques\footnote{We also consider other values of $\theta$ and the results were similar.}. 

In Table \ref{table:diff_band}, we report the estimated RMSE  for different bandwidth selection methods.  For the homogeneous bandwidth selection method  (\ref{eq:optimal_h}),  we apply the realized variance of sparsely sampled  (5 min) spot variance estimates $\{\hat{c}_{t_{i}}\}$ to estimate  the vol vol \(\int_0^T \tilde{\sigma}^2_t dt \) as described in Section \ref{band_section}.  We fix the estimated vol vol after the first iteration to prevent the increased variance brought by the iterative method. The first two iterations are shown in the first two columns of the table and we can see that the second iteration  does not improve the result significantly. Therefore, one iteration of the bandwidth selection method is sufficient in practice.  
For the local bandwidth method,  we use \(\int_0^T \tilde{\sigma}^2_t dt/T \) as a proxy of $\tilde{\sigma}^2_{\tau}$  in the formula (\ref{OptSpot00}). 
As a reference, we  also give the results of using an oracle optimal bandwidth, which is computed by the true parameter values and the simulated spot variance process with Eqs.~(\ref{eq:optimal_h}) and (\ref{OptSpot00}) for the optimal homogeneous and optimal local bandwidths,  respectively.  In the last column, we provide the result of a semi-oracle type of bandwidth, where we use the estimated spot variance ``skeleton" $\{\hat{c}_{t_{i}}\}$ to estimate $\int_0^Tc_t dt$ and $\int_0^T c^2_t dt$, via Riemann sums\footnote{We also apply the pre-averaging estimate of quarticity given in \cite{jacod2010limit},  but the results were less optimal.}, while using the true parameter of $\gamma$ given in (\ref{ParValHston00}) to estimate $\int_0^T \tilde{\sigma}^2_t dt=\gamma^{2} \int_{0}^{T}c_{t}dt$. The last simplification is possible due to the special structure of the diffusion coefficient of variance process in the Heston model (\ref{HestonMld00}). A  similar approach can be applied to other popular volatility models such as CEV models. As we can see therein, the data-driven approaches (1st two columns) are quite close to the oracle and semi-oracle estimates.
\begin{table}[h!]
\begin{center} 
\begin{tabular}{ | c | c | c | c | c |}
\hline
 \multicolumn{5}{|c|}{ $\widehat{RMSE}\times 10^{5}(\rho = -0.5)$} \\
 \hline
   			 & 1st Iter. & 2nd Iter & Oracle & Semi-oracle\\
 \hline
 homogeneous &1.0530 	&1.0529&1.0540 &1.0533 \\
 \hline
 local &	1.0571 &	1.0551 &1.0542 &1.0547 \\
\hline
\end{tabular}
\caption{Comparison of different bandwidth selection methods based on 1000 simulations.   RMSE for initial bandwidth \(\beta = 1\) is $1.4086\times 10^{-5}$. Columns 2 and 3 show the results corresponding to the 1st and 2nd iterations of bandwidth selection methods.  Column 4 and 5 show the result using oracle and semi-oracle bandwidths, respectively.}
\label{table:diff_band}
\end{center}
\end{table}
 \begin{remark}\label{ExplLocvsGlob}
The estimator with  local bandwidth has the flexibility to adjust its bandwidth at different times based on the data. Therefore, theoretically, this estimator should be able to achieve a lower {\Blue value of the integrated asymptotic variance $\int_{0}^{T}\bar{\delta}^{2}(\tau)d\tau$, which, as defined in \thref{bandwidth}, is given by}:
\begin{equation*}
\Delta_n^{1/4} {\Blue \int_0^T\left( \frac{1}{\beta_{t}}\delta_1^2(t)+\beta_{t} \delta_2^2(t)\right) d t}.
\end{equation*}
However, {\Blue our} simulations show that the performance of the local bandwidth 
is almost the same as that of the homogeneous bandwidth. To further investigate this phenomenon, in {\Blue the} left panel of Figure \ref{fig:zeta1bdw}, we show the estimated RMSE for different {\Blue times  \( \tau\)} against the parameter $\beta$ in the bandwidth formula $b_{n}=\beta\Delta_{n}^{1/4}$. As before we simulate the Heston model (\ref{HestonMld00}) with the same parameters as in (\ref{ParValHston00}), but with the vol vol parameter $\gamma= 1/252$. We can conclude from the figure that the optimal {\Blue $\beta$-value is almost the same for different $\tau$'s, and this value is also close to the theoretical optimal homogeneous} bandwidth based on the asymptotic variance of the estimator. Thus, an estimator with homogeneous bandwidth can achieve a similar result without extra computation cost. This trend is less obvious when the vol vol parameter $\gamma$ is relatively small. In the right panel of Figure \ref{fig:zeta1bdw} we show the estimated RMSE vs. $\beta$ when $\gamma = 0.5/252$. {\Blue In that case, the} perceived almost flat trend as the bandwidth increases shows  that  the realized variance can serve as a good proxy of the spot volatility, at least for the purpose of tuning the parameters of the estimators, since the spot volatility estimator degenerates to the integrated volatility estimator when the bandwidth gets large. Note, however, that the MSE paths are slowly  tickling up as $\beta$ increases and each of those paths exhibit an optimal bandwidth. These are again  relatively close for different times $\tau$ and also close to  the theoretical optimal {\Blue homogeneous} bandwidth. In conclusion, when the vol vol parameter is not known, the theoretical optimal bandwidth can provide a good guideline for the empirical experiments and a homogeneous bandwidth is sufficient in achieving similar result as local bandwidth while reducing the estimation error and computation cost caused by {\Blue the} latter. 
\begin{figure}
    \centering
    \includegraphics[scale = 0.45]{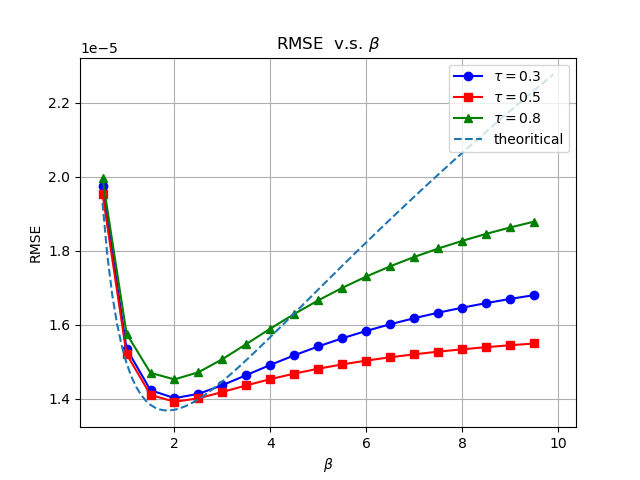}
    \includegraphics[scale = 0.45]{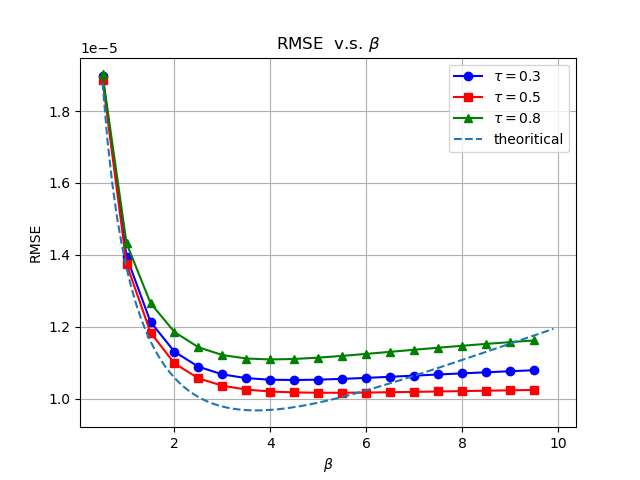}
    \caption{Left Panel: MSE v.s. bandwidth when {$\gamma = 1/252$}. Right panel: MSE v.s. bandwidth when {$\gamma = 0.5/252$}}
    \label{fig:zeta1bdw}
\end{figure}
\end{remark}


\newpage
\subsection{Comparison with TSRSV}\label{ComSubH}
{\Blue In this section, we adopt the model \eqref{eq:sv1f_model} with the same parameters as \cite{zu2014estimating}}:
\begin{equation}
\mu = 0.03, \;\beta_1 = 0.125,\;\alpha=-0.025, \;\rho = -0.3,\; \beta_0 = \beta^2_1/(2\alpha).
\end{equation} 
{\Blue We also take $\gamma_0\sim\mathcal{N}\left(0, -\frac{1}{2\alpha}\right)$ and $J=0$.}
The microstructure noise  $\epsilon^n_i$ is set to be $\epsilon_{i}^{n}:=\epsilon_{t_{i}} \stackrel{i.i.d.}{\sim} \mathcal{N}\left(0,\omega^2\right) $, {\Blue where, as in \cite{zu2014estimating}, $\omega^2$ can take one of three possible levels: $0.0001$, $0.001$, and $0.01$.}

For the TSRSV estimator, we implement the smoothing version of the TSRSV  (see \cite{zu2014estimating}, Section 3.1), denoted as {\Blue $\hat{c}_{TSRSV}$}, and calculate the bandwidth and scale parameters according to Section 3.4 in \cite{zu2014estimating}. For our pre-averaging estimator, we implement the non-truncated version (denoted as {\Blue $\hat{c}_{PA}$}) with exponential kernel and use the iterative method described in {\Blue Subsection} \ref{band_section} for bandwidth selection. {\Blue We consider two sampling frequencies: 1-sec or 5-sec.} In Table \ref{table:TSRSV}, we report the RMSE of the two estimators. {\Blue 
As shown in the table, the pre-averaging estimator has a superior performance, especially when the noise level is large.} 

\begin{table}[h!]
\begin{center} 
\begin{tabular}{  c   |c  c |c c  |c  c    }
\hline
   \multirow{2}{5em}{Frequency} &\multicolumn{2}{c}{$\omega^2 = 0.0001$} &
  \multicolumn{2}{c}{$\omega = 0.001$} & \multicolumn{2}{c}{$\omega = 0.01$} \\
  & $\hat{c}_{PA}  $ &$\hat{c}_{TSRSV}$ & $\hat{c}_{PA}  $ &$\hat{c}_{TSRSV}$ & $\hat{c}_{PA}  $ & $\hat{c}_{TSRSV}$ \\
\hline 
1 sec & 0.0411 & 0.0727 & 0.0546 & 0.1121 & 0.0634& 0.2345\\
5 sec & 0.0505 & 0.1487 & 0.0649 & 0.1392 & 0.1066 &0.3005
\end{tabular}
\caption{Comparison between TSRSV and kernel pre-averaging estimator. {\Blue We set the parameter $\theta$ in \eqref{AsympCndkn} to be $5$, $5$, and $1.5$ for noise level $0.0001, 0.001,0.01$ respectively. For the TSRSV estimator, to reduce the computational cost, instead of choosing the initial bandwidth using the cross validation method proposed in \cite{kristensen2010nonparametric} as did in Section 4.2.1 \cite{zu2014estimating}, we use the bandwidth already selected in  \cite{zu2014estimating}, Table 8-10, as the initial values to estimate the vol vol. Note that the obtained RMSE of the smoothing TSRSV estimator under the different noise levels (0.0727, 0.1121, and 0.2345) match the results in \cite{zu2014estimating}, who reported the values 0.094, 0.118, and 0.223, respectively.}}
\label{table:TSRSV}
\end{center}
\end{table}


\subsection{{\Blue Comparison between the truncated and untruncated estimators}}\label{CmpTrNoTR}
{\Blue In this subsection, we study the two versions of the truncated estimators (\ref{PreAverEst0}) and the non-truncated estimator (\ref{eq:non_truncated_pre}) under various levels of jump size and sample frequency, using the simulation setting in \cite{Yu2}. The parameters therein are chosen from \cite{Huang2005}}:
\begin{equation}
\mu = 0.03,\; \beta_1 = 0.125,\;\alpha=-0.1, \;\rho = 0, \;\beta_0 = 0.
\end{equation}  
We also conduct our study in the same experiment design {\Blue as} \cite{Yu2}. {\Blue More specifically, we consider three levels of jump activity (no jumps, compound Poisson jumps, and Variance Gamma jumps); 3 noise levels ($\omega = 0025,0.035,0.05 $); and three different sample frequencies (1 observation every 10 sec, every 30 sec, and every 60 sec.). In the case of finite activity jumps,  $J_{t}=\sum_{j=1}^{N_{t}} Z_{\tau_{j}}$ with  $Z_{\tau_{j}} \sim N(0,\sigma_Y^2) $ and {\Blue $\{N_t\}_{t\geq{}0} \sim Poisson(3)$}, while in the case of infinite activity jumps, $J_t = cG_t + \eta \tilde{W}_{G_t} $ with $G_t \sim Gamma(t/b,b)$, $b = 0.23$, $c = -0.2$, $\eta = 0.2$, and $\tilde{W} $ is an independent Brownian motion,} as  in \cite{Mancini2009}.

{\Blue In Table \ref{table:finite_jump}, we report the RMSE of our non-truncated estimators (\ref{eq:non_truncated_pre}) and truncated estimators (\ref{PreAverEst0}), denoted as $\hat{c}_{T,1}$, $\hat{c}_{T,2}$ and $\hat{c}_{Non}$, respectively.
We observe that when there are no jumps, all three estimators have similar performance, with the non-truncated estimator giving slightly better results. However, when jumps are present, the truncated estimators have a much superior performance, especially when the jump size is large, and $\hat{c}_{T,2} $ appears to be more effective at eliminating jumps compared with $\hat{c}_{T,1} $.}

{\Blue \cite{Yu2} also proposed  a pre-averaging kernel estimators {\Blue for the spot volatility. As mentioned in the introduction, their estimator has a different de-biasing term, which could affect the finite-sample performance, and their asymptotic normality is only established with} suboptimal convergence rate.} Our results in Table \ref{table:finite_jump} are comparable with \cite{Yu2}, Section 5. For example, the RMSE 0.1421 under $\omega = 0.025, \sigma_Y = 0$ with 10 sec data is close to the RMSE $\sqrt{0.0201} = 0.1417$ in \cite{Yu2}.

\begin{table}[h!]
\begin{center} 
\begin{tabular}{  c   |c  c c |c c c  |c  c  c   }
\hline
   \multirow{2}{5em}{Frequency} &\multicolumn{3}{c}{$\omega = 0.025$} &
  \multicolumn{3}{c}{$\omega = 0.035$} & \multicolumn{3}{c}{$\omega = 0.05$} \\
  & $\hat{c}_{T,2}$ &$\hat{c}_{T,1}$ & $\hat{c}_{Non} $ & $\hat{c}_{T,2}$ & $\hat{c}_{T,1}$ &  $\hat{c}_{Non} $ & $\hat{c}_{T,2}$ &$\hat{c}_{T,1}$ & $\hat{c}_{Non} $\\
\hline 
\multicolumn{10}{l}{Scenario A: Diffusion with no jumps $\sigma_Y =0 $ }\\
10 sec &0.1421&	 0.1421&	0.1420&	0.1474&	0.1474&	0.1472&	0.1689&	0.1690&	0.1677\\
30 sec &0.1724&	 0.1724&	0.1719&	0.1763&	0.1763&	0.1758&	0.1846&	0.1846&	0.1837\\
60 sec &0.2057&	 0.2057&	0.2054&	0.2085&	0.2085&	0.2083&	0.2161&	0.2161&	0.2160\\
\hline 
\multicolumn{10}{l}{Scenario B: Diffusion with small jumps $\sigma_Y =0.5 $ }\\
10 sec &0.1311&	0.1630&	1.0120&	0.1347&	0.1641&	1.0004&	0.1592&	0.1805&	0.9943\\
30 sec & 0.1660&	0.1893&	0.9758&	0.1690&	0.1913&	0.9798&	0.1797&	0.1971&	0.9783\\
60 sec &0.2083& 0.2209&	0.9738&	0.2113&	0.2215&	0.9732&	0.2205&	0.2255&	0.9730\\
\hline
\multicolumn{10}{l}{Scenario C: Diffusion with large jumps $\sigma_Y =1.5 $ } \\
10 sec & 0.1952	&0.9402	&9.4648&	0.2013&	0.9397&	9.3509&	0.2090&	0.9013&	9.1702\\
30 sec & 0.2262 &0.9183	&9.0481&	0.2256&	0.9132&	9.0438&	0.2279&	0.8990&	8.9954\\
60 sec & 0.2640	&0.8949	&8.9794&	0.2635&	0.8892&	8.9731&	0.2590&	0.8707&	8.9514\\
\hline
\multicolumn{10}{l}{Scenario D: Diffusion with jumps of infinite activity } \\
10 sec& 0.1246&	0.1247&	0.1271&	0.1330&	0.1331&	0.1349&	0.1529&	0.1530&	0.1547\\
30 sec&0.1576&	0.1577&	0.1594&	0.1587&	0.1588&	0.1598&	0.1728&	0.1729&	0.1740\\
60 sec&0.1940&	0.1940&	0.1952&	0.1959&	0.1959&	0.1966&	0.2063&	0.2063&	0.2076\\

\end{tabular}
\caption{The RMSE of the pre-averaging estimators. {\Blue We set the parameter $\theta$ in \eqref{AsympCndkn} to be $5$, $3$, and $2$ for 10-sec, 30-sec, and 60-sec data, respectively. The truncation level is set to be $v_n = \alpha \sqrt{BPV \phi_{k_n}(g)}(\Delta_n)^{0.49}$, with $BPV = \frac{\pi}{2} \sum_{i=2}^{n}\left|\Delta_{i-1}^{n} X \| \Delta_{i}^{n} X\right|$ and calculated on sparsely sampled data (5 min frequency). When $\sigma_Y = 0$, $\alpha = 5,4,4$ for 10-sec, 30-sec and 60-sec data, respectively; when $\sigma_Y = 0.5$, $\alpha = 5.5,3.5,3$ for 10-sec, 30-sec and 60-sec data, respectively; when $\sigma_Y = 1.5$, $\alpha = 5.2,3,2.5$ for the respective frequencies; and, finally, in the case of jump with infinite activity, we set $\alpha = 6,4,4$ for  10-sec, 30-sec and 60-sec data, respectively.}}
\label{table:finite_jump}
\end{center}
\end{table}

 

\section{{\Blue Conclusions}}\label{ConcludeSec}
{\Blue We introduce high-frequency-based kernel estimators of the spot volatility in both the absence and presence of microstructure noise. One of the key differences of our results from those of earlier literature is to consider a general kernel in an asymptotic regime for the bandwidth that leads to optimal convergence rates for the resulting kernel estimators. Under this regime, kernels of unbounded support offer improved performance compared to uniform or other kernels with bounded support. General two-sided kernels of unbounded support were already advocated in the work of \cite{FigLi}, where it was proved for the first time that exponential kernels are optimal, hence, formally validating an old conjecture of \cite{foster1994continuous_optKernel}. Unfortunately, \cite{FigLi} imposed strong assumptions for the validity of their results, the most important of which are the absence of leverage effects, microstructure noise, and jumps. These three effects are, of course, pervasive in real transaction data. In this work, we are able to relax all of those constraints and consider a rather general model. We further develop a feasible implementation of the proposed estimators. Via Monte Carlo experiments, we confirm the superior performance of the proposed estimators.}

\section*{Acknowledgements}
{\Blue The authors are grateful to the Associate Editor and two anonymous referees for their multiple suggestions that help to significantly improve the original manuscript.}

\appendix \label{appendix}

\section{Proof of Theorem \protect \ref{thm_no_noise}}\label{PrfOfMnRsltTh1}
We follow the steps in the proof of Theorem 13.3.3 in \cite{JacodProtter} (which implies Theorem 13.3.7). 
 By virtue of localization, without loss of generality, we assume throughout the proof that   {$|\delta(t, z)| \leq \Gamma(z)$ , $\mid \tilde{\delta}(\omega, t, z) | \wedge 1 \leq \Lambda(z) $   and  $$\Gamma(z)+\Lambda(z)+\int \Gamma(z)^{r} \lambda(d z)+ \int\Lambda(z)^{2} \lambda(d z)+\left|\mu_{t}\right|+\left|\sigma_{t}\right|+\left|X_{t}\right|  + \left|\rho_{t}\right| + \left|\tilde{\sigma}_{t}\right| + \left|\tilde{\mu}_{t}\right|\leq A,$$ (see Section 4.4.1 and (6.2.1) in \cite{JacodProtter} and Appendix A.5 in \cite{jacodaitsahalia} for details).} We use \(C\) to represent a generic constant that may change from line to line.

\subsection{Elimination of the jumps and the truncation}
We denote the process 
$\int_{0}^{t} \int_{\mathbb{R}^{d}} \delta(s, x)\mu(d s, d x)$ by   $  \delta \star\mu_{t}$ and we set 
\begin{align*}
&X^{\prime \prime}=\left\{\begin{array}{ll}
{\Blue  \delta \star \mathfrak{p},} & \text { if } r \leq 1, \\
{\Blue \delta\star(\mathfrak{p}-\mathfrak{q}),} & \text { if } r>1,
\end{array}\right. \quad X^{\prime}=X-X^{\prime \prime},\quad 
z_{n}=\left\{\begin{array}{ll}
\sqrt{m_{n}} & \text { if } \beta<\infty, \\
\frac{1}{\sqrt{m_{n} \Delta_{n}}} & \text { if } \beta=\infty.
\end{array}\right.
\end{align*}
To explicitly indicate the process $Y$ for which the spot estimator is calculated, we use the notation $\hat{c}\left({m}_{n},v_n,Y\right)_{\tau}$.
{\begin{lemma} \thlabel{truncated_diff}
\begin{enumerate}
\item When $X = X^{\prime}$ or when (\ref{eq:m_n_range}) and \(\varpi \leq \frac{1-a}{r}\) hold, we have, as $n\to\infty$, 
\begin{equation} \label{eq:cts_2_estimators}
\mathbb{P}\left(\hat{c}^{n}\left(m_{n}, v_{n}, X\right)_t \neq \hat{c}^{n}\left(m_{n}, X\right)_t\right) \rightarrow 0,
\end{equation} 
\item Under (\ref{eq:m_n_range}) and (\ref{eq:r_range_truncated}), we have
\begin{equation} \label{eq:diff_to0}
z_{n}\left(\hat{c}^{n}\left(m_{n}, v_{n}, X\right)_t-\hat{c}^{n}\left(m_{n}, v_{n}, X^{\prime}\right)_t\right) \stackrel{\mathbb{P}}{\longrightarrow} 0.
\end{equation}
\end{enumerate}
\end{lemma}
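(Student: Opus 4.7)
By the standard localization procedure recalled before the statement, we may assume that all coefficients, jump amplitudes, and drifts are uniformly bounded. The proof then reduces to controlling, for Part (1), the probability that any truncation indicator in the estimator is triggered, and for Part (2), the contribution of the jump component after truncation. I would follow the strategy of the analogous step in the proof of Theorem 13.3.3 of \cite{JacodProtter}, adapted here to the kernel-weighted setting with a general two-sided kernel of possibly unbounded support.

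For Part (1) in the continuous case $X=X'$, the Burkholder-Davis-Gundy inequality together with the boundedness of $\mu$ and $\sigma$ gives $\E|\Delta_i^n X'|^p \leq C_p \Delta_n^{p/2}$ for every $p\geq 2$, and Markov's inequality plus a union bound yield
\[
\mathbb{P}\Big(\max_{1\leq i\leq n}|\Delta_i^n X'|>v_n\Big)\leq C_p n \Delta_n^{p/2} v_n^{-p} = C_p T\Delta_n^{p(1/2-\varpi)-1},
\]
which tends to $0$ for $p$ sufficiently large since $\varpi<1/2$. On the complementary event every truncation indicator equals $1$, so the two estimators coincide. For the jump case under $\varpi\leq(1-a)/r$, decompose $X=X'+X''$. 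The continuous increments are controlled as above; for the jumps the Lévy measure condition $\int\Gamma^r d\lambda<\infty$ implies $\lambda(\{\Gamma>v_n\})\leq Cv_n^{-r}$, so the expected number of jumps of size exceeding $v_n$ within a window of length $b_n=m_n\Delta_n\sim\Delta_n^{1-a}$ around $\tau$ is $O(b_n v_n^{-r}) = O(\Delta_n^{1-a-\varpi r})$, which tends to $0$ under the stated condition (the boundary case requires a slightly refined moment exponent). The unbounded support of $K$ is handled by exploiting the decay $|K(x)x^2|\to 0$ from \thref{kernel}, which makes the contributions of indices far from $\tau$ negligible in a separate, straightforward step.

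For Part (2), using $X=X'+X''$ and the algebraic identity $(\Delta_i^n X)^2-(\Delta_i^n X')^2=(2\Delta_i^n X'+\Delta_i^n X'')\Delta_i^n X''$, I would write
\[
\hat{c}^n(m_n,v_n,X)_t-\hat{c}^n(m_n,v_n,X')_t=\sum_{i=1}^{n}K_{m_n\Delta_n}(t_{i-1}-\tau)\,(A_i^n+B_i^n),
\]
with $A_i^n=(2\Delta_i^n X'+\Delta_i^n X'')\Delta_i^n X''\,\mathbbm{1}_{\{|\Delta_i^n X|\leq v_n\}}$ and $B_i^n=(\Delta_i^n X')^2\big(\mathbbm{1}_{\{|\Delta_i^n X|\leq v_n\}}-\mathbbm{1}_{\{|\Delta_i^n X'|\leq v_n\}}\big)$. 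Both $A_i^n$ and $B_i^n$ vanish on jump-free intervals, so the sums run effectively over jump-containing indices. On the truncation event I have $|A_i^n|\leq(2v_n+|\Delta_i^n X''|)|\Delta_i^n X''|$, and $B_i^n$ is controlled via $(\Delta_i^n X')^2\leq v_n^2+|\Delta_i^n X''|^2$ on the symmetric difference of the two truncation events (which likewise requires a jump). Applying the standard jump-moment estimate $\E|\Delta_i^n X''|^{r'}\leq C\Delta_n$ for any $r'\in(r,2]$, together with the kernel normalization $\sum_i\Delta_n |K_{b_n}(t_{i-1}-\tau)|=O(1)$ and the scaling $z_n\asymp\Delta_n^{-(a\wedge(1-a))/2}$, I would verify that each contribution is $o_{\mathbb{P}}(1)$ under (\ref{eq:r_range_truncated}).

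The main obstacle is the careful bookkeeping in Part (2): the bounds on $A_i^n$ and $B_i^n$ must be combined with the scaling $z_n$ and the kernel weights in a way that pins down the conditions on $r$ and $\varpi$ exactly. The delicate interplay between the bandwidth rate $a$, the truncation rate $\varpi$, and the jump activity index $r$ is what produces the thresholds $r<2/(1+a\wedge(1-a))$ and $\varpi>(a\wedge(1-a))/(2(2-r))$ appearing in (\ref{eq:r_range_truncated}), and matching these exponents term by term is where the bulk of the technical work lies.
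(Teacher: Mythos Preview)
Your Part~1 argument in the continuous case $X=X'$ is correct and in fact cleaner than the paper's: you take a union bound over all $n$ increments and drive the probability to zero by choosing $p$ large, whereas the paper first truncates the kernel to a compact window via $K^M=K\mathbbm{1}_{[-M,M]}$ and controls the resulting three pieces separately. For the jump case, however, your localization to a window of length $b_n$ is not fully justified. When $K$ has full support (as for the exponential kernel), the event $\{\hat c^n(m_n,v_n,X)_t\neq\hat c^n(m_n,X)_t\}$ is triggered by \emph{any} increment exceeding $v_n$, so the pointwise decay $|K(x)x^2|\to 0$ you invoke cannot help at the level of that event. What the paper actually proves (and what suffices for the subsequent reduction) is the weaker statement that the \emph{difference} converges to $0$ in probability; there the explicit kernel-truncation step, using the integrability $\int|K|<\infty$ together with the crude bound $\E(\Delta_j^nX)^2\leq C\Delta_n$ to make the tails small in $L^1$, is the right device.

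For Part~2 the paper bypasses your $A_i^n/B_i^n$ decomposition entirely by invoking Lemma~13.2.6 of \cite{JacodProtter}, which directly delivers
\[
\E\big(|\eta_j^n|\mid\mathcal F_{(j-1)\Delta_n}\big)\leq\big(\Delta_n^{\frac{2-r}{2}(1\wedge\frac{1}{r})-\epsilon}+\Delta_n^{(2-r)\varpi-\epsilon}\big)\phi_n
\]
for $\eta_j^n=\Delta_n^{-1}\big((\Delta_j^nX)^2\mathbbm{1}_{\{|\Delta_j^nX|\leq v_n\}}-(\Delta_j^nX')^2\mathbbm{1}_{\{|\Delta_j^nX'|\leq v_n\}}\big)$; combining this with $\Delta_n\sum_j|K_{b_n}(t_{j-1}-\tau)|\to\int|K|$ yields the $L^1$ bound, and the constraints in (\ref{eq:r_range_truncated}) fall out by matching exponents with $z_n$. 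Your decomposition is precisely what that lemma unpacks, so your route is valid but longer; just be aware that the remark ``$A_i^n$ and $B_i^n$ vanish on jump-free intervals'' is only a heuristic when activity is infinite (and for $r>1$ the compensator keeps $\Delta_i^nX''$ from vanishing even without jumps), so the argument must run on moment bounds throughout, as you correctly indicate in the next sentence.
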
}

\begin{proof}
We first need some estimates. 
\begin{itemize}
\item  Recall $|\delta(t, z)|\leq \Gamma(z) $ with  $\Gamma$ bounded and $\int \Gamma(z)^{r} \lambda(d z)<\infty$. Let $ J =\Gamma * \mathfrak{p}$ when $r \leq 1$ and $J=X^{\prime \prime}$ otherwise, so we have $\left|\Delta_{i}^{n} X^{\prime \prime}\right| \leq\left|\Delta_{i}^{n} J\right|$. Set   \begin{equation} \label{eq:notation_UVW}
U_{i}^{n}=\frac{\left|\Delta_{i}^{n} X^{\prime}\right|}{\sqrt{\Delta_{n}}}, \quad V_{i}^{n}=\frac{\left|\Delta_{i}^{n} J\right|}{\Delta_{n}^{\varpi}} \wedge 1, \quad W_{i}^{n}=\frac{\left|\Delta_{i}^{n} J\right|}{\sqrt{\Delta_{n}}} \wedge 1.
\end{equation} By (13.2.22)-(13.2.23) in \cite{JacodProtter}, for a suitable sequence \(\phi_n\)  going to 0, we have, for any $m>0$, 
\begin{equation*}
\begin{split}
\mathbb{E}\left(\left(U_{i}^{n}\right)^{m} | \mathcal{G}_{(i-1) \Delta_{n}}\right) &\leq C_{m},\\
\mathbb{E}\left(\left(V_{i}^{n}\right)^{m} | \mathcal{F}_{(i-1) \Delta_{n}}\right) &\leq \Delta_{n}^{(1-r \varpi)\left(1 \wedge \frac{m}{r}\right)} \phi_{n},\\
 \mathbb{E}\left(\left(W_{i}^{n}\right)^{m} | \mathcal{F}_{(i-1) \Delta_{n}}\right) &\leq \Delta_{n}^{\frac{2-r}{2}\left(1 \wedge \frac{m}{r}\right)} \phi_{n}.
\end{split}
\end{equation*}
\item { Recall \(\Gamma \) is bounded and $|\delta(t, z)| \leq \Gamma(z)$. By (2.1.44) in \cite{JacodProtter}, we have}
      \begin{equation} \label{eq:deltaXbound}
      \E\left(\left.\left(\Delta_j X\right)^2\right|\mathcal{F}_{({j}-1) \Delta_{n}} \right)\leq C \Delta_n.
      \end{equation}
\end{itemize} 
\begin{enumerate}
\item We define a kernel restricted on a bounded support \((-M,M]\)  for an integer \(M\) and the corresponding spot volatility estimator  {at} \(\tau \in [(i-1)\Delta_n, i\Delta_n)\):          
\begin{align*}
&K^{M}(x) = K(x) \mathbbm{1}_{|x|\leq M}\\
&\hat{c}^{n,M}\left(m_{n}, v_{n}, X\right)_{\tau} = \sum_{j=i - M m_n +1}^{i+M m_n} K_{{m}_{n}\Delta_n}\left(t_{j-1}-\tau\right)\left(\Delta_{j}^{n} X\right)^{2}\mathbbm{1}_{\left\{\left|\Delta_{j}^{n} X\right| \leq v_{n}\right\}}\\
&\hat{c}^{n,M}\left(m_{n}, X\right)_{\tau} = \sum_{j=i - M m_n +1}^{i+M m_n} K_{{m}_{n}\Delta_n}\left(t_{j-1}-\tau\right)\left(\Delta_{j}^{n} X\right)^{2}.
\end{align*}
{For all  \(\epsilon>0 \), we} have 
\begin{equation*}
      \begin{split}
      & \mathbb{P}\left(\left|\hat{c}^{n}\left(m_{n}, v_{n}, X\right)_{\tau} - \hat{c}^{n}\left(m_{n}, X\right)_{\tau}\right|> \epsilon \right) \\
      &\quad\leq\; \mathbb{P}\left(\left|\hat{c}^{n}\left(m_{n}, v_{n}, X\right)_{\tau} - \hat{c}^{n,M}\left(m_{n},v_n, X\right)_{\tau}\right|  > \epsilon/3\right) \\
       &\qquad +\mathbb{P}\left(\left|\hat{c}^{n,M}\left(m_{n}, v_{n}, X\right)_{\tau} - \hat{c}^{n,M}\left(m_{n}, X\right)_{\tau}\right|  > \epsilon/3\right)\\
       & \qquad+\mathbb{P}\left(\left|\hat{c}^{n,M}\left(m_{n},  X\right)_{\tau} - \hat{c}^{n}\left(m_{n}, X\right)_{\tau}\right|  > \epsilon/3\right).
      \end{split}
      \end{equation*}
By (\ref{eq:deltaXbound}),  
\begin{equation*}
\begin{split}
 &\mathbb{P}\left(\left|\hat{c}^{n,M}\left(m_{n},  X\right)_\tau - \hat{c}^{n}\left(m_{n}, X\right)_\tau\right|  > \epsilon/3\right)\\
 & \leq C\frac{1}{\epsilon} \E \left|\hat{c}^{n,M}\left(m_{n},  X\right)_\tau - \hat{c}^{n}\left(m_{n}, X\right)_\tau\right|\\
 & \leq C\frac{1}{\epsilon} \Delta_n \left(\sum_{j = 1}^{i - M m_n} + \sum_{j = i+M m_n +1}^{n}\right)\left| K_{m_n \Delta_n}(t_{j-1} - \tau)\right|\\
 & \leq C \frac{1}{\epsilon}  \left(\int_{M}^{\infty}\left| K(u)\right| d u +  \int_{-\infty}^{-M} \left|K(u) \right| d u\right), \text{ as } n \to \infty.
\end{split}
      \end{equation*} 
{The last integral converges to $0$ as $M\to\infty$.  Similarly, for all} \(\epsilon>0\), \[\lim_{M\to \infty} \lim_{n\to \infty} \mathbb{P}\left(\left|\hat{c}^{n,M}\left(m_{n},  v_n,X\right)_\tau - \hat{c}^{n}\left(m_{n},v_n, X\right)_\tau\right|  > \epsilon/3\right) =0.\]  Next, $\mathbb{P}\left(\left|\hat{c}^{n,M}\left(m_{n}, v_{n}, X\right)_\tau - \hat{c}^{n,M}\left(m_{n}, X\right)_\tau\right| > \epsilon/3 \right)\leq \sum_{j=i - M m_n +1}^{i+M m_n} a(n,j)$, where $$
\begin{aligned}
a(n, j) &=\mathbb{P}\left(|\Delta_{j}^{n} X|>v_{n}\right) 
\leq \mathbb{P}\left(|\Delta_{j}^{n} X^{\prime}|>v_{n} / 2\right)+\mathbb{P}\left(|\Delta_{j}^{n} X^{\prime \prime}|>v_{n} / 2\right).
\end{aligned}
$$By Markov inequality and (\ref{eq:notation_UVW}), 
\begin{equation*}
\begin{split}
\mathbb{P}\left(|\Delta_{j}^{n} X^{\prime}|>v_{n}/2\right)&\leq C_{m} \Delta_{n}^{m(1 / 2-\varpi)} \mathbb{E}\left(U_{j}^{n}\right)^{m}\leq  C_{m} \Delta_{n}^{m(1 / 2-\varpi)},\\
\mathbb{P}\left(|\Delta_{j}^{n} X^{\prime \prime}|>v_{n} / 2\right) &\leq C \mathbb{E}\left(V_{j}^{n}\right)^{r}  \leq \Delta_{n}^{1-r \varpi} \phi_{n},
\end{split}
\end{equation*}
where \(\phi_n \to 0\) as \(n \to \infty\) and \(m > 0\) is arbitrary. 							Since \(\varpi < \frac{1}{2}\), by taking \(m\) big enough, we then deduce \(a(n,j) \leq \Delta_n^2\) when \(X = X^{\prime}\) . When X is discontinuous, we have \(a(n,j) \leq C \Delta_n^{1-r\varpi} \phi_n\).   The same result holds if (\ref{eq:m_n_range}) and \(\varpi \leq \frac{1-a}{r} \). 												Finally, for all \(\epsilon>0,\)
\begin{equation*}
\lim_{n\to 0}\mathbb{P}\left(\left|\hat{c}^{n}\left(m_{n}, v_{n}, X\right)_\tau - \hat{c}^{n}\left(m_{n}, X\right)_\tau\right|> \epsilon \right) \rightarrow 0.
\end{equation*}
\item Next, we have
\begin{equation*}
\hat{c}^{n}\left(m_{n}, v_{n}, X\right)_\tau-\hat{c}^{n}\left(m_{n}, v_{n}, X^{\prime}\right)_\tau =  \Delta_n\sum_{j=1}^{n}K_{{m}_{n}\Delta_n}\left(t_{j-1}-\tau\right) \eta_{j,}^{n},
\end{equation*} 
where \(\eta_{j}^{n} = \frac{1}{\Delta_n}\left(\left(\Delta_{j}^{n} X\right)^{2}\mathbbm{1}_{\{|\Delta^n_j X| < v_n\}}-\left(\Delta_{j}^{n} X^{\prime}\right)^{2}\mathbbm{1}_{\{|\Delta^n_j X^{\prime}| < v_n\}}\right)\). By Lemma 13.2.6 in \cite{JacodProtter}, with arbitrary fixed \(\epsilon >0\) when r >1 and \(a=0\)  when \(r \leq 1\), there is a sequence \(\phi_n\)  going to 0 as \(n \to \infty\), such that
\begin{equation*}
\mathbb{E}\left({|\eta_{j}^{n}|}\mid \mathcal{F}_{(i-1) \Delta_{n}}\right) \leq\left(\Delta_{n}^{\frac{2-r}{2}\left(1 \wedge \frac{1 }{r}\right)-\epsilon}+\Delta_{n}^{(2-r)\varpi-\epsilon}\right) \phi_{n}.
\end{equation*}
Therefore, by {$\Delta_n\sum_{j=1}^{n}K_{{m}_{n}\Delta_n}\left(t_{j-1}-\tau\right)  \to  \int K(u) du$},
we have
\begin{equation*}
{\E\left|\hat{c}^{n}\left(m_{n}, v_{n}, X\right)_\tau-\hat{c}^{n}\left(m_{n}, v_{n}, X^{\prime}\right)_\tau\right|} \leq C \left(\Delta_{n}^{\frac{2-r}{2}\left(1 \wedge \frac{1 }{r}\right)-\epsilon}+\Delta_{n}^{(2-r)\varpi-\epsilon}\right) \phi_{n}.
\end{equation*} 
Under (\ref{eq:m_n_range}) we have (\ref{eq:diff_to0}) as soon as (\ref{eq:r_range_truncated}) {holds}.
\end{enumerate}
\end{proof}

Following the proof of {Lemma} 13.3.11 in \cite{JacodProtter}, {the following result follows from} \thref{truncated_diff}:
\begin{lemma} \thlabel{cts_deduce_discts}
Assume that (a) of \thref{thm_no_noise} holds for the non-truncated version (\ref{MDKNN}). Then (b) of the theorem also holds, as well as (a) and (c) for the truncated version.
\end{lemma}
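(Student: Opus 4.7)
The plan is to bootstrap all three conclusions from the CLT for the non-truncated version by successively applying the two parts of Lemma \ref{truncated_diff}: part (1) swaps the truncated for the non-truncated form of the estimator (for the same process), while part (2) swaps the process $X'$ for the original $X$.

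First I would handle (a) for the truncated version. When $X$ is continuous we have $X=X'$, so Lemma \ref{truncated_diff}(1) gives $\mathbb{P}\bigl(\hat c^n(m_n,v_n,X)_\tau = \hat c^n(m_n,X)_\tau\bigr)\to 1$, hence the stable convergence assumed for the non-truncated estimator transfers verbatim to the truncated one.

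Next, for (c), take $X$ discontinuous with $m_n$ satisfying (\ref{eq:m_n_range}) and $r, \varpi$ satisfying (\ref{eq:r_range_truncated}), and decompose
\[
z_n\bigl(\hat c^n(m_n,v_n,X)_\tau - c_\tau\bigr) = z_n\bigl(\hat c^n(m_n,v_n,X)_\tau - \hat c^n(m_n,v_n,X')_\tau\bigr) + z_n\bigl(\hat c^n(m_n,v_n,X')_\tau - c_\tau\bigr).
\]
The first summand is $o_{\mathbb{P}}(1)$ by Lemma \ref{truncated_diff}(2). Since $X'$ is continuous with the same spot variance $c$ as $X$ (jumps do not alter $\sigma$), the previous step applied to $X'$ yields the stable convergence of the second summand to the desired mixed Gaussian limit, proving (c).

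For (b), I would introduce an auxiliary truncation with exponent $\varpi$ chosen to satisfy simultaneously $\varpi \leq (1-a)/r$ (so that Lemma \ref{truncated_diff}(1) case 2 applies and yields $\mathbb{P}(\hat c^n(m_n,v_n,X)_\tau = \hat c^n(m_n,X)_\tau)\to 1$) and $\varpi > (a\wedge(1-a))/(2(2-r))$ (so that (c) applies to the truncated version). Combining these gives the CLT for the non-truncated estimator. The main obstacle is checking that the conditions of (b) ensure both that the admissible interval $\bigl((a\wedge(1-a))/(2(2-r)),\,(1-a)/r\bigr]$ is nonempty with nonempty intersection with $(0,1/2)$ and that the $r$-condition $r<2/(1+a\wedge(1-a))$ required by (c) is implied; this amounts to elementary algebraic inequalities, splitting on whether $a\leq 1/2$ (where the key step reduces to $a(1-2a)\geq 0$) or $a>1/2$ (where only the branch $r<4/3$ of (b) is active, which is exactly what is needed).
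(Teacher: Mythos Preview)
Your proposal is correct and follows essentially the same bootstrap argument that the paper invokes by reference to Lemma~13.3.11 of \cite{JacodProtter}: use part~(1) of \thref{truncated_diff} to pass between truncated and non-truncated estimators on the same process, and part~(2) to pass from $X$ to its continuous part $X'$, then close the loop for (b) by choosing an auxiliary $\varpi$ in the intersection of the two admissible ranges. Your case split on $a\le 1/2$ versus $a>1/2$ and the resulting inequalities are exactly the verification the paper leaves implicit.
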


\subsection{Proof in the continuous case}
With the previous lemma, it remains to prove the stable convergence (\ref{eq:stable_convergence}) under the following assumption:
\begin{assumption}\thlabel{X_continuous}
We have (\ref{eq:X}) with \(X\) continuous, $c_t = \sigma^2_t$ satisfies (\ref{eq:sigma}), the {processes} \(\mu,\tilde{\mu},\sigma,\tilde{\sigma}\) are bounded,  and $|\tilde{\delta}(\omega, t, z) | \wedge 1 \leq \Lambda(z) $ with a {bounded} function \(\Lambda\) on \(E\) {satisfying} $\int_{E} \Lambda(z)^{2} \lambda(d z)<{\infty}$.
\end{assumption}

Now we proceed our proof with the non-truncated estimator, {which for easiness of notation is denoted as ${\hat{c}\left({m}_{n}\right)_{\tau}} :=\sum_{i=1}^{n} K_{{m}_{n}\Delta_n}\left(t_{i-1}-\tau\right)\left(\Delta_{i}^{n} X\right)^{2}$.}
We {first} introduce some notation.
Recall that \(U^n_i := U_{i\Delta_n} \) and, for \(t \in ((i-1)\Delta_n, i\Delta_n ]\),													    let
\begin{equation} \label{eq:y^n}
\begin{split}
&V^{n}_t :=\sum_{j=1}^{n} K_{{m}_{n}\Delta_n}\left(t_{j-1} - t \right)\left( \left(\Delta_{j}^{n} W\right)^2-\Delta_n\right),\\
&V^{\prime n}_t :=\Delta_n \sum_{j=1}^{n} K_{{m}_{n}\Delta_n}\left(t_{j-1} - t \right)\left( B^n_{j} - B^n_{i}\right),\\
&Z^n_t := c^n_i V^n_t, \quad  Z^{\prime n}_t := \tilde{\sigma}^n_{i} V^{\prime n}_t,\\
&Z^{\prime \prime n}_t = \hat{c}\left({m}_{n}\right)_{t}-c_{t} - Z^n_t - Z^{\prime n}_t.
\end{split}
\end{equation}
All the three cases in \thref{thm_no_noise} {for the continuous case} follows from the next two lemmas:
\begin{lemma}\thlabel{st_no_noise}
Under \thref{kernel,X_continuous}, with $\Delta_{n}\to{}0$, ${m}_{n}\Delta_n\rightarrow 0$, and ${m}_{n} \sqrt{\Delta_n} \rightarrow \infty$, we have the following stable convergence in law:
\begin{equation*}
\left(\sqrt{{m}_{n}} Z^n_t, \frac{1}{\sqrt{{m}_{n}\Delta_n}} Z_t^{\prime n} \right) \stackrel{st}{\longrightarrow} \left(Z_{t}^{(0)}, Z_{t}^{\prime(0)} \right),
\end{equation*}
where \(Z^{0}_t\) and \(Z_{t}^{\prime(0)}  \) are defined in (\ref{eq:limiting_Z}).
\end{lemma}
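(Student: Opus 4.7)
The plan is to reduce the claim to a joint stable central limit theorem for two explicit martingale sums. Since $c$ and $\tilde{\sigma}$ are c\`adl\`ag and a fixed $t$ is a.s.\ not a jump time of either, $c^{n}_{i}\to c_{t}$ and $\tilde{\sigma}^{n}_{i}\to\tilde{\sigma}_{t}$ a.s.; these limits are $\mathcal{F}^{(0)}$-measurable. By the invariance of stable convergence under multiplication by $\mathcal{F}^{(0)}$-measurable random variables, it suffices to establish
\[
\Big(\sqrt{m_{n}}\,V^{n}_{t},\; V^{\prime n}_{t}/\sqrt{m_{n}\Delta_{n}}\Big)\stackrel{st}{\longrightarrow}(V,V'),
\]
where $V,V'$ are as in (\ref{eq:Y_Yprime}), independent of $\mathcal{F}^{(0)}$.

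Both coordinates are naturally written as sums of martingale increments adapted to $(\mathcal{F}^{(0)}_{j\Delta_{n}})_{j}$. The first is already in that form,
\[
\sqrt{m_{n}}\,V^{n}_{t}=\sum_{j=1}^{n}\eta^{n}_{j},\qquad \eta^{n}_{j}:=\sqrt{m_{n}}\,K_{b_{n}}(t_{j-1}-t)\big((\Delta^{n}_{j}W)^{2}-\Delta_{n}\big),
\]
with $b_{n}:=m_{n}\Delta_{n}$. Interchanging summations (and using the telescoping representation of $B^{n}_{j}-B^{n}_{i}$) yields
\[
V^{\prime n}_{t}/\sqrt{m_{n}\Delta_{n}}=\sum_{k=1}^{n}\zeta^{n}_{k},\qquad \zeta^{n}_{k}:=\beta^{n}_{k}\,\Delta^{n}_{k}B\big/\sqrt{m_{n}\Delta_{n}},
\]
where $\beta^{n}_{k}=\Delta_{n}\sum_{j\geq k}K_{b_{n}}(t_{j-1}-t)$ for $k>i$ and $\beta^{n}_{k}=-\Delta_{n}\sum_{j\leq k-1}K_{b_{n}}(t_{j-1}-t)$ for $k\leq i$. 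A Riemann-sum argument in the variable $u=(t_{j-1}-t)/b_{n}$ (with spacing $\Delta u=1/m_{n}\to 0$) gives $\beta^{n}_{k}\to L\big((t_{k-1}-t)/b_{n}\big)$ uniformly in $k$.

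I then invoke the stable martingale CLT (e.g., Theorem 2.2.15 of \cite{JacodProtter}). The conditions to check reduce to the following. (i) Convergence of the conditional quadratic variations: $\sum_{j}\mathbb{E}[(\eta^{n}_{j})^{2}|\mathcal{F}^{(0)}_{(j-1)\Delta_{n}}]=2m_{n}\Delta_{n}^{2}\sum_{j}K_{b_{n}}^{2}(t_{j-1}-t)\to 2\int K^{2}(u)du$ and $\sum_{k}\mathbb{E}[(\zeta^{n}_{k})^{2}|\mathcal{F}^{(0)}_{(k-1)\Delta_{n}}]=m_{n}^{-1}\sum_{k}(\beta^{n}_{k})^{2}\to\int L^{2}(u)du$, both by the Riemann approximation above. (ii) Vanishing cross-quadratic variation $\sum_{j}\mathbb{E}[\eta^{n}_{j}\zeta^{n}_{j}|\mathcal{F}^{(0)}_{(j-1)\Delta_{n}}]\to 0$, which holds term-by-term because, conditionally, $(\Delta^{n}_{j}W,\Delta^{n}_{j}B)$ is a centered bivariate Gaussian, so the mixed third moment $\mathbb{E}[((\Delta^{n}_{j}W)^{2}-\Delta_{n})\Delta^{n}_{j}B]$ vanishes by parity. (iii) Lyapunov's condition, e.g., $\sum_{j}\mathbb{E}[(\eta^{n}_{j})^{4}]=O(m_{n}^{-1})\to 0$. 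Finally, for stability one must verify the vanishing of the conditional covariations with $W$ and $B$: those involving $\eta^{n}_{j}$ vanish by Gaussian parity, while $\sum_{k}\mathbb{E}[\zeta^{n}_{k}\Delta^{n}_{k}W|\mathcal{F}^{(0)}_{(k-1)\Delta_{n}}]=O\big(\sqrt{m_{n}\Delta_{n}}\,\rho_{t}\int vK(v)dv\big)\to 0$ since $m_{n}\Delta_{n}\to 0$; the corresponding vanishing against any bounded martingale orthogonal to $(W,B)$ on the very-good extension is automatic from the construction.

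The main obstacle lies in (i) and the uniform Riemann approximation of $\beta^{n}_{k}$ when $K$ has unbounded support. The tail decay conditions in \thref{kernel} (in particular $K(x)x^{2}\to 0$ and $\int|K(x)x|dx<\infty$) are used to truncate the sums and to guarantee $L\in L^{2}$, so that the limit $\int L^{2}$ is finite. A further subtlety occurs at the boundary index $k\approx i$, where $L$ has a jump of size $1$; this contributes only $O(1/m_{n})$ to the conditional variance and is therefore negligible. Finally, while the leverage $\rho_{t}$ is generally nonzero, it only enters the covariation with $W$ through the vanishing factor $\sqrt{m_{n}\Delta_{n}}$, which is precisely what preserves the independence of the limits $V$ and $V'$ from $\mathcal{F}^{(0)}$ and from each other despite the presence of leverage.
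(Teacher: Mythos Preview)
Your overall strategy coincides with the paper's: reduce to the stable convergence of $(\sqrt{m_n}V^n_t,\,V^{\prime n}_t/\sqrt{m_n\Delta_n})$, express both coordinates as triangular martingale arrays adapted to $(\mathcal{F}^{(0)}_{j\Delta_n})$, and verify the hypotheses of Theorem~2.2.15 in \cite{JacodProtter}. The Riemann-sum computations for the conditional variances and the Lyapunov bounds are essentially the paper's.

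There is, however, a genuine gap in your step (ii). You assert that $\mathbb{E}\big[((\Delta^n_jW)^2-\Delta_n)\,\Delta^n_jB\mid\mathcal{F}^{(0)}_{(j-1)\Delta_n}\big]=0$ because ``conditionally, $(\Delta^n_jW,\Delta^n_jB)$ is a centered bivariate Gaussian'' and the mixed third moment vanishes by parity. This is false under the standing assumptions: $\rho_t=d\langle W,B\rangle_t/dt$ is an adapted c\`adl\`ag \emph{process}, not a constant, so the conditional law of $(\Delta^n_jW,\Delta^n_jB)$ given $\mathcal{F}^{(0)}_{(j-1)\Delta_n}$ is in general \emph{not} Gaussian, and the parity argument does not apply. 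The same defect affects your ``Gaussian parity'' claim for the orthogonality of $\eta^n_j$ against $M=B$.

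The paper handles this point differently. By It\^o's formula one has
\[
\mathbb{E}\big[(\Delta^n_jW)^2\,\Delta^n_jB\mid\mathcal{F}^{(0)}_{j-1}\big]
=2\int_{t_{j-1}}^{t_j}\mathbb{E}\big[\rho_s\,(W_s-W_{t_{j-1}})\mid\mathcal{F}^{(0)}_{j-1}\big]\,ds
=2\int_{t_{j-1}}^{t_j}\mathbb{E}\big[(\rho_s-\rho_{t_{j-1}})(W_s-W_{t_{j-1}})\mid\mathcal{F}^{(0)}_{j-1}\big]\,ds,
\]
and Cauchy--Schwarz together with Doob's inequality bounds this by $C\Delta_n^{3/2}\big(\mathbb{E}[(\rho_{t_j}-\rho_{t_{j-1}})^2\mid\mathcal{F}^{(0)}_{j-1}]\big)^{1/2}$. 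Summing against the kernel weights and using the right-continuity and boundedness of $\rho$ then gives $\sum_j\mathbb{E}[\eta^n_j\zeta^n_j\mid\mathcal{F}^{(0)}_{j-1}]\to 0$. So the cross term is not zero term by term; it is only asymptotically negligible, and establishing this requires precisely the regularity of $\rho$ that your proposal bypasses. Once you replace the parity shortcut by this It\^o-based estimate (and use the analogous stochastic-integral representation to treat the orthogonality against bounded martingales $N\perp(W,B)$, rather than declaring it ``automatic''), your argument aligns with the paper's.
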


\begin{lemma}\label{remainder}
Under \thref{kernel,X_continuous}, we have for all \( t \in [0,T]\), \begin{equation*}
z_n^{(0)} Z^{\prime \prime n}_t \stackrel{\mathbb{P}}{\longrightarrow} 0,
\end{equation*}
where \( z_n^{(0)} = \left\{\begin{array}{ll}{{m}_{n}^{1 / 2}}, & {\text { if } {m}_{n} \Delta_{n}^{1/2} \to \beta <\infty} \\ {\frac{1}{\sqrt{{m}_{n} \Delta_{n}}}}, & {\text { if } {m}_{n} \Delta_{n}^{1/2}\to \beta =\infty}\end{array}\right. .\)
\end{lemma}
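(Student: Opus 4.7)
The plan is to mimic the decomposition strategy of \cite{JacodProtter} (Section 13.3), but carry it through with a general (possibly unbounded, two-sided, Lipschitz) kernel $K$ rather than a uniform one. Under \thref{X_continuous} the process $X$ is continuous, so applying It\^o's formula to $(X_s-X_{i-1}^n)^2$ on $[(i-1)\Delta_n,i\Delta_n]$ gives
\begin{equation*}
(\Delta_i^n X)^2 - c_{i-1}^n \Delta_n
= 2\int_{(i-1)\Delta_n}^{i\Delta_n}(X_s-X_{i-1}^n)\sigma_s\,dW_s
 + \int_{(i-1)\Delta_n}^{i\Delta_n}(c_s - c_{i-1}^n)\,ds
 + \xi_i^n,
\end{equation*}
where $\xi_i^n$ collects the contributions from $\mu$ (the drift of $X$) and is $O_{L^2}(\Delta_n^{3/2})$. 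Substituting (\ref{eq:sigma}) for $c_s - c_{i-1}^n$ in the second integral produces a further decomposition into a $B$-driven Brownian part $\tilde{\sigma}_{i-1}^n(B_s-B_{i-1}^n)$ and a second-order residual $\eta(i,s)$ (drifts, jumps of $c$ via $\tilde{\delta}$, and the Brownian piece with $\tilde{\sigma}_r-\tilde{\sigma}_{i-1}^n$).

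The two leading pieces are to be matched against $Z^n_t$ and $Z^{\prime n}_t$. For the first, by replacing $\sigma_s$ by $\sigma_{i-1}^n$ in $\int(X_s-X_{i-1}^n)\sigma_s dW_s$, the It\^o isometry $2\int_{(i-1)\Delta_n}^{i\Delta_n}(W_s-W_{i-1}^n)dW_s=(\Delta_i^n W)^2-\Delta_n$ yields the building block $c_{i-1}^n((\Delta_i^n W)^2-\Delta_n)$ of $Z^n_t$. For the second, after substituting $c_s-c_{i-1}^n\approx\tilde{\sigma}_{i-1}^n(B_s-B_{i-1}^n)$, a summation by parts (justified by the Lipschitz property and the $L^1$-integrability of $K'$ in \thref{kernel}) converts
\[
\sum_{i}K_{m_n\Delta_n}(t_{i-1}-t)\,\tilde{\sigma}_{i-1}^n\int_{(i-1)\Delta_n}^{i\Delta_n}(B_s-B_{i-1}^n)\,ds
\]
into an expression of the form $\tilde{\sigma}_t \Delta_n \sum_j K_{m_n\Delta_n}(t_{j-1}-t)(B^n_j-B^n_i)$ up to telescoping residuals, producing $Z^{\prime n}_t$. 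The residual Riemann-sum error $\sum K_{m_n\Delta_n}(t_{i-1}-t)c_{i-1}^n\Delta_n - c_t$ is in turn absorbed into $Z^{\prime n}_t$ plus negligible terms via the standard kernel-to-integral approximation using $\int K=1$ and the Lipschitz property of $K$.

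It then remains to scale by $z_n^{(0)}$ and verify that each error term is $o_{\mathbb{P}}(1)$. Using BDG together with the boundedness of $\mu,\sigma,\tilde{\mu},\tilde{\sigma}$ granted by \thref{X_continuous}, one shows that in $L^2$ the kernel-summed drift residuals contribute $O(\Delta_n)$, the $\sigma_s-\sigma_{i-1}^n$ replacement contributes $O(\Delta_n^{1/2}/\sqrt{m_n})$, the $\tilde{\sigma}_r-\tilde{\sigma}_{i-1}^n$ and jump-of-$c$ pieces contribute $O(\Delta_n/\sqrt{m_n})+O(\Delta_n)$, and the Riemann-sum errors arising from the Lipschitz property of $K$ contribute $O(1/m_n)$. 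In the regime $m_n\Delta_n^{1/2}\to\beta<\infty$ the scaling factor is $z_n^{(0)}=\sqrt{m_n}$, so all of these are $o(1)$ thanks to $m_n\Delta_n\to 0$; in the regime $\beta=\infty$ the factor is $1/\sqrt{m_n\Delta_n}$, which is smaller, so the same bounds apply \emph{a fortiori}.

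The main obstacle is the summation-by-parts step that produces $Z^{\prime n}_t$: the intra-block integral $\int(B_s-B_{i-1}^n)ds$ must be traded for the global Brownian differences $B^n_j-B^n_i$ appearing in $V^{\prime n}_t$, and this exchange must be done while simultaneously freezing $\tilde{\sigma}$ at time $t$ and localizing the sum. Because $z_n^{(0)}$ exactly saturates the order of $Z^n_t+Z^{\prime n}_t$, one has essentially no margin: every error has to be shown to be strictly smaller than $m_n^{-1/2}$ (resp.\ $(m_n\Delta_n)^{1/2}$). The unbounded support of $K$ adds one further technicality, namely controlling the tail of the kernel-weighted sums, which is handled using items (3)(i)--(iii) in \thref{kernel} exactly as in the truncation argument of \thref{truncated_diff}.
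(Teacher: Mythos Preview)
Your plan has a genuine gap in the accounting of $Z^{\prime n}_t$. You claim that, after substituting $c_s-c_{i-1}^n\approx\tilde\sigma_{i-1}^n(B_s-B_{i-1}^n)$, a summation by parts converts
\[
\sum_{i}K_{m_n\Delta_n}(t_{i-1}-t)\,\tilde\sigma_{i-1}^n\int_{(i-1)\Delta_n}^{i\Delta_n}(B_s-B_{i-1}^n)\,ds
\]
into $\tilde\sigma_t\,\Delta_n\sum_j K_{m_n\Delta_n}(t_{j-1}-t)(B^n_j-B^n_i)=Z^{\prime n}_t$. This cannot be right: each summand above is a martingale increment of $L^2$-size $O(\Delta_n^{3/2})$, so the whole sum is $O_P\!\big((\Delta_n/m_n)^{1/2}\big)$, whereas $Z^{\prime n}_t$ is of order $(m_n\Delta_n)^{1/2}$. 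The two differ by a full factor $m_n$; no Abel summation can bridge that, because the integrand $B_s-B_{i-1}^n$ is a \emph{local} (intra-interval) object, while the $B^n_j-B^n_i$ appearing in $V^{\prime n}_t$ is a \emph{global} cumulative increment. The true source of $Z^{\prime n}_t$ is the piece you relegate to a residual, namely $\Delta_n\sum_j K_{b_n}(t_{j-1}-t)c^n_{j-1}-c_t$: writing $c^n_{j-1}-c^n_i\approx\tilde\sigma^n_i(B^n_{j-1}-B^n_i)$ there is exactly what produces $\tilde\sigma^n_i\,\Delta_n\sum_j K_{b_n}(t_{j-1}-t)(B^n_{j-1}-B^n_i)$.

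A second, related omission is the freezing of $c$ in the martingale part. Your It\^o step yields the building blocks $c^n_{j-1}\big((\Delta^n_j W)^2-\Delta_n\big)$, with $j$ the running index, but $Z^n_t=c^n_i V^n_t$ uses the \emph{frozen} value $c^n_i$ at the estimation point. The difference $\sum_j K_{b_n}(t_{j-1}-t)(c^n_{j-1}-c^n_i)\big((\Delta^n_j W)^2-\Delta_n\big)$ is not covered by any of your listed error terms; it is exactly the paper's $\zeta^n_3(t)$ (after approximating $c^n_{j-1}-c^n_i$ by $\tilde\sigma^n_i(B^n_{j-1}-B^n_i)$) and requires a separate second-moment estimate of order $O(\Delta_n)$.

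For comparison, the paper bypasses It\^o's formula entirely and uses a purely algebraic four-term decomposition $Z^{\prime\prime n}_t=\sum_{l=1}^4\zeta^n_l(t)$: $\zeta^n_1$ is the kernel Riemann-sum defect $c^n_i\Delta_n\sum K_{b_n}-c_t$; $\zeta^n_2$ replaces $(\Delta^n_j X)^2$ by $c^n_{j-1}(\Delta^n_j W)^2$; $\zeta^n_4$ is the second-order error in $c^n_{j-1}-c^n_i-\tilde\sigma^n_i(B^n_{j-1}-B^n_i)$; and $\zeta^n_3$ is the cross term just mentioned. Each is then shown to be $o_P(1/z_n^{(0)})$ by direct moment bounds. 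Your It\^o route can be made to work, but you must (i) drop the spurious summation-by-parts and recognize that your $\int(B_s-B_{i-1}^n)ds$ term is already negligible, (ii) extract $Z^{\prime n}_t$ from $\Delta_n\sum K_{b_n}c^n_{j-1}-c_t$ via the Brownian expansion of $c^n_{j-1}-c^n_i$, and (iii) explicitly treat the $\zeta^n_3$-type leftover from freezing $c$ in the martingale block.
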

We prove these two lemmas in the next two subsections.
\subsubsection{Proof of Theorem \protect \ref{st_no_noise}}
We first show \begin{equation} \label{eq:st_v}
\left(\sqrt{{m}_{n}}V^n_t, \frac{1}{\sqrt{{m}_{n}\Delta_n}}V^{\prime n}_t \right) \stackrel{st}{\longrightarrow}{\left( V, V^{\prime}\right)},
\end{equation}
where $(V,V')$ are defined in (\ref{eq:Y_Yprime}).
Denote the bandwidth of the kernel as \(b_{n}:= {m}_{n} \Delta_n\), recall \(t \in ((i-1)\Delta_n, i\Delta_n ], \) we can write the pair \(\left(\sqrt{m_n}V^n_t, \frac{1}{\sqrt{{m}_{n} \Delta_{n}}}V^{\prime n}_t\right) \) as \(\sum_{j = 1}^n\left(\zeta_{j}^{n}(t), \zeta_{j}^{\prime n}(t)\right)\),  where 
\begin{equation*}
\begin{split}
\zeta^n_j(t) &=\sqrt{{m}_{n}}K_{b_n}\left( t_{j-1} - t\right)\left( \left(\Delta_{j}^{n} W\right)^2-\Delta_n\right),\\
 \zeta^{\prime n}_j(t) & = \frac{\Delta_n}{\sqrt{{m}_{n}\Delta_n}}\left\{\begin{array}{ll}{0} & {\text { if } j=1 } \\{-\left(\sum_{l=1}^{j-1}K_{b_n}(t_{l-1}-t)\right) \Delta^n_j B} & {\text { if } 2\leq j \leq i } \\ {\left(\sum_{l=j}^{n}K_{b_n}(t_{l-1}-t)\right) \Delta^n_j B} & {\text { if } i < j\leq n}\end{array}\right. . 
\end{split}
\end{equation*}
Then we notice that \(\left(\zeta_{j}^{n}(t), \zeta_{j}^{\prime n}(t)\right)\) is \(\mathcal{F^{(0)}}_{t_j}\) measurable and {with \(\mathcal{F}_j:= \mathcal{F}^{(0)}_{t_j}\),} 
\begin{equation*}
\begin{split}
\sum_{j=1}^n \E\left(\left.\zeta^n_j(t)\right| \mathcal{F}_{j-1}^{(0)} \right) &= 0,\\
\sum_{j=1}^n \E\left(\left.\zeta^{\prime n}_j(t)\right| \mathcal{F}_{j-1}^{(0)} \right) &= 0.
\end{split}
\end{equation*}
Recall 	that \(\rho_s = d\left<W,B\right>_{s}/ds\) is c\`ad\`ag and bounded on the interval \([t_{j-1}, t_j] \).  By It\^o lemma, Cauchy-Schwartz inequality, and Doob's inequality, we have 
\begin{equation*}
\begin{split}
\left|\E\left( \left.\left(\Delta_{j}^{n} W\right)^2 \Delta_j^n B\right|\mathcal{F}_{j-1}^{(0)}\right) \right| &  = \left|\int_{t_{j-1}}^{t_{j}}2 \E\left( \left. \rho_s\left(\int_{t_{j-1}}^{s} dW_u\right)\right|\mathcal{F}_{j-1}^{(0)}\right) d s \right|\\
&  = \left|\int_{t_{j-1}}^{t_{j}}2 \E\left( \left. \left(\rho_s - \rho_{t_{j-1}}\right)\left(\int_{t_{j-1}}^{s} dW_u\right)\right|\mathcal{F}_{j-1}^{(0)}\right) d s \right|\\
& \leq \int_{t_{j-1}}^{t_{j}}2 \sqrt{\E\left( \left. \left(\rho_s - \rho_{t_{j-1}}\right)^2\right|\mathcal{F}_{j-1}^{(0)}\right)   \Delta_n} d s\\
& \leq C \Delta_n^{3/2} \sqrt{\E\left( \left. \left(\rho_{t_j} - \rho_{t_{j-1}}\right)^2\right|\mathcal{F}_{j-1}^{(0)}\right)  }.
\end{split}
\end{equation*}  Then, by a change of variable,   \begin{equation*}
\begin{split}
&\left|\sum_{j=1}^n \E\left(\left.\zeta^n_j(t) \zeta^{\prime n}_j(t)\right| \mathcal{F}_{j-1} \right)\right|\\
&\quad   \leq \sqrt{\Delta_n}\sum_{j = 2}^{i}  \left|K_{b_n}\left(t_{j-1} - t \right) \right|\left( \sum_{l=1}^{j-1} \left| K_{b_n} (t_{l-1} - t)\right|\right) \left|\E\left( \left.\left(\Delta_{j}^{n} W\right)^2 \Delta_j^n B\right|\mathcal{F}_{j-1}^{(0)}\right) \right|\\
& \qquad + \sqrt{\Delta_n}\sum_{j=i+1}^n\left|K_{b_n}\left(t_{j-1} - t \right) \right|\left( \sum_{l=j}^{n} \left| K_{b_n} (t_{l-1} - t)\right|\right)\left|\E\left( \left.\left(\Delta_{j}^{n} W\right)^2 \Delta_j^n B\right|\mathcal{F}_{j-1}^{(0)}\right)\right|\\
&\quad \leq  C\Delta_n^{2}\sum_{j = 2}^{i} \left|K_{b_n}\left(t_{j-1} - t \right) \right|\left( \sum_{l=1}^{j-1} \left| K_{b_n} (t_{l-1} - t)\right|\right) \max_{j}\sqrt{\E\left( \left. \left(\rho_{t_j} - \rho_{t_{j-1}}\right)^2\right|\mathcal{F}_{j-1}^{(0)}\right)  }  \\
& \qquad + C\Delta_n^{2}\sum_{j=i+1}^n\left|K_{b_n}\left(t_{j-1} - t \right) \right|\left( \sum_{l=j}^{n} \left| K_{b_n} (t_{l-1} - t)\right|\right)\max_{j}\sqrt{\E\left( \left. \left(\rho_{t_j} - \rho_{t_{j-1}}\right)^2\right|\mathcal{F}_{j-1}^{(0)}\right)  } \\
&\quad  \leq C\int \left|K(u)\right| \left|L(u)\right| du  \max_{j}\sqrt{\E\left( \left. \left(\rho_{t_j} - \rho_{t_{j-1}}\right)^2\right|\mathcal{F}_{j-1}^{(0)}\right)  }  .
\end{split}
\end{equation*}
We notice that \(\rho \) is right-continuous and uniformly bounded on [0,T], thus, we have
\[
	\max_{j}\E\left( \left. \left(\rho_{t_j} - \rho_{t_{j-1}}\right)^2\right|\mathcal{F}_{j-1}^{(0)}\right)  \rightarrow 0.
\]
 Therefore, \[\sum_{j=1}^n \E\left(\left.\zeta^n_j(t) \zeta^{\prime n}_j(t)\right| \mathcal{F}_{j-1} \right)  \rightarrow 0, \text{ as } n \rightarrow \infty.\]
 Next, we can deduce the following by the Riemann sum theorem and change of variables:
\begin{equation*}\begin{split}
\sum_{j=1}^n \E\left( \left. \zeta_j^n(t)^2 \right| \mathcal{F}_{j-1} \right)
&= 2 \sum_{j=1}^n {m}_{n} \Delta_n^2K_{b_n}^2(t_{j-1} - t)\\
&\longrightarrow 2 \int K^2(u)du,\\
\sum_{j=1}^n \E\left( \left. \zeta_j^{\prime n}(t) ^2 \right| \mathcal{F}_{j-1} \right) &=  \frac{\Delta_n^2}{{m}_{n}} \left(\sum_{j=i+1}^n \left(\sum_{m=j}^n K_{b_n}\left(t_{m-1} - t\right) \right)^2 +  \sum_{j=2}^{i} \left(\sum_{m=1}^{j-1} K_{b_n}\left(t_{m-1} - t\right) \right)^2\right)\\
&{\sim   \frac{1}{{m}_{n} \Delta_n}}\left(\int_t^T \left(\int_v^T K_{b_n}(s-t)ds  \right)^2 d v  +  \int_0^t \left(\int_0^v K_{b_n}(s-t)d s  \right)^2 d v \right) \\
&\longrightarrow \int L^2(u) du,
\end{split}
\end{equation*}
where \(L(t)=\int_{t}^{\infty} K(u) d u \mathbbm{1}_{\{t>0\}}-\int_{-\infty}^{t} K(u) d u \mathbbm{1}_{\{t \leq 0\}}\).
Note that:
\begin{equation*}
\begin{split}
&\sum_{j = 1}^n\{\E\left(\left.\zeta^n_j(t)^4\right|\mathcal{F}_{i-1}\right) +\E\left(\left.\zeta^{\prime n}_j(t)^4\right|\mathcal{F}_{i-1}\right)\} \\
&=\sum_{j = 1}^n {m}_{n}^2 \Delta_n^4 K_{b_n}^4(t_{j-1}-t) \E \left(U_j^2-1 \right)^4  \\
& \qquad + \frac{\Delta_n^4}{{m}_{n}^2} \left(\sum_{j=i+1}^n \left(\sum_{m=j}^n K_{b_n}(t_{m-1}-t) \right)^4   + \sum_{{j=2}}^{{i}} \left(\sum_{m=1}^{{j-1}} K_{b_n}\left(t_{m-1} - t\right) \right)^2\right)\\
& \leq\frac{C}{{m}_{n}}\int K^4(u)d u + \frac{C}{{m}_{n}^2\Delta_n} \int L(u)^4du  \longrightarrow 0,
\end{split}
\end{equation*}
where \(U_j\) is a standard normal distribution and C is a generic constant.
To apply Theorem 2.2.15 in \cite{JacodProtter}, we further need to show that 
\begin{equation}\label{eq:mult_mart}
{\rm (i)} \;\sum_{j=1}^n \E\left(\left. \zeta^n_j(t)\left(M_{{t_j}} - M_{{t_{j-1}}} \right)\right|\mathcal{F}_{{j-1}} \right) \rightarrow 0, \quad 
{\rm (ii)}\;\sum_{j=1}^n \E\left(\left. \zeta^{\prime n}_j(t)\left(M_{{t_j}} - M_{{t_{j-1}}} \right)\right|\mathcal{F}_{{j-1}} \right)\rightarrow 0,
\end{equation}whenever \(M \) is either one of the component of \(\left(W,B \right)\) or is in the set  $\mathcal{N}$ containing all bounded $\left(\mathcal{F}_{t}^{(0)}\right)$-martingales orthogonal (in the martingale sense) to \(\left(W,B \right)\). 
When \(M = W \) or \(B\), (\ref{eq:mult_mart}-i) holds true since it is the  $\mathcal{F}_{(j-1)\Delta_n}$-conditional expectation of an odd function of the increments of the process \(W\) after time \((j-1) \Delta_{n}\). {On the other hand, by the boundedness of the process $\rho$, we have \( |\E\left(\left.\Delta_j B\Delta_j W \right|\mathcal{F}_{j-1} \right)|= \E\left(\left.\left|\int_{t_{j-1}}^{t_j} \rho_s d s \right|\right| \mathcal{F_{j-1}}\right)\leq C \Delta_{n}\), for some constant $C$ and, thus, (\ref{eq:mult_mart}-ii) can be shown as follows:}
\begin{equation*}
\begin{split}
\sum_{j=1}^n \E\left(\left. \zeta^{\prime n}_j(t)\left(M_{t_j} - M_{t_{j-1}} \right)\right|\mathcal{F}_{j-1} \right) & \leq  \frac{\Delta_n^{3/2}}{\sqrt{{m}_{n}}} \left(\sum_{j=i+1}^n \left|\sum_{m=j}^n K_{b_n}\left(t_{m-1} - t\right) \right| +  \sum_{j=2}^{i} \left|\sum_{m=1}^{j-1} K_{b_n}\left(t_{m-1} - t\right) \right|\right)\\
&\leq  C \frac{1}{m_n \Delta_n} \int  \left|L(u)\right| du \rightarrow 0.
\end{split}
\end{equation*}Suppose now that \(N\) is a bounded martingale, orthogonal to \(\left(W, B\right)\). By It\^{o}'s formula we see that \(\zeta^n_j(t)\) can be written as \(\sqrt{{m}_{n}}K_{b_n}\left( t_{j-1} - t\right)\int_{t_{j-1}}^{t_j} 2 \left(W_s - W_{t_{j-1}} \right) d W_s \),  i.e., a stochastic integral with respect to \(W\) on the interval \(\left[(j-1) \Delta_{n},j \Delta_{n}\right]\). Similarly,  \( \zeta^{\prime n}_j(t)\) is a stochastic integral with respect to \(B\) on the same interval. Then the orthogonality of \(N\) and \(\left(W,B\right)\) implies (\ref{eq:mult_mart}). 
Now, we can apply Theorem 2.2.15 in \cite{JacodProtter} and show that 
\begin{equation*}
\left(\sqrt{{m}_{n}}V^n_t, \frac{1}{\sqrt{{m}_{n}\Delta_n}}V^{\prime n}_t \right) \stackrel{st}{\longrightarrow} \left( V, V^{\prime }\right),
\end{equation*}where \(V, V^{\prime} \) is defined in (\ref{eq:Y_Yprime}).
Finally, recall that \[Z^n_t := c^n_i  V^n_t, \quad  Z^{\prime n}_t := \tilde{\sigma}^n_i  V^{\prime n}_t.\]
From the c\`adl\`ag property of \(\sigma\) and {$\tilde{\sigma}$}, we see that \(c^n_i \rightarrow c_t\) {and \(\tilde{\sigma}^n_i \rightarrow \tilde{\sigma_t}\)}, for \(t \in ((i-1)\Delta_n, i\Delta_n ] \). Then \thref{st_no_noise}  follows from (\ref{eq:st_v}) and the following property of the stable in law convergence: \[Z_{n} \stackrel{st}{\longrightarrow} Z, \quad Y_{n} \stackrel{\mathbb{P}}{\longrightarrow} Y \quad \Rightarrow \quad\left(Y_{n}, Z_{n}\right) \stackrel{st}{\longrightarrow}(Y, Z).\]

\subsubsection{Proof of Lemma \protect \ref{remainder}}
 For \(t \in\left((i-1) \Delta_{n}, i \Delta_{n}\right]\), we can rewrite \( Z^{\prime \prime n}_t  \) defined in (\ref{eq:y^n}) as follows:
\begin{equation*}
Z^{\prime \prime n}_t=\sum_{j=1}^{5} \zeta^{n}_j(t),
\end{equation*}
where{
\begin{equation*}
\begin{split}
\zeta^n_1(t) &= c^n_i \Delta_n \sum_{j=1}^n K_{b_n}(t_{j-1}-t)  -c_t\\
\zeta^n_2(t) & = \sum_{j=1}^n K_{b_n}(t_{j-1}-t) \left(\left(\Delta^n_j X\right)^2 - c^n_{j-1}\left(\Delta^n_j W\right)^2  \right)\\
\zeta^n_3(t) &=\sum_{j=1}^n K_{b_n}(t_{j-1}-t) \tilde{\sigma}^n_i\left( \left(\Delta^n_j W\right)^2 - \Delta_n \right)  \left(B^n_{j} - B^n_i  \right)\\
\zeta^n_4(t) &= \sum_{j=1}^n K_{b_n}(t_{j-1}-t) \left( c^n_{j-1} - c^n_i -\tilde{\sigma}^n_i  \left(B^n_{j}- B^n_i \right)  \right)  \left(\Delta_j^n W\right)^2
\end{split}
\end{equation*}}
Therefore, it is enough to prove that, for \(l = 1, 2,3,4\) and all \(t \in [0,T]\), we have \begin{equation} \label{eq:zeta_go_to_0}
z_n^{(0)} \zeta^n_l(t) \stackrel{\mathbb{P}}{\rightarrow} 0.
\end{equation}
\begin{proof}[Proof of (\ref{eq:zeta_go_to_0}) for $l=1$]
 By Lemma 3.1 in \cite{FigLiSumplement1} with f=1 and \thref{kernel} we have 
 \begin{equation*}
\Delta_n \sum_{j=1}^n K_{b_n}(t_{j-1}-t)  - \int_{0}^T K_{b_n}(s-t) d s = \frac{1}{2}\left(K(A^+) - K(B^-) \right)\frac{\Delta_n}{b} + \mathrm{o}\left(\frac{\Delta_n}{b} \right) = \mathrm{O}\left(\frac{\Delta_n}{b}  \right),
\end{equation*} 
where \(\left( A,B\right) \) is the support of \(K\) and \(-\infty \leq A < 0 < B \leq  \infty \). Therefore, the boundedness of \( c\) implies
\begin{equation*}
\begin{split}
\zeta^n_1(t) &= c^n_i \left(\int_0^T K_{b_n}(s-t)ds\right)    -c_t + \mathrm{O}\left(\frac{\Delta_n}{b} \right) = c^n_i  -c_t + C\int_{(0, T)^{c}} K_{b_n}(t-\tau) d t  +  \mathrm{O}\left(\frac{\Delta_n}{b} \right).
\end{split}
\end{equation*}
Also, we can deduce the following from (\ref{eq:sigma}): 
\begin{equation*}
 \E \left( c_i-c_t\right)^2 \leq C \Delta_n, \text{ for } t \in \left((i-1) \Delta_{n}, i \Delta_{n}\right].
 \end{equation*}
\thref{kernel} implies  {that} \(x^{1 / 2} \int_{x}^{\infty} K(u) d u \rightarrow 0,\) as \(x \rightarrow \infty\). We then have
 \begin{equation*}
b_n^{-1 / 2} \int_{(0, T)^{c}} K_{b_n}(t-\tau) d t=\frac{1}{\sqrt{b_n}}\left(\int_{-\infty}^{\frac{\tau}{b_n}} K(u) d u+\int_{\frac{T-\tau}{b_n}}^{\infty} K(u) d u\right) \rightarrow 0, \text { as } n \rightarrow \infty.
\end{equation*}
Thus, \( z_n^{(0)} \zeta^n_1(t) \rightarrow 0\) since \(z_n \sqrt{\Delta_n} \rightarrow 0, z_n^{(0)} \frac{\Delta_n}{b_n}\rightarrow 0 \) and \(z_n^{(0)} = \left\{\begin{array}{ll}{\beta b_n^{-1/2}} & {\text { if } {m}_{n} \Delta_{n}^{1 / 2} \rightarrow \beta<\infty} \\ {b_n^{-1/2}} & {\text { if } {m}_{n} \Delta_{n}^{1 / 2} \rightarrow \beta=\infty}\end{array}\right.\).

\end{proof}

\begin{proof}[Proof of (\ref{eq:zeta_go_to_0}) for $l=2$]
Let \( \rho^n_j(t) = \Delta_j^n X - \sigma^n_{j-1}\Delta^n_j W \). In view of  (2.1.44) in \cite{JacodProtter}, for \(q \geq 2\), we have:
\begin{equation*}
\mathbb{E}\left(\left|\rho_{j}^{n}( t)\right|^{q}\right) \leq K_{q} \Delta_{n}^{1+q / 2}, \quad \E\left(\left|\sigma_{j-1}\Delta^n_j W\right|^q \right)  \leq C \Delta_n^{q/2}.
\end{equation*}
Then, since \( \left|\left(\Delta^n_j X\right)^2 - \sigma_{j-1}^2\left(\Delta^n_j W\right)^2  \right|\leq  2 \left(\left|\rho^n_j(t) \right|^2 + \left|\rho^n_j(t) \right|\left| \sigma_{j-1}^2\Delta^n_j W\right| \right)\),  the inequalities above and the Cauchy-Schwartz inequality yield
\begin{equation*}
\begin{split}
\E \left|\zeta_2^n(t)\right| & \leq 2 \sum_{j=1}^n \left| K_{b_n}(t_{j-1}-t)\right| \E \left(  \left|\rho^n_j(t) \right|^2 + \sqrt{\E\left|\rho^n_j(t) \right|^2 \E\left| \sigma_{j-1}^2\Delta^n_j W\right|^2}\right)\\
& \leq C\sum_{j=1}^n \left| K_{b_n}(t_{j-1}-t) \right| \left(\Delta_n^2 + \Delta_n^{3/2} \right) \sim\int K(u) du \sqrt{\Delta_n}.
\end{split}
\end{equation*}
We then have the result since \(z_n \sqrt{\Delta_n}\rightarrow 0\).

\end{proof}

\begin{proof}[Proof of (\ref{eq:zeta_go_to_0}) for $l=3$]
\(\zeta_3^n(t) \) can be written as \(\tilde{\sigma}^n_i \Phi^n(t) \)  where each $\tilde{\sigma}^n_i$ is bounded $\mathcal{F}^{(0)}_i$ measurable and 
\begin{equation*}
\Phi^n(t) = \sum_{j=1}^n K_{b_n}(t_{j-1}-t) \left( \left(\Delta^n_j W\right)^2 - \Delta_n \right)  \left(B^n_{j-1} - B^n_i  \right).
\end{equation*}
We can compute that \(\E \left( \Phi^n(t) \right) =0\) and  \(\left| \E\left(\Delta_j^n W \Delta_j^n B\right) \right| = \left|\E\left(  \int_{t_{j-1}}^{t_j} \rho_s d s  \right) \right|\leq C \Delta_n\). Notice that  \(\left(\Delta^n_j W\right)^2 - \Delta_n , B^n_{j-1} - B^n_i  \) are independent when \(j \geq i+1\) and \(\left(\Delta^n_j W\right)^2 - \Delta_n, B^n_{i} - B^n_j  \) are independent when \( j \leq i.\) Then, by tower property property,   we have  
\begin{equation*}
\begin{split}
&\E \left(\Phi^n(t)\right)^2 \\
& = \sum_{j=i+1}^n K^2_{b_n}(t_{j-1}-t)  2\Delta_n^2 (j-1-i)\Delta_n +\sum_{j=1}^{i} K^2_{b_n}(t_{j-1}-t) \E \left(\left(\left(\Delta_j^n W\right)^2 - \Delta_n \right)^2 \left(B^n_i - B^n_{j}+ \Delta_j^n B \right)^2 \right)\\
& \leq 2\Delta_n^2 \sum_{j=i+1}^n K^2_{b_n}(t_{j-1}-t)(t_{j-1}-t_i) + \sum_{j=1}^{i} K^2_{b_n}(t_{j-1}-t)\left( 2\Delta_n^2  (t_{i}-t_j) + \sqrt{\E\left(\left(\Delta_j^n W\right)^2 - \Delta_n \right)^4 \E \left(\Delta_j^n B \right)^4}\right)  \\
&  \leq 2\Delta_n^2 \sum_{j=i+1}^n K^2_{b_n}(t_{j-1}-t)(t_{j-1}-t_i) + \sum_{j=1}^{i} K^2_{b_n}(t_{j-1}-t)\left( 2\Delta_n^2  (t_{i}-t_j) + C_1 \Delta_n^3\right) \\
& \sim\Delta_n \left( \int_{t}^T K_{b_n}^2(s-t) (s-t)ds  - \int_{0}^t K_{b_n}^2(s-t) (s-t)ds\right) \\
& \sim\Delta_n \left(\int_0^{\infty} K^2(u) u d u - \int_{-\infty}^0  K^2(u) u d u\right),
\end{split}
\end{equation*}
where \(C_1 = \sqrt{\E\left((\chi_1^2 -1)^4\right) \E\left( \left(\chi_1^2\right)^2\right)}\). Then \( \frac{1}{\sqrt{\Delta_n}} \Phi^n(t) \) is bounded in probability, and the result follows, since \(z_n \sqrt{\Delta_n}\rightarrow 0\).
\end{proof}

\begin{proof}[Proof of (\ref{eq:zeta_go_to_0}) for $l=4$]
{Let \( \eta^n_j= \left( c_{j-1}^{n} -c_{i}^{n}- \tilde{\sigma}_{i}^{n}\left(B^n_{j-1}-B^n_{i}\right)\right) = \int_{t_i}^{t_{j-1}} \tilde{\mu}_s ds + \int_{t_i}^{t_{j-1}}\left(\tilde{\sigma_s} - \tilde{\sigma}^n_i \right)dB_s + M_{t_{j-1}} - M_{i} ,\)  where $M=\widetilde{\delta} \star(\mathfrak{p}-\mathfrak{q})$.  
Following the same argument for proof of (13.3.37) for (j = 6) in \cite{JacodProtter}, on the set $\Omega(n, N, \varepsilon)=\left\{\left|\Delta M_{s}\right| \leq \varepsilon,  \forall s \in\left(i-N m_n\Delta_n, i+N m_n\Delta_{n}\right]\right\}$, with the notation $
\gamma^{n}_j=\frac{1}{t_{j-1} -t_i  } \mathbb{E}\left(\int_{t_i}^{t_{j-1}}\left|\tilde{\sigma}_{s}-\tilde{\sigma}_{t_i}\right|^{2} d s\right),
$ we deduce that for \(j \in \left(i-N m_n\Delta_n, i+N m_n\Delta_{n}\right]   \),
\begin{equation*}
\mathbb{E}\left(\left(\eta^n_j \right)^21_{\Omega(n, i, \varepsilon)}\right) \leq C (t_{j-1} -t_i ) \rho(n, j,\varepsilon), \quad \text{ with } \rho(n, \varepsilon)=\frac{t_{j-1} -t_i  }{\varepsilon}+\gamma^{n}_j+\phi(\varepsilon),
\end{equation*} 
where $\phi(\epsilon)  = \int_{\{\Lambda(z)<\varepsilon\}} \Lambda(z)^{2} \lambda(d z)$ going to 0 as \(\epsilon \to 0.\) Since \( \tilde{\sigma}\) is c\`adl\`ag and bounded, we see that \(\gamma^n_j \rightarrow 0  \) for all \(j\) and thus \(\rho(n,j,\epsilon) \to 0.\)
From (2.1.44) in \cite{JacodProtter}, we also have for {all $j$, $ \mathbb{E}\left(\left(\eta^n_j \right)^2\right) \leq C (t_{j-1} - t_i)$ and, thus,} 
\begin{equation*}
\begin{split}
&\E\left|\zeta_4^n(t)\right|  =\E\left|\zeta_4^n(t)\right|\mathbbm{1}_{\Omega(n, N, \varepsilon)} + \E\left|\zeta_4^n(t)\right|\mathbbm{1}_{\Omega(n, N, \varepsilon)^{c}} \\
& \leq C \sum_{j=1}^{n} \left|K_{b_n}(t_{j-1}-t)\right|\sqrt{\E\left(\eta_{j}^n\right)^2 \E \left(\Delta_j^n W\right)^4}\\
&\quad+C\left( \sum_{j=1}^{i-N m_n}+ \sum_{j=i+N m_n+1}^{n}\right)\left|K_{b_n}(t_{j-1}-t)\right|\Delta_n\sqrt{|t_{j-1} -t_i |}  \\
& \leq C N\sqrt{N m_n \Delta_n \rho(n,\epsilon)}\\
&\quad+C\sqrt{m_n \Delta_n}\left(\int_N^{\infty} K(u) \sqrt{u} d u + \int_{-\infty}^{-N} K(u) \sqrt{u} d u\right).
\end{split}
\end{equation*}Additionally, we have $\lim _{\varepsilon \rightarrow 0} \limsup _{n} \rho(n, \varepsilon)=0$ and \[\lim_{N \to \infty}\left(\int_N^{\infty} K(u) \sqrt{u} d u + \int_{-\infty}^{-N} K(u) \sqrt{u} d u\right) = 0.\]
The result follows by \(z_n\sqrt{b_n} < \infty\).}
\end{proof}

\section{Proof of Theorem \protect \ref{thm}}\label{PrfOfMnRslt}

 {Again, by virtue of localization, without loss of generality, we assume throughout the proof that   $|\delta(t, z)| \leq \Gamma(z)$ , $\mid \tilde{\delta}(\omega, t, z) | \wedge 1 \leq \Lambda(z) $   and  $$\Gamma(z)+\Lambda(z)+\int \Gamma(z)^{r} \lambda(d z)+ \int\Lambda(z)^{2} \lambda(d z)+\left|\mu_{t}\right|+\left|\sigma_{t}\right|+\left|X_{t}\right|  + \left|\rho_{t}\right| + \left|\tilde{\sigma}_{t}\right| + \left|\tilde{\mu}_{t}\right|\leq A$$ (see Section 4.4.1 and (6.2.1) in \cite{JacodProtter} and Appendix A.5 in \cite{jacodaitsahalia} for details). }

\subsection{Elimination of the jumps and the truncation}
We set 
\begin{equation*}
X^{\prime \prime}=\left\{\begin{array}{ll}
\delta \star  \mathfrak{p}, & \text { if } r \leq 1, \\
\delta \star (\mathfrak{p}-\mathfrak{q}), & \text { if } r>1,
\end{array}\right.\quad X^{\prime}=X-X^{\prime \prime},\quad
{\tilde{z}_{n}}=\left\{\begin{array}{ll}
{m}_{n}^{1 / 2} \Delta_{n}^{1 / 4},& \text { if } \beta<\infty, \\
\frac{1}{\sqrt{m_{n} \Delta_{n}}}, & \text { if } \beta=\infty.
\end{array}\right.
\end{equation*}
Let $Y^{*}=Y-X+X^{\prime}$ be the continuous process
with microstructure noise and set
\begin{equation} \label{eq:non-truncated_noise}
\hat{c}^{*}\left(k_{n}, {m}_{n}\right)_{\tau} =\frac{1}{\phi_{k_{n}}\left(g\right)} \sum_{j=1}^{n-k_n+1} K_{{m}_{n} \Delta_n}\left(t_{j-1} - \tau\right)\left(\left(\overline{Y}_{j}^{*n}\right)^2   - \frac{1}{2}\widehat{Y}_{j}^{*n}\right) .
\end{equation}
We need some preliminary estimates:\begin{itemize}
\item By Corollary 2.1.9 (a)-(c) in \cite{JacodProtter}, for \(p > 0\) and $q \in [0,1 / r)$,
\begin{equation} 
E\left[\sup _{u \in[0, s ]}\left(\frac{\left|X^{\prime \prime}_{\tau+u}-X^{\prime \prime}_{\tau}\right|}{s^{q}} \wedge 1\right)^{p} \Big| \mathcal{F}_{\tau}^{(0)}\right] \leq Cs^{(1-q r)(p / r \wedge 1)} a(s),
\end{equation} where $a(s) \to 0$ as $s \to 0$.
Let $g_n(t) = \sum_{j =1}^{k_n} g(j/k_n) \mathbbm{ l }_{\left((j-1) \Delta_{n}, j \Delta_{n}\right]}(t) $. 
With $u_n = k_n \Delta_n $ and $\bar{X}^{\prime \prime n}_{i}=\int_{(i-1) \Delta_{n}}^{(i-1) \Delta_{n}+u_{n}} g_{n}(s-(i- 1)\Delta_n) d X_{s}^{\prime \prime}$, {\Green same as (9.2.13) in \cite{JacodProtter},} we deduce that \begin{equation} \label{eq:J_estimates}
    E\left[\left(\frac{\left|\bar{X}^{\prime \prime n}_i\right|}{u_n^{q}} \wedge 1\right)^{p} \Big| \mathcal{F}_{(i-1)\Delta_n}^{(0)}\right] \leq Cu_n^{(1-q r)(p / r \wedge 1)} a_n,
\end{equation} where $a_n \to 0$ as $n \to \infty$.

\item 
By the proof of Lemma 16.4.3 in \cite{JacodProtter}, for all $q>0$,\begin{equation} \label{eq:X_bar_hat_bound}
\begin{split}
&\mathbb{E}\left(\left|\bar{X}^{\prime n}_{i}\right|^{q} \mid \mathcal{F}_{(i-1) \Delta_{n}}\right) \leq C_{q} \Delta_{n}^{q / 4},\\
&\mathbb{E}\left(\left|{\hat{X}_{i}^{ \prime n}}\right|^{q}\Big| \mathcal{F}_{(i-1) \Delta_{n}}\right) 
\leq C_{q}{\Delta_n^{3q/2}},\\
&\mathbb{E}\left(\left|{\hat{X}_{i}^{ n}}\right|^{q}\Big| \mathcal{F}_{(i-1) \Delta_{n}}\right) 
\leq C_{q}\Delta_n^{q/2}\left(\Delta_{n}^{q }+ \Delta_{n}^{q \wedge 1}\right),\\
&\mathbb{E}\left(\left|{\bar{Y}_{i}^{n}}\right|^{q}\Big| \mathcal{F}_{(i-1) \Delta_{n}}\right) 
\leq C_{q} \Delta_n^{q/4} \left(1 + \Delta_n^{-(q/4-1/2)+}\right), \\
&\mathbb{E}\left(\left|{\hat{Y}_{i}^{n}}\right|^{q}\Big| \mathcal{F}_{(i-1) \Delta_{n}}\right) 
\leq C_{q} \Delta_n^{q/2} \left(\Delta_n^q + 1 + \Delta_n^{q\wedge 1}\right).
\end{split}
\end{equation} 
\item From (16.2.3) {in} \cite{JacodProtter}, for all \(p > 0\),  \begin{equation} \label{eq:epsilon_bound}
\begin{split}
\mathbb{E}\left(\left|\bar{\epsilon}(g)_{i}^{n}\right|^{p} \mid \mathcal{H}_{(i-1) \Delta_{n}}\right) \leq C_{p} \Delta_{n}^{p / 4}, \quad 
\mathbb{E}\left(\left|\widehat{\epsilon}_{i}^{n}\right|^{p} \mid \mathcal{H}_{(i-1) \Delta_{n}}\right) \leq C_{p} \Delta_{n}^{p/2}.
\end{split}
\end{equation}
\item {Combining}  (\ref{eq:X_bar_hat_bound}) and  (\ref{eq:epsilon_bound}), and {applying} \((a+b)^q \leq K_q (a^q + b^q)\),   for all $q>0 $,
\begin{equation} \label{eq:Y_star_bound}
\begin{split}
&\mathbb{E}\left(\left|\bar{Y}_{i}^{* n}\right|^{q} \mid \mathcal{F}_{(i-1) \Delta_{n}}\right) \leq C_{q}{\mathbb{E}\left(|\bar{X^{\prime}}(g)_{i}^{n}|^{q} + \left|\bar{\epsilon}(g)_{i}^{n}\right|^{q}\mid \mathcal{F}_{(i-1) \Delta_{n}}\right)}  \leq  C_{q}\Delta_{n}^{q / 4},\\
&\mathbb{E}\left(\left|\hat{Y}^{* n}_i\right|^{q} \mid \mathcal{F}_{(i-1) \Delta_{n}}\right)        \leq C_q \mathbb{E}\left(\left|\hat{X}^{\prime n}_i\right|^q + \left|\hat{\epsilon}^{n}_i\right|^q    \mid \mathcal{F}_{(i-1) \Delta_{n}}\right)  \leq C_q \Delta_n^{q/2}.
\end{split}
\end{equation}
\end{itemize}
The following result will allow us to reduce the proof  to the case of a continuous process and a kernel estimator without truncation.
\begin{lemma} \thlabel{eliminate_jump_noisy}
Under \thref{X},  we have 
\begin{equation} \label{eq:elimation_jump}
 \tilde{z}_n \left|\hat{c}\left(k_{n}, {m}_{n},v_n,l\right)_{\tau}  -  \hat{c}^{*}\left(k_{n}, {m}_{n}\right)_{\tau} \right| \stackrel{\mathbb{P}}{\longrightarrow} 0,
\end{equation}
{\Blue for both $l=1,2$ if $X = X^{\prime}$,  for $l=1$ if (\ref{eq:m_n_range_noise})-(\ref{eq:r_w_condition_noise}) hold, and for $l = 2$ if (\ref{eq:m_n_range_noise}) and (\ref{eq:w_condition_noise_2}) hold.}
\end{lemma}

\begin{proof}
{\Blue Let \(E_{j}^{n}\) denote the conditional expectation with respect to \(\mathcal{F}_{(j-1) \Delta_{n}}\). We first check the proof for the case of $l=2$. The proof for $l=1$ is shown below}. We can write 
\begin{equation*}
\left|\left(\left(\bar{Y}_{j}^{n}\right)^2 -\frac{1}{2}\widehat{Y}_{j}^{n}\right)\mathbbm{1}_{\left\{\left|\bar{Y}_{j}^{n}\right| \leq \nu_{n}\right\}}-\left(\left(\bar{Y}_{j}^{* n}\right)^2-\frac{1}{2}\widehat{Y}_{j}^{* n}\right)\right| \leq \sum_{r=1}^{4} \eta_{j}^{n, r},
\end{equation*} 
where 
\begin{equation}\label{eq:decomp_elimination_jump}
\begin{aligned}
\eta_{j}^{n, 1} &=\left|\left(\bar{Y}_{j}^{n}\right) ^2 \mathbbm{1}_{\left\{\left|\bar{Y}_{j}^{n}\right| \leq \nu_{n}\right\}}-\left(\bar{Y}_{j}^{* n}\right) ^2 \mathbbm{1}_{\left\{\left|\bar{Y}_{j}^{* n}\right| \leq \nu_{n}\right\}}\right|, \\
\eta_{j}^{n, 2} &=\frac{1}{2}\left|\widehat{Y}_{j}^{n}\mathbbm{1}_{\left\{\left|\bar{Y}_{j}^{n}\right| \leq \nu_{n}\right\}}-\widehat{Y}_{j}^{* n}\mathbbm{1}_{\left\{\left|\bar{Y}_{j}^{*n}\right| \leq \nu_{n}\right\}} \right|,  \\
\eta_{j}^{n, 3} &=\left|\bar{Y}_{j}^{* n}\right|^{2} \mathbbm{1}_{\left\{\left|\bar{Y}_{j}^{* n}\right|>\nu_{n}\right\}},\\
\eta_j^{n,4} & =\frac{1}{2} \left|\hat{Y}_{j}^{* n}\right| \mathbbm{1}_{\left\{\left|\bar{Y}_{j}^{* n}\right|>\nu_{n}\right\}}.
\end{aligned}
\end{equation}
When $X = X^{\prime}, \eta_j^{n,1} = 0$. When $r \in (0,2]$, by the proof of Lemma 2 in \cite{chen2018inference}\footnote{{Or follow the proof of Lemma 13.2.6 in \cite{JacodProtter}.}}, under (\ref{eq:J_estimates}), there is a sequence \(a_n \to  0 \) such that \footnote{The notation $\bar Y, \hat Y$ is slightly different in \cite{chen2018inference}, which results in the different form of the following inequality. The relation between \(\varpi\) in our setting and \(\rho \)  in \cite{chen2018inference} is \(\varpi/2 + 1/4 = \rho\).}
\begin{equation} \label{eq:truncation1}
E_j^{n}\left[\left|\eta_{j}^{n, 1}\right|\right] \leq C \Delta_{n}^{(\varpi + 1/2) - r\varpi/2} a_{n}.
\end{equation}
Next, 
 $\eta^{n,2}_j = 0 $ when $X = X'$. When $r \in (0,2]$, we first consider
\begin{equation*}
\begin{split}
\left|\widehat{Y}_{j}^{n}-\widehat{Y}_{j}^{* n}\right|&= \left|\sum_{h=1}^{k_{n}}\left(g\left(\frac{h}{k_{n}}\right)-g\left(\frac{h-1}{k_{n}}\right)\right)^{2} \left(\left(\Delta_{j+h-1}^{n} Y_{j}^{* n} +\Delta_{j+h-1}^{n} X^{\prime \prime}_j\right)^2 - \left(\Delta_{j+h-1}^n {Y}_{j}^{* n}\right)^2\right)\right|\\
& \leq   C \left( \hat{X}^{\prime \prime n}_j + \sqrt{\hat{Y}^{* n}_j \hat{X}^{\prime \prime n}_j} \right).
\end{split}
\end{equation*}
 If we set \(\mu,\sigma = 0\), we have  \(\E^n_j\left|\hat{X}^{\prime \prime n}_j \right|^q= \E^n_j \left|\hat{X}^n_j\right|^q \leq K_{q}\Delta_n^{q/2}\left(\Delta_{n}^{q}+ \Delta_{n}^{q \wedge 1}\right)\) from (\ref{eq:X_bar_hat_bound}) {for $q>0$}. {Combining} with  (\ref{eq:Y_star_bound}) and taking $q = 1$, we have
 \begin{equation*} \label{eq:Y_hat_diff}
E_{j}^{n}\left|\widehat{Y}_{j}^{n}-\widehat{Y}_{j}^{* n} \right|\leq C\left(\Delta_n^{3/2} + \sqrt{\Delta_n^{1/2}\Delta_n^{3/2}}  \right)
\leq C \Delta_n.
\end{equation*}
By the notation (\ref{eq:notation_bar_hat}) and Cauchy-Schwarz inequality, we have for arbitrary $\alpha >0 $, 
 \begin{equation}
 \begin{split}
 \eta^{n,2}_{\Blue j} & \leq \frac{1}{2}\left|\widehat{Y}_{j}^{n}-\widehat{Y}_{j}^{* n}\right| +\widehat{Y}_{j}^{n}  \mathbbm{1}_{\left\{\left|\bar{Y}_{j}^{n}\right|<\nu_{n} < \left|\bar{Y}_{j}^{*n}\right|\right\}} \\
 & \quad +\widehat{Y}_{j}^{*n}  \mathbbm{1}_{\left\{\nu_n/2< \left|\bar{Y}_{j}^{*n}\right|<\nu_{n} < \left|\bar{Y}_{j}^{n}\right|\right\}} + \widehat{Y}_{j}^{*n}  \mathbbm{1}_{\left\{2 \left|\bar{Y}_{j}^{*n}\right|<\nu_{n} < \left|\bar{Y}_{j}^{n}\right|\right\}} \\
 & \leq \frac{1}{2}\left|\widehat{Y}_{j}^{n}-\widehat{Y}_{j}^{* n}\right| +\widehat{Y}_{j}^{n} \frac{\left|\bar{Y}_{j}^{*n} \right|^{\alpha} }{\nu_n^{\alpha}} + \widehat{Y}_{j}^{*n} \frac{\left|\bar{Y}_{j}^{*n} \right|^{\alpha} }{(\nu_n/2)^{\alpha}} + \widehat{Y}_{j}^{*n} \left(\frac{\left|\bar{X}^{\prime \prime n}_j \right| }{\nu_n/2}\wedge 1 \right),
 \end{split}
 \end{equation} where the last term is because $2 \left|\bar{Y}_{j}^{*n}\right|<\nu_{n} < \left|\bar{Y}_{j}^{n}\right|$ implies $ \left|\bar{X}^{\prime \prime n}_j \right|> \nu_n/2$, therefore, $\frac{\left|\bar{X}^{\prime \prime n}_j \right| }{\nu_n/2}\wedge 1 >\mathbbm{1}_{\left\{2 \left|\bar{Y}_{j}^{*n}\right|<\nu_{n} < \left|\bar{Y}_{j}^{n}\right|\right\}}. $
 Applying Cauchy-Schwarz inequality and  (\ref{eq:J_estimates})-(\ref{eq:X_bar_hat_bound}), we have 
\begin{equation}
\begin{split}
 \eta^{n,2}_{\Blue j} &\leq  C\left(\Delta_n + \sqrt{\Delta_n\Delta_n^{\alpha(1/2-\varpi) }  } + \sqrt{\Delta_n2^{\alpha}\Delta_n^{\alpha(1/2-\varpi) }} + \sqrt{\Delta_n \Delta_n^{1/2-\varpi r/2} a_n}\right).
\end{split}
\end{equation}Since \(\varpi < 1/2\), there exists $\alpha$ such that $\alpha\left( 1/2 - \varpi\right)\ge 1/2 $. Then,  
\[
	E_{j}^{n}\left|\eta_{j}^{n, 2}\right| \leq C \left(\Delta_n + \sqrt{\Delta_n^{3/2- \varpi r/2}a_n}\right).
\] 
By Cauchy-Schwarz and Markov's inequality and (\ref{eq:Y_star_bound}), for an arbitrary positive number \(m\), \begin{equation*}
\begin{split}
E_{j}^{n}\left|\eta_{j}^{n, 3}\right|  & \leq \sqrt{E_{j}^{n}\left|\bar{Y}^{* n}_j\right|^{4} \mathbb{P}_{j}^{n}\left(\left|\bar{Y}_{j}^{* n}\right|>v_{n}\right)} \leq C\sqrt{\Delta_n E_{j}^{n}\left|\bar{Y}^{* n}_j\right|^{m}/v_n^m } \leq C \sqrt{\Delta_n^{1+m/4 - m\varpi/2}}.
\end{split}
\end{equation*} 
Similarly, there exists $m$ such that $ E_j^{n}\left|\eta_j^{n, 3}\right| \leq C \Delta_n $. 
 By the same argument, \begin{equation*}
    \begin{split}
E_j^{n}\left|\eta_j^{n, 4}\right|  & \leq C \sqrt{E_j^{n}\left|\hat{Y}^{* n}_j\right|^{2} \mathbb{P}_j^{n}\left(\left|\bar{Y}_j^{*n}\right|>v_{n}\right)} \leq C\sqrt{\Delta_n^{p} \frac{E_j^{n}\left|\bar{Y}^{ n}_j\right|^{m}}{v_n^m}} \leq C \sqrt{\Delta_n^{1 +m/4- m\varpi/2}} \leq C \Delta_n.
\end{split}
\end{equation*}
Now we have the following when $r \in (0,2]$ by combining the above inequalities,
\begin{equation*}
\begin{split}
&{\Green E}\left(|\hat{c}\left(k_{n}, {m}_{n},v_n, 2\right)_{\tau}  -  \hat{c}^{*}\left(k_{n}, {m}_{n}\right)_{\tau} \right)\\
&\leq C\frac{1}{\phi_{k_{n}}\left(g\right)} \sum_{j=1}^{n-k_n+1} K_{{m}_{n} \Delta_n}\left(t_{j-1} - \tau\right)\left[\Delta_{n}^{\varpi + 1/2 -r\varpi/2} a_{n} + \Delta_n + \sqrt{\Delta_n^{3/2- \varpi r/2}a_n}\right]\\
& \leq C \Delta_n^{1/2}\sum_{j=1}^{n-k_n+1} K_{{m}_{n} \Delta_n}\left(t_{j-1} - \tau\right)\left[\Delta_{n}^{\varpi + 1/2 -r\varpi/2} a_{n} +\Delta_n+\sqrt{\Delta_n^{3/2- \varpi r/2}a_n} \right]\\
& \leq C \Delta_n^{-1/2}\left[\Delta_{n}^{\varpi + 1/2 -r\varpi/2} a_{n} +\Delta_n +\sqrt{\Delta_n^{3/2- \varpi r/2}a_n} \right]\\
& \leq C \left[\Delta_{n}^{\varpi  -r\varpi/2} a_{n} +\sqrt{\Delta_n} +\Delta_n^{1/4- \varpi r/4} \sqrt{a_n} \right],
\end{split}
\end{equation*} 
by the property of the kernel. 
{\Blue Therefore, recalling (\ref{eq:m_n_range_noise})},  to obtain \begin{equation*}
\tilde{z}_n \left|\hat{c}\left(k_{n}, {m}_{n},v_n, 2\right)_{\tau}  -  \hat{c}^{*}\left(k_{n}, {m}_{n}\right)_{\tau} \right| \stackrel{\mathbb{P}}{\longrightarrow} 0,
\end{equation*}
 we need the condition (\ref{eq:w_condition_noise_2}). 
This concludes (\ref{eq:elimation_jump}) for $l = 2$. 

When $l =1$, $\hat{c}\left(k_{n}, {m}_{n},v_n, 1\right) - \hat{c}^{*}\left(k_{n}, {m}_{n}\right) $ has a similar decomposition like (\ref{eq:decomp_elimination_jump}) with $\eta^{n,2}_j  = \frac{1}{2}\left| \widehat{Y}_j^{n}-\widehat{Y}_j^{* n}\right|$ and  $\eta^{n,4}_j = 0 $. 
Therefore, 
\begin{equation*}
{\Green E}\left(|\hat{c}\left(k_{n}, {m}_{n},v_n, 1\right)_{\tau}  -  \hat{c}^{*}\left(k_{n}, {m}_{n}\right)_{\tau} \right) \leq C \left[\Delta_{n}^{\varpi  -r\varpi/2} a_{n} +\sqrt{\Delta_n}\right].
\end{equation*} 
In that case, the condition \begin{equation*}
\varpi \geq \frac{(a- \frac{1}{4})\wedge (1-(a- \frac{1}{4} ) ) - \frac{1}{4}}{2-r},
\end{equation*} gives (\ref{eq:elimation_jump}) for $l = 1$. Since $\varpi < 1/2$, we necessarily need that $r<\frac{5}{2}-2\left[\left(a-\frac{1}{4}\right) \wedge\left(1-a+\frac{1}{4}\right)\right]$. 

Finally, when $X=X'$, note that ${\Green E}\left(|\hat{c}\left(k_{n}, {m}_{n},v_n, l\right)_{\tau}  -  \hat{c}^{*}\left(k_{n}, {m}_{n}\right)_{\tau} \right) \leq C \sqrt{\Delta_n}$,  for $l = 1,2 $, since $\eta^{n,1}_j = \eta^{n,2}_j = 0$.
Therefore, we can conclude (\ref{eq:elimation_jump}) {\Blue in all the cases stated in the statement of the lemma.}
\end{proof}

\subsection{Proof of stable convergence in law for continuous process}
With the lemma above, {it suffices to} prove Theorem \ref{thm} for the non-truncated estimator (\ref{eq:non-truncated_noise}) under \thref{X_continuous}. We first introduce some notations needed for the proofs. Then, we recall some needed estimates and preliminary results. Finally, we proceed to prove the result through three lemmas.

\subsubsection*{Needed Notation}
\begin{enumerate}

\item Define
\begin{equation} \label{eq:notation}
\begin{aligned}
    &{\phi(Y)_{i}^{n} = (\overline{Y}_i^n )^2 - \frac{1}{2}\hat{Y}_i^n = (\overline{X}_i^n + \overline{\epsilon}_i^n )^2 - \frac{1}{2}\hat{Y}_i^n},\\
&\phi_{i, j}^{n} = (\sigma_{(i-j-1)\Delta_n}\overline{W}_i^n + \overline{\epsilon}_i^n )^2 - \frac{1}{2}\hat{\epsilon}_i^n, \\
&\Psi_{i,j}^n = \E( \phi_{i,j}^n|\mathcal{H}_{(i-1)\Delta_n}) - (\sigma_{(i-j-1)\Delta_n}\overline{W}_i^n)^2.
\end{aligned}
\end{equation}
\item With any process U, we associate the variables
\begin{equation*}
    \begin{aligned} \Gamma(U)_{i}^{n} &=\sup _{t \in\left[(i-1) \Delta_{n}, i \Delta_{n}+k_n \Delta_n\right]}|U_{t}-U_{(i-1) \Delta_{n}}| ,\\ 
    \Gamma^{\prime}(U)_{i}^{n} &=\left(\mathbb{E}\left( \left.\left(\Gamma(U)_{i}^{n}\right)^{4} \right| \mathcal{F}_{(i-1) \Delta_{n}}\right)\right)^{1 / 4}.\end{aligned}
\end{equation*}
\end{enumerate}

The following decomposition will be instrumental to deduce the behavior of the estimation error:
\begin{equation*}
	\hat{c}\left(k_{n}, {m}_{n}\right)_{\tau}-c_{\tau}
    = \sum_{l=1}^{5} \overline{H}(l)^{n},
\end{equation*}
where
\begin{equation*}
    \begin{aligned}
        \overline{H}(1)^{n}&=\frac{1}{\phi_{k_n}(g)} \sum_{i=1}^{n-k_{n}+1}K_{{m}_{n} \Delta_n}(t_{i-1}-\tau)\left(\phi_{i, 0}^{n} - \E \left(\left.\phi_{i, 0}^{n}\right| \mathcal{F}_{(i-1) \Delta_{n}}\right)\right),\\
        \overline{H}(2)^{n}&=\int_{0}^{T} K_{{m}_{n} \Delta_n}(t-\tau) c_{t} d t- c_{\tau},\\
        \overline{H}(3)^{n}&=\frac{1}{\phi_{k_n}(g)} \sum_{i=1}^{n-k_{n}+1}K_{{m}_{n} \Delta_n}(t_{i-1}-\tau)\left({\phi(Y)_{i}^{n}} - \phi_{i, 0}^{n}\right),\\
        \overline{H}(4)^{n}&=\frac{1}{\phi_{k_n}(g)} \sum_{i=1}^{n-k_{n}+1}K_{{m}_{n} \Delta_n}(t_{i-1}-\tau) \E \left(\left.\Psi_{i, 0}^{n}\right| \mathcal{F}_{(i-1) \Delta_{n}}\right),\\
        \overline{H}(5)^{n}&=\Delta_n \sum_{i=1}^{n-k_{n}+1}K_{{m}_{n} \Delta_n}(t_{i-1}-\tau)  {c_{t_{i-1}}}- \int_{0}^{T} K_{{m}_{n} \Delta_n}(t-\tau) c_{t} d t.
\end{aligned}
\end{equation*}
The 1st term is the statistical error, while the 2nd term is the local approximation error. Each of these will contribute one term to the asymptotic variance in (\ref{CLTTCs}-i). Up to a negligible term, which is analyzed in $\bar{H}(4)^{n}$, the third term is obtained by freezing the volatility $\sigma$ in $\bar{X}_{i}^{n}$ to the value $\sigma_{(i-1)\Delta_{n}}$. The last term analyzes the error due to approximating the integral by its associated Riemann sum.

\subsubsection*{{Some Preliminary Estimates and Results}}

\begin{enumerate}
\item  By Lemma 16.5.14 from \cite{JacodProtter}, for some constant $C$,
    \begin{equation}\label{eq:bound1}
        \left|\E\left(\left.\phi(Y)_{i}^{n}-\phi_{i, 0}^{n} \right| \mathcal{F}_{(i-1) \Delta_{n}}\right)\right| \leq C \Delta_n^{3/4}\left(\Delta_n^{1/4} +\Gamma^{\prime}(\mu)_{i}^{n}+\Gamma^{\prime}(\widetilde{\sigma})_{i}^{n}+\Gamma^{\prime}(\gamma)_{i}^{n}\right).
    \end{equation}
\item  \label{lis:AN}As in Lemma 16.5.15 in \cite{JacodProtter}, if an array $\left(\delta_{i}^{n}\right) $ satisfies \begin{equation} \label{eq: AN}
        0 \leq \delta_{i}^{n} \leq K, \quad \Delta_{n} \mathbb{E}\left(\sum_{i=1}^{n} \delta_{i}^{n}\right) \rightarrow 0 ,
    \end{equation}
    then, for any $q >0$, the array $\left(\left|\delta_{i}^{n}\right|^{q}\right)$ also satisfies (\ref{eq: AN}).
    Furthermore, if U is a c\`adl\`ag bounded process, the two arrays $\left(\Gamma(U)_{i}^{n}\right)$ and $ \left(\Gamma^{\prime}(U)_{i}^{n}\right)$ also satisfy (\ref{eq: AN}).
    
\item  Under \thref{noise} and (\ref{eq:sigma}), by Lemma 16.5.13 in \cite{JacodProtter}, we have for all $q >0$, 
    \begin{equation} \label{eq:phi_bound}
\mathbb{E}\left(\left. \left|\phi\left( Y\right)_{i}^{n}\right|^{q}+\left|\phi_{i, 0}^{n}\right|^{q} \right| \mathcal{F}_{(i-1) \Delta_{n}}\right) \leq C_{q}\Delta_n^{q/2},
\end{equation}
\begin{equation} \label{eq:phi2}
    \mathbb{E}\left(\left.\left|\phi\left(Y\right)_{i}^{n}-\phi_{i, 0}^{n}\right|^{2} \right| \mathcal{F}_{(i-1) \Delta_{n}}\right) \leq C\Delta_n \left(\Delta_n^{1/2}+\left(\Gamma^{\prime}(\sigma)_{i}^{n}\right)^{2}\right).
\end{equation}
Similarly, we can obtain 
\begin{equation} \label{eq: phi_i_j}
\mathbb{E}\left(\left. \left|\phi_{i, j}^{n}\right|^{q} \right| \mathcal{F}_{(i-j-1) \Delta_{n}}\right) \leq C_{q}\Delta_n^{q/2},
\end{equation}
since $\sigma $ is bounded.
\item Let $\gamma_{t}^{\prime}=\mathbb{E}\left(\left.\left|\epsilon_{t}\right|^{3} \right| \mathcal{F}^{(0)}\right)$. Under \thref{noise} and (\ref{eq:sigma}), by Lemma 16.5.12 in \cite{JacodProtter},  $\Psi_{i,j}^n$ defined in  (\ref{eq:notation}) is such that  
    \begin{equation} \label{eq:psi}
    \mathbb{E}\left(\left.   \left|\Psi_{i, j}^{n}\right| \right| \mathcal{F}_{(i-1) \Delta_{n}}\right) \leq C \Delta_n + C \Delta_n ^{3/4}\left(\Gamma^{\prime}(\gamma)_{i}^{n}+\Gamma^{\prime}\left(\gamma^{\prime}\right)_{i}^{n}\right),
    \end{equation}
\begin{equation} \label{eq:psi2}
\mathbb{E}\left(\left.\left|\Psi_{i, j}^{n}\right|^{2} \right| \mathcal{F}_{(i-1) \Delta_{n}}\right) \leq C \Delta_n^{3/2}.
\end{equation}

\item As {$n \rightarrow \infty$ so that ${m}_{n} \rightarrow \infty$ and ${m}_{n} \Delta_n \rightarrow 0$},
\begin{equation} \label{eq:K2}
    \begin{aligned}
         {m}_{n} \Delta_n^2 \sum  K _{{m}_{n} \Delta_n}^2(t_{i-1} - \tau) \rightarrow \int K^2(x) d x,\\
         \Delta_n \sum_{i=j}^{n}\left|K_{{m}_{n} \Delta_n}\left(t_{i-1}-\tau\right)\right| \rightarrow \int|K(x)| d x.
    \end{aligned}
    \end{equation}
\item {By It\^o's} Lemma and Burkholder-Davis-Gundy inequalities (see Section 2.1.5 in \cite{JacodProtter}),   we have for all $s,t \geq 0$ and $p \geq 2$\begin{equation} \label{eq:sigma2_bound}
\mathbb{E}\left(\left. \sup _{r \in[0, s]}\left|\sigma_{t+r}^2-\sigma_{t}^2\right|^{p} \right| \mathcal{F}_{t}\right) \leq C_{p} s,
\end{equation}
\begin{equation} \label{eq:sigma_bound}
\mathbb{E}\left(\left.\sup _{r \in[0, s]}\left|\sigma_{t+r}-\sigma_{t}\right|^{p} \right| \mathcal{F}_{t}\right) \leq C_{p} s.
\end{equation}
{Also, we have}
\begin{equation} \label{eq:Gamma'}
\Gamma^{\prime}(\sigma)_i^n  = \left(\mathbb{E}\left(\left. \sup _{t \in\left[(i-1) \Delta_{n}, i \Delta_{n}+{k_n \Delta_n}\right]}\left|\sigma_{t}-\sigma_{(i-1) \Delta_{n}}\right|^{4} \right| \mathcal{F}_{(i-1) \Delta_{n}}\right)\right)^{1 / 4} \leq C \left(k_n\Delta_n\right)^{1/4}.
\end{equation}
\end{enumerate}

Theorem \ref{thm} will then follow from the following lemmas:
\begin{lemma} \thlabel{part1}
Under \thref{kernel,noise,X_continuous}, with ${m}_{n}\to\infty$, ${m}_{n} \Delta_n \rightarrow 0$, and ${m}_{n}\sqrt{\Delta_n} \rightarrow \infty $, we have
$${m}_{n}^{1/2}\Delta_n^{1/4}\overline{H}(1)^n \stackrel{st}{\longrightarrow} Z_{\tau} ,$$ 
where $Z_{\tau}$ is described in Theorem \ref{thm}. 
\end{lemma}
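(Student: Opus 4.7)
The plan is to apply the stable CLT for triangular arrays of discrete martingale differences (Theorem 2.2.15 in \cite{JacodProtter}) after rewriting $\overline{H}(1)^n$ as a martingale with respect to the finest filtration $(\mathcal{F}_{l\Delta_n})_l$. The centered summand
$\xi_i^n:=K_{m_n\Delta_n}(t_{i-1}-\tau)\bigl(\phi_{i,0}^n - \mathbb{E}[\phi_{i,0}^n\mid\mathcal{F}_{(i-1)\Delta_n}]\bigr)/\phi_{k_n}(g)$
involves the $W$- and $\epsilon$-information on $[(i-1)\Delta_n,(i+k_n-1)\Delta_n]$, so consecutive summands overlap in $k_n-1$ underlying increments. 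To dissolve this overlap I would expand each of $(\overline{W}_i^n)^2$, $\overline{W}_i^n\overline{\epsilon}_i^n$, and $(\overline{\epsilon}_i^n)^2-\tfrac12\hat{\epsilon}_i^n$ via It\^o-type discrete martingale representations, after which swapping the order of summation yields $\sum_i\xi_i^n = \sum_l\chi_l^n$, where each $\chi_l^n$ is $\mathcal{F}_{l\Delta_n}$-measurable with $\mathbb{E}[\chi_l^n\mid\mathcal{F}_{(l-1)\Delta_n}]=0$, and is bilinear in the ``current'' increments $\Delta_l^n W$, $\epsilon_{(l-1)\Delta_n}$, $\epsilon_{l\Delta_n}$ against $\mathcal{F}_{(l-1)\Delta_n}$-measurable aggregated weights, hence a genuine martingale difference.

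The second step is to show
\begin{equation*}
m_n\Delta_n^{1/2}\,\sum_l \mathbb{E}\bigl[(\chi_l^n)^2\bigm|\mathcal{F}_{(l-1)\Delta_n}\bigr]\stackrel{\mathbb{P}}{\longrightarrow}\delta_1^2(\tau).
\end{equation*}
Using $\phi_{k_n}(g)\sim k_n\sim 1/(\theta\sqrt{\Delta_n})$ from (\ref{eq:phi_g}) and (\ref{AsympCndkn}), together with $k_n\Delta_n\sim\sqrt{\Delta_n}\ll m_n\Delta_n$, the kernel $K_{m_n\Delta_n}(t_{i-1}-\tau)$ is essentially constant over any block of $k_n$ neighbors, and the c\`adl\`ag boundedness of $c$ and $\gamma$ permits replacing $\sigma_{(i-1)\Delta_n}^2$ and $\gamma_{(i-1)\Delta_n}$ by $c_\tau$ and $\gamma_\tau$ at leading order. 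The outer Riemann sum then gives $\Delta_n\sum_i K_{m_n\Delta_n}^2(t_{i-1}-\tau)\to(m_n\Delta_n)^{-1}\int K^2$, while inside each block the three squared pieces in $\phi_{i,0}^n$---pure $W$-squared, cross $W\cdot\epsilon$, and noise after de-biasing---assemble into the three contributions $4\Phi_{22}c_\tau^2/\theta$, $8\Phi_{12}c_\tau\gamma_\tau\theta$ and $4\Phi_{11}\gamma_\tau^2\theta^3$ of $\delta_1^2(\tau)$. These arise through the discrete Riemann convolutions $\frac{1}{k_n}\sum_j g(j/k_n)g((j+s)/k_n)\to\phi_2(s/k_n)$ and its $g'$-analogue, followed by integration in the lag $s$.

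The remaining hypotheses of Theorem 2.2.15 are verified as follows. The Lyapunov condition $\sum_l\mathbb{E}[(m_n^{1/2}\Delta_n^{1/4}\chi_l^n)^4\mid\mathcal{F}_{(l-1)\Delta_n}]\to 0$ follows from the moment bounds (\ref{eq:phi_bound})--(\ref{eq:epsilon_bound}) together with $m_n\Delta_n\to 0$. The orthogonality $\sum_l\mathbb{E}[\chi_l^n\,\Delta_l^n M\mid\mathcal{F}_{(l-1)\Delta_n}]\to 0$ for $M\in\{W,B\}$ and for every bounded $(\mathcal{F}^{(0)}_t)$-martingale $M$ orthogonal to $(W,B)$ is immediate for $M=B$ and for $M$ orthogonal to $(W,B)$, since $\chi_l^n$ admits an It\^o representation in $W$ only and $\epsilon$ is conditionally independent of $\mathcal{F}^{(0)}$; for $M=W$ an odd-vs-even parity argument on Gaussian increments, analogous to the one used in the proof of \thref{st_no_noise}, shows that the contribution is negligible up to a $\rho$-cross-term of size $O(\Delta_n)$ that sums to zero. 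The principal obstacle is the variance bookkeeping in the second step: across lags $|i-i'|\le k_n$ and across the three types of pieces, one must correctly identify the six pairwise discrete convolution sums with the integrals $\phi_1,\phi_2$ (and ultimately with $\Phi_{11},\Phi_{12},\Phi_{22}$), while carefully juggling the three scales $\Delta_n$, $k_n\Delta_n$, and $m_n\Delta_n$.
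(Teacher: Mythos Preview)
Your proposal takes a genuinely different route from the paper. The paper does \emph{not} decompose $\overline{H}(1)^n$ into martingale increments with respect to the fine filtration $(\mathcal{F}_{l\Delta_n})_l$. Instead it uses the big-block/small-block technique of Jacod--Protter: for an auxiliary integer $m\ge 1$, the index set is partitioned into ``big blocks'' of length $mk_n$ separated by ``small blocks'' of length $k_n$. The big-block sums are martingale differences with respect to the coarse filtration $\mathcal{G}_i^n=\mathcal{F}_{(I(m,n,i+1)-1)\Delta_n}$, and the small blocks are shown to be negligible as $m\to\infty$. Inside each big block, the paper further freezes $\sigma$ at the block's left endpoint (replacing $\phi_{i,0}^n$ by $\phi_{i,r}^n$) and identifies the conditional variance via Lemma~16.3.9 of \cite{JacodProtter}, which packages the convolution of the weight function $g$ against the auxiliary Gaussian processes $L(g),L'(g)$. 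The final answer emerges from the double limit $n\to\infty$ then $m\to\infty$. The advantage is that the $\Phi_{ij}$ bookkeeping is delegated to an existing lemma; your direct approach trades this for explicit discrete-convolution computations but avoids the double limit and the chain of intermediate approximation lemmas.

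There is one concrete gap in your orthogonality argument. After your fine-scale decomposition, the $(\overline W_i^n)^2$-piece of $\chi_l^n$ contains a cross term of the form $\beta_l\,\Delta_l^n W$ with $\beta_l$ a $\mathcal{F}_{(l-1)\Delta_n}$-measurable linear combination of past $W$-increments. For $M=W$ this gives $\mathbb{E}[\beta_l(\Delta_l^n W)^2\mid\mathcal{F}_{(l-1)\Delta_n}]=\beta_l\Delta_n$, which is \emph{not} killed by parity, and your reference to a ``$\rho$-cross-term'' is misplaced here since $\rho$ governs the $W$--$B$ bracket, not $W$ with itself. Likewise, for $M=B$ the claim ``immediate since $\chi_l^n$ admits an It\^o representation in $W$ only'' is incorrect: the same cross term yields $\beta_l\,\mathbb{E}[\Delta_l^n W\,\Delta_l^n B\mid\mathcal{F}_{(l-1)\Delta_n}]=\beta_l\int_{t_{l-1}}^{t_l}\rho_s\,ds$, again nonzero. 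Both sums \emph{can} be shown to vanish, but via an $L^2$ estimate rather than parity: writing $\sum_l\beta_l=\sum_{l'}(\sum_{l>l'}C_{l,l'})\Delta_{l'}^nW$ and using $|\sum_{l>l'}C_{l,l'}|\le Ck_n|K_{m_n\Delta_n}(t_{l'-1}-\tau)|$, one gets $\|m_n^{1/2}\Delta_n^{1/4}\Delta_n\sum_l\beta_l\|_2=O(\Delta_n^{1/4})\to 0$. You should replace the parity claim by this estimate.
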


\begin{lemma} \thlabel{part2}
 Under \thref{kernel,noise,X_continuous},  with ${m}_{n}\to\infty$, ${m}_{n} \Delta_n \rightarrow 0$ and ${m}_{n} \Delta_n^{3/4} \rightarrow \beta \in [0,\infty) $, \begin{equation*}
{m}_{n}^{1 / 2} \Delta_{n}^{1 / 4} \left(\overline{H}(1)^n+ \overline{H}(2)^n\right) \stackrel{st}{\longrightarrow} Z_{\tau}+ \beta Z_{\tau}^{\prime},
\end{equation*}
where  $Z_{\tau}^{\prime} $ is described in Theorem \ref{thm}. 
\end{lemma}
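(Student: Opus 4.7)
The plan is to combine Lemma \thref{part1} with a separate convergence statement for $\overline{H}(2)^n$. Since Lemma \thref{part1} already gives $m_n^{1/2}\Delta_n^{1/4}\overline{H}(1)^n\stackrel{st}{\longrightarrow}Z_\tau$, the task reduces to showing $m_n^{1/2}\Delta_n^{1/4}\overline{H}(2)^n\stackrel{st}{\longrightarrow}\beta Z'_\tau$ jointly with the previous limit. With $b_n:=m_n\Delta_n$, the scaling check $(m_n^{1/2}\Delta_n^{1/4})^2 b_n=(m_n\Delta_n^{3/4})^2\to\beta^2$ matches the conditional variance $\beta^2\tilde{\sigma}_\tau^2\int L^2$ of $\beta Z'_\tau$, so the target is quantitatively consistent.

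My first step is to split off an edge contribution by writing $\overline{H}(2)^n=\int_0^T K_{b_n}(t-\tau)(c_t-c_\tau)\,dt+c_\tau\bigl(\int_0^T K_{b_n}(t-\tau)\,dt-1\bigr)$, with the second piece negligible after scaling because of the tail decay $K(x)x^2\to 0$ from Assumption \thref{kernel}. I would then plug in the It\^o expansion of $c_t-c_\tau$ from (\ref{eq:sigma}) and apply stochastic Fubini to obtain the decomposition $D_n+\int_0^T L_n(s)\tilde{\sigma}_s\,dB_s+\int_0^T L_n(s)\,d\tilde{M}_s$, where $\tilde{M}:=\tilde{\delta}*(\mathfrak{p}-\mathfrak{q})$ and $L_n(s)=\int_s^T K_{b_n}(t-\tau)\,dt\cdot\mathbf{1}_{s>\tau}-\int_0^s K_{b_n}(t-\tau)\,dt\cdot\mathbf{1}_{s\leq\tau}$ satisfies $L_n(s)\to L((s-\tau)/b_n)$ pointwise. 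The drift piece $D_n$ is of order $b_n$, so $m_n^{1/2}\Delta_n^{1/4}|D_n|=O((m_n\Delta_n^{3/4})^{3/2}\Delta_n^{1/8})=o(1)$. For the jump piece $\int L_n\,d\tilde{M}$, I would truncate $\tilde{\delta}$ at a level $\epsilon$: large jumps contribute nothing because the expected number of such jumps in an $O(b_n)$-window around $\tau$ is $O(b_n)\to 0$, while the small-jump contribution has scaled variance bounded by $C\beta^2\int L^2\cdot\int_{\Lambda\leq\epsilon}\Lambda^2\,d\lambda$, which tends to $0$ with $\epsilon$ by dominated convergence using $\int\Lambda^2\,d\lambda<\infty$.

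For the dominant Brownian term $\int L_n(s)\tilde{\sigma}_s\,dB_s$, I would replace $\tilde{\sigma}_s$ by $\tilde{\sigma}_\tau$ at a cost that is $o(\sqrt{b_n})$, using the c\`adl\`ag property of $\tilde{\sigma}$ and dominated convergence in the It\^o isometry. The stable convergence of $m_n^{1/2}\Delta_n^{1/4}\tilde{\sigma}_\tau\int L_n(s)\,dB_s$ to $\beta\tilde{\sigma}_\tau V'$, with $V'$ a standard Gaussian living on the very good extension and independent of $\mathcal{F}$, then follows from the CLT for Brownian stochastic integrals (Theorem 2.2.15 in \cite{JacodProtter}), once one verifies the bracket convergence $(m_n^{1/2}\Delta_n^{1/4})^2\int L_n(s)^2\,ds\to\beta^2\int L(u)^2\,du$ and the orthogonality with bounded martingales on the extension. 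This yields $m_n^{1/2}\Delta_n^{1/4}\overline{H}(2)^n\stackrel{st}{\longrightarrow}\beta Z'_\tau$.

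The main technical obstacle is obtaining \emph{joint} stable convergence with the limit from Lemma \thref{part1}, which is what allows summing them into $Z_\tau+\beta Z'_\tau$. The two statistics fluctuate on well-separated time scales: the pre-averaging scale $k_n\Delta_n\sim\sqrt{\Delta_n}$ for $\overline{H}(1)^n$ versus the much larger bandwidth scale $b_n\gg\sqrt{\Delta_n}$ for $\overline{H}(2)^n$, which is ultimately what forces conditionally independent Gaussian limits. I would confirm this by a Cram\'er-Wold reduction applied to $\alpha_1 m_n^{1/2}\Delta_n^{1/4}\overline{H}(1)^n+\alpha_2 m_n^{1/2}\Delta_n^{1/4}\overline{H}(2)^n$ and verifying the bracket conditions of Theorem 2.2.15 in \cite{JacodProtter}; the cross-bracket between a generic $\overline{H}(1)^n$ increment and a typical $B$-increment appearing in $\overline{H}(2)^n$ scales as $\rho\sqrt{\Delta_n}\cdot b_n$, giving $O(\beta^2\Delta_n^{1/4})\to 0$ in the joint bracket, which is precisely what produces the desired conditional independence of $Z_\tau$ and $Z'_\tau$ in the limit.
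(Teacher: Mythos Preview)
Your reduction of $\overline{H}(2)^n$ to the Brownian integral $\tilde{\sigma}_\tau\int L_n(s)\,dB_s$ (after discarding the tail, drift, and jump contributions) is correct and is exactly how the paper proceeds; it cites \cite{FigLi} for this step rather than redoing the stochastic Fubini argument, but the content is the same.

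The gap is in your joint convergence step. You propose to apply Theorem~2.2.15 of \cite{JacodProtter} directly to a linear combination $\alpha_1\overline{H}(1)^n+\alpha_2\overline{H}(2)^n$, but $\overline{H}(1)^n=\sum_{i}\zeta_i^n$ is \emph{not} a sum of martingale differences: the summands overlap by $k_n$ indices, which is precisely why the proof of Lemma~\ref{part1} had to pass through the big-block approximation $M^n(m)=\sum_{\ell}\eta(m)_\ell^n$. To obtain the joint limit, the paper re-enters that machinery: it chops the stochastic integral $\int L_n\,dB$ into pieces $\alpha(m)_\ell^n$ indexed by the \emph{same} big blocks, so that $(\zeta(m)_\ell^n+\alpha(m)_\ell^n)$ forms a genuine martingale difference array with respect to the common filtration $\mathcal{G}_\ell^n$, and only then applies Theorem~2.2.15. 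Your sketch skips this alignment. Moreover, your cross-bracket heuristic ``$\rho\sqrt{\Delta_n}\cdot b_n$'' is too coarse: the actual vanishing of $\sum_\ell\E(\zeta(m)_\ell^n\alpha(m)_\ell^n\mid\mathcal{G}_{\ell-1}^n)$ in the paper uses the decomposition $\phi_{I_\ell+r,r}^n=\sigma_{t(\ell)}^2(\overline{W}_{I_\ell+r}^n)^2+\Psi_{I_\ell+r,r}^n$; the $(\overline{W})^2$ piece against $\int L\,dB$ vanishes \emph{exactly} by an odd-moment argument (after It\^o's formula, $\E[U_{i,r}^s\mid\mathcal{G}_{\ell-1}]=0$), and the $\Psi$ piece is controlled by Cauchy--Schwarz and the bound (\ref{eq:psi2}). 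These structural cancellations, not a simple scale separation, are what drive the conditional independence of $Z_\tau$ and $Z'_\tau$.
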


\begin{lemma} \thlabel{AN_lemma}
Under  \thref{kernel,noise,X_continuous},  assuming  ${m}_{n}\Delta_n^{3/4} \rightarrow \beta \in [0,\infty] $, we have
\begin{equation}\label{NeglTrms}
	z_n\overline{H}(l)^{n} \stackrel{\mathrm{P}}{\longrightarrow} 0 \text { for } l=3,4,5,
\end{equation}
where 
\[
	z_n = \begin{cases}
      {m}_{n}^{1/2}\Delta_n^{1/4} & \text{ if }  {m}_{n}\Delta_n^{3/4} \to \beta < \infty,\\
      \frac{1}{\sqrt{{m}_{n}\Delta_n}} & \text{ if }   {m}_{n}\Delta_n^{3/4} \to \beta = \infty.
    \end{cases} 
 \]
\end{lemma}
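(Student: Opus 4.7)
\textbf{Proof plan for Lemma \ref{AN_lemma}.}
The strategy is to establish, for each $l\in\{3,4,5\}$, an $L^1$ or $L^2$ bound on $\overline{H}(l)^{n}$ which, after multiplication by $z_n$, vanishes in both regimes $\beta<\infty$ and $\beta=\infty$. The repeated inputs are: $1/\phi_{k_n}(g)\sim\theta\sqrt{\Delta_n}$ from \eqref{eq:phi_g}; the Riemann approximations $\Delta_n\sum_i |K_{m_n\Delta_n}(t_{i-1}-\tau)|\to\int |K|$ and $m_n\Delta_n^2\sum_i K_{m_n\Delta_n}^2(t_{i-1}-\tau)\to\int K^2$ from \eqref{eq:K2}; the estimates \eqref{eq:bound1}, \eqref{eq:phi2}, \eqref{eq:psi} on the building blocks; and the property \eqref{eq: AN} applied to $(\Gamma'(U)_i^n)^2$ for $U\in\{\mu,\tilde\sigma,\gamma,\gamma'\}$.

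For $\overline{H}(5)^{n}$, which is pathwise deterministic given $c$, I will write
\[
\overline{H}(5)^{n}=\sum_i\int_{t_{i-1}}^{t_i}\!\bigl[K_{m_n\Delta_n}(t_{i-1}-\tau)(c_{t_{i-1}}-c_s)+(K_{m_n\Delta_n}(t_{i-1}-\tau)-K_{m_n\Delta_n}(s-\tau))c_s\bigr]ds,
\]
and bound the two pieces separately: the first by the semimartingale estimate $\mathbb{E}|c_s-c_{t_{i-1}}|\le C\sqrt{\Delta_n}$ derived from \eqref{eq:sigma}, giving $O(\sqrt{\Delta_n})$; the second by the sharper bound $\sum_i\int_{t_{i-1}}^{t_i}|K_{m_n\Delta_n}(t_{i-1}-\tau)-K_{m_n\Delta_n}(s-\tau)|ds\le (\Delta_n/b_n)\int|K'|$ using \thref{kernel}(3)(iii), giving $O(\Delta_n/b_n)$. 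Since $m_n\sqrt{\Delta_n}\to\infty$ and $m_n\Delta_n\to 0$, the combined bound times $z_n$ tends to zero in both regimes.

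For $\overline{H}(4)^{n}$, I will take the $L^1$ norm and invoke \eqref{eq:psi}. The leading $C\Delta_n$ term contributes $O(\sqrt{\Delta_n})$ after pairing $1/\phi_{k_n}(g)$ with $\sum_i|K_{m_n\Delta_n}|$, while the $C\Delta_n^{3/4}\Gamma'(U)_i^n$ terms are controlled via Cauchy--Schwarz,
\[
\sum_i|K_{m_n\Delta_n}(t_{i-1}-\tau)|\,\Gamma'(U)_i^n\le\Bigl(\sum_i K_{m_n\Delta_n}^2(t_{i-1}-\tau)\Bigr)^{1/2}\Bigl(\sum_i\bigl(\Gamma'(U)_i^n\bigr)^2\Bigr)^{1/2},
\]
where the first factor is $O((m_n\Delta_n^2)^{-1/2})$ by \eqref{eq:K2} and the second is $o(\Delta_n^{-1/2})$ by applying property \eqref{eq: AN} to $(\Gamma'(U)_i^n)^2$. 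The final bound, multiplied by $z_n$, vanishes in both regimes, the key arithmetic being $m_n\Delta_n^{3/4}\to\beta\in[0,\infty]$.

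For $\overline{H}(3)^{n}$, the most delicate term, I will decompose $\overline{H}(3)^{n}=A_n+M_n$ where $A_n$ collects the $\mathcal{F}_{(i-1)\Delta_n}$-conditional expectations of $\phi(Y)_i^n-\phi_{i,0}^n$ and $M_n$ is the centered remainder. $A_n$ is handled exactly as $\overline{H}(4)^{n}$, substituting \eqref{eq:bound1} for \eqref{eq:psi}. The main obstacle is $M_n$: consecutive centered increments share up to $k_n-1$ underlying observations, so they are not orthogonal. I will resolve this by the standard $k_n$-block martingale trick: decomposing $\{1,\ldots,n-k_n+1\}$ into the $k_n$ residue classes modulo $k_n$, each class forms a martingale difference sequence with respect to $(\mathcal{F}_{(jk_n+r-1)\Delta_n})_j$, so Cauchy--Schwarz across classes yields
\[
\mathbb{E}\,M_n^2\le\frac{k_n}{\phi_{k_n}^{\,2}(g)}\sum_i K_{m_n\Delta_n}^{\,2}(t_{i-1}-\tau)\,\mathbb{E}\bigl|\phi(Y)_i^n-\phi_{i,0}^n\bigr|^2.
\]
Plugging \eqref{eq:phi2} and the estimate $(\Gamma'(\sigma)_i^n)^2\le C(k_n\Delta_n)^{1/2}\le C\Delta_n^{1/4}$ from \eqref{eq:Gamma'}, together with $k_n/\phi_{k_n}^{\,2}(g)\sim\sqrt{\Delta_n}$ and $\sum K^2\sim (m_n\Delta_n^2)^{-1}\int K^2$, gives $\mathbb{E}\,M_n^2=O(\Delta_n^{-1/4}/m_n)$. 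Then $z_n^2\,\mathbb{E}\,M_n^2\to 0$ in both regimes; in the hardest case $\beta=\infty$ this amounts to $m_n^2\Delta_n^{5/4}\to\infty$, which follows from $m_n\Delta_n^{3/4}\to\infty$ combined with $m_n\sqrt{\Delta_n}\to\infty$. The principal difficulty is this block-martingale estimate for $M_n$, where one must simultaneously track the $k_n$-inflation factor, the kernel-weighted $\sum K^2$, and the $(\Gamma'(\sigma))^2$ contribution in \eqref{eq:phi2} to see that the powers of $\Delta_n$ and $m_n$ line up favorably.
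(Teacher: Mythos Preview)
Your treatment of $l=4$ matches the paper's: a direct $L^1$ bound via \eqref{eq:psi}, with the Cauchy--Schwarz step against $\sum(\Gamma'(U)_i^n)^2$ and property \eqref{eq: AN}.

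For $l=5$ your bounds are sharper than the paper's (your $\int|K'|$ argument gives $O(\Delta_n/b_n)$ whereas the paper uses the cruder pointwise Lipschitz bound and ends at $O_P(1/(m_n\sqrt{\Delta_n}))$), but you have overlooked a boundary piece: the Riemann sum in $\overline{H}(5)^n$ runs only to $i=n-k_n+1$ while the integral runs to $T$, so your decomposition omits $\int_{(n-k_n+1)\Delta_n}^{T}K_{m_n\Delta_n}(t-\tau)c_t\,dt$. This is easily controlled by $Ck_n\Delta_n/(m_n\Delta_n)\sim C/(m_n\sqrt{\Delta_n})$, and one checks $z_n/(m_n\sqrt{\Delta_n})\to 0$ in both regimes (in the case $\beta=\infty$ this reduces to $m_n^{3/2}\Delta_n\to\infty$, which follows from $m_n\Delta_n^{3/4}\to\infty$). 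So the gap is minor but should be mentioned.

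For $l=3$ your route differs from the paper's. The paper handles $\overline{H}(3)^n$ in one stroke, invoking \eqref{eq:bound1} to bound $\E(|\zeta_i^n|\mid\mathcal{F}_{(i-1)\Delta_n})$ and concluding via the same Cauchy--Schwarz/\eqref{eq: AN} device as for $l=4$; no centering or martingale structure is used. You instead separate the conditional-mean part $A_n$ (handled exactly as the paper does the whole term) from the centered remainder $M_n$, and control $M_n$ in $L^2$ by the $k_n$-residue-class martingale decomposition together with \eqref{eq:phi2} and \eqref{eq:Gamma'}. Your approach costs more bookkeeping but is arguably cleaner: the estimate \eqref{eq:bound1} as stated bounds $|\E(\phi(Y)_i^n-\phi_{i,0}^n\mid\mathcal{F}_{(i-1)\Delta_n})|$, not the conditional expectation of the absolute value, so the paper's one-line $L^1$ argument tacitly uses more than \eqref{eq:bound1} provides. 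Your block-martingale computation makes the overlapping-increments issue explicit and yields an honest $L^2$ rate $\E M_n^2=O(\Delta_n^{-1/4}/m_n)$, at the price of tracking the extra $k_n$ factor from Cauchy--Schwarz across classes.
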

{We prove the lemmas above in three steps. In Step 1, we start to prove the last lemma which is more straightforward than the other two. In Step 2, we prove Lemma \ref{part1}. In Step 3, we show {Lemma} \ref{part2}.}

\subsubsection*{Step 1}
For $l = 3$, set 
\begin{equation*}
    \begin{aligned} \zeta_{i}^{n} &=\frac{1}{\phi_{k_n}(g)}K_{{m}_{n} \Delta_n}(t_{i-1} - \tau) \left(\phi\left(Y\right)_{i}^n-\phi_{i, 0}^{n}\right) .\\
    \end{aligned}
\end{equation*} 
By Lemma 2.2.10 in \cite{JacodProtter}, the result follows if the  array $z_n\E\left(|\zeta_{i}^{n}| \mid \mathcal{F}_{(i-1) \Delta_{n}}\right)$ is asymptotically negligible. To this end, note that (\ref{eq:bound1})  yields 
$$
\E\left(|\zeta_{i}^{n}| \mid \mathcal{F}_{(i-1) \Delta_{n}}\right)\leq C \Delta_n^{1/4}\Delta_n |K_{{m}_{n} \Delta_n}(t_{i-1} - \tau)|  \E\left( \left(\Delta_n^{1/4} +\Gamma^{\prime}(\mu)_{i}^{n}+\Gamma^{\prime}(\widetilde{\sigma})_{i}^{n}+\Gamma^{\prime}(\gamma)_{i}^{n}\right)\right),$$
where recall that we are assuming that $\tilde{\sigma}$, $\mu$, and $ \gamma$  are c\`adl\`ag bounded processes by localization.  
Thus, from Lemma 16.5.15 in \cite{JacodProtter}, $\left((\Gamma^{\prime}(\widetilde{\sigma})_{i}^{n})^2\right), \left((\Gamma^{\prime}({\mu})_{i}^{n})^2\right), \mbox{and } \left((\Gamma^{\prime}(\gamma)_{i}^{n})^2 \right)  $ satisfy (\ref{eq: AN}). By Cauchy-Schwarz inequality and (\ref{eq:K2}), 
\begin{equation} \label{eq:cauchy}
\begin{aligned}
    \Delta_n\sum_{j=1}^{n-k_n+1} |K_{{m}_{n} \Delta_n}(t_{i-1} - \tau)| \E\left(\Gamma^{\prime}({\mu})_{i}^{n}\right) &\leq \sqrt{\sum \Delta_nK_{{m}_{n} \Delta_n}^2(t_{i-1} - \tau) \sum \Delta_n \E\left(\Gamma^{\prime}({\mu})_{i}^{n}\right)^2}\\
    &= \mathrm{o}\left(\frac{1}{\sqrt{{m}_{n} \Delta_n}}\right).
\end{aligned}
\end{equation}We can obtain similar results on $(\Gamma^{\prime}(\widetilde{\sigma})_{i}^{n})$ and $\Gamma^{\prime}(\Upsilon)_{i}^{n} $. Thus,
\begin{equation*}
\begin{aligned}
z_n\sum_{j=1}^{n-k_n+1} \E\left(|\zeta_{i}^{n}| \mid \mathcal{F}_{(i-1) \Delta_{n}}\right) &\leq \mathrm{O}(z_n\Delta_n^{1/2}) +  \mathrm{o}(\frac{z_n}{{m}_{n}^{1/2}\Delta_n^{1/4}})
\,{\stackrel{n\to\infty}{\longrightarrow}}\, 0.
\end{aligned}
\end{equation*}
This finishes the proof of \thref{AN_lemma} for $l = 3$.
\medskip

For $l=4$, by (\ref{AsympCndkn}), (\ref{eq:phi_g}), (\ref{eq:psi}), and (\ref{eq:K2}),  we have
\begin{equation*}
\begin{aligned}
{|\overline{H}(4)^{n}|} &\leq C \frac{1}{k_n}\sum_{j=1}^{n - k_n+1}|K_{{m}_{n} \Delta_n}\left(t_{j-1} - \tau \right)| \left(\Delta_n +  \Delta_n ^{3/4}\left(\Gamma^{\prime}(\gamma)_{i}^{n}+\Gamma^{\prime}\left(\gamma^{\prime}\right)_{i}^{n}\right)\right)\\
&= \mathrm{O}\left(\Delta_n^{1/2}\right) + \mathrm{o}\left(\frac{\Delta_n^{1/4}}{\sqrt{{m}_{n} \Delta_n}}\right),
\end{aligned}
\end{equation*}
where we used a similar argument as in (\ref{eq:cauchy}) to deduce the second inequality above. 
Thus, we deduce (\ref{NeglTrms}) for $l=4$.
\medskip

For $l=5$, we have
\begin{align}
\label{L1l5}
\left|{\overline{H}(5)^{n}}\right|& \leq \int_{(n-k_n+1)\Delta_n}^{T}  \left|K_{{m}_{n} \Delta_n}(t-\tau) \sigma_{t}^{2}\right| d t  \\
\label{L2l5}
&\qquad \quad +\sum_{j = 1}^{n-k_n+1}\int_{t_{j-1}}^{t_{j} }\left|  K_{{m}_{n} \Delta_n}(s-\tau )\sigma^2_s-K_{{m}_{n} \Delta_n}(t_{j-1}-\tau )\sigma^2_{(j-1)\Delta_n}\right| d s\\
\label{L3l5}
&\leq C \frac{1}{{m}_{n} \sqrt{\Delta_n}} + (n-k_n-1) \Delta_n \left( \frac{1}{{m}_{n}\Delta_n^{1/2}} + \frac{1}{{m}_{n}^2\Delta_n}\right)\\
\label{L4l5}
& = \mathrm{O}_{P}\left(\frac{1}{{m}_{n} \sqrt{\Delta_n}}\right),
\end{align}
where the first term in (\ref{L3l5}) follows from the boundedness of $K$ and $\sigma$ as follows:
\begin{equation*}
\begin{aligned}
\int_{(n-k_n+1)\Delta_n}^{T}  \left|K_{{m}_{n} \Delta_n}(t-\tau) \sigma_{t}^{2}\right| d t 
\leq C \frac{1}{{m}_{n} \Delta_n} k_n \Delta_n  = C \frac{1}{{m}_{n} \sqrt{\Delta_n}},
\end{aligned}
\end{equation*}
while the second term in (\ref{L3l5}) can be deduced by (\ref{eq:sigma2_bound}) and Lipschitz property of $K$.  Indeed, for $s \in [t_{j-1},t_j]$ and $b_n := {m}_{n} \Delta_n $,
 \begin{equation*}
\begin{aligned}
&\left|  K_{b_n}(s-\tau )\sigma^2_s-K_{b_n}(t_{j-1}-\tau )\sigma^2_{(j-1)\Delta_n}\right| \\
&\leq \left|  K_{b_n}(s-\tau )\sigma^2_s-K_{b_n}(s-\tau ) \sigma^2_{(j-1)\Delta_n}\right| + \left|  K_{b_n}(s-\tau )\sigma^2_{(j-1)\Delta_n}-K_{b_n}(t_{j-1}-\tau )\sigma^2_{(j-1)\Delta_n}\right|\\
&=\mathrm{O}_{P}\left(\frac{1}{{m}_{n}\Delta_n^{1/2}} \right) + \mathrm{O}_{P}\left(\frac{1}{{m}_{n}^2\Delta_n} \right), \end{aligned}
\end{equation*}
 So, we deduce (\ref{NeglTrms}) for $l=5$.

\subsubsection*{Step 2}
To show \thref{part1}, we need several preliminary lemmas. We employ the `block splitting' method proposed in \cite{JacodProtter} (see Section 16.5.4, page 548 therein). Recall that
\begin{equation*}
\overline{H}(1)^n = \sum_{i=1}^{n-k_{n}+1} \zeta_{i}^{n},
\end{equation*}
where $\zeta_{i}^{n}=\frac{1}{\phi_{k_n}(g)}K_{{m}_{n} \Delta_n}(t_{i-1}-\tau)\left(\phi_{i, 0}^{n}-  \E \left(\left.\phi_{i, 0}^{n}\right| \mathcal{F}_{(i-1) \Delta_{n}}\right)\right)$.  
The variables $\zeta_i^n$ are not martingale differences. To use martingale methods, we fix an integer $m \geq 1$, and {divide} the summands in the definition of $\overline{H}(1)^n $ into blocks of size $mk_n$ and $k_n$. Concretely,
the ${\ell}$th big block, of size $mk_n$, contains the indices between $I(m, n, \ell)=(\ell-1 )(m+1) k_{n}+1$ and $I(m, n, \ell)+m k_{n}-1$. The number of such blocks before time t is $l_{n}(m)=\left[\frac{n-k_n+1}{(m+1) k_{n}}\right]$. These big  blocks are separated by small blocks of size $k_n$, and the ``real'' time corresponding to the beginning of the $\ell$th big block is $t(m, n, \ell)=(I(m, n, \ell)-1) \Delta_{n}$. Then we introduce the summand over all the big blocks,
\begin{equation}\label{DfnZmn}
{Z^{n}(m):=\sum_{{\ell}=1}^{l_{n}(m)} \delta(m)_\ell^n:=\sum_{{\ell}=1}^{l_{n}(m)} \sum_{r=0}^{m k_{n}-1} \zeta_{I(m, n, \ell)+r}^{n},}
\end{equation}
 Note that the sequence $\left( \delta(m)^n_\ell \right) $ are now martingale differences w.r.t. the discrete filtration {$\mathcal{G}_\ell = \mathcal{F}_{(I(m, n, \ell+1)-1) \Delta_{n}}$, for} $\ell=1,\dots,l_{n}(m)$.

We now show that the contribution of the small blocks, i.e. $\overline{H}(1)^n-Z^{n}(m)$, is asymptotically ``negligible'' compared to ${m}_{n}^{-1/2}\Delta_n^{-1/4}$.

\begin{lemma}Under \thref{kernel,noise,X_continuous}, we have
\begin{equation*}
\lim _{m \rightarrow \infty} \limsup _{n \rightarrow \infty} \mathbb{E}\left({m}_{n}^{1/2}\Delta_n^{1/4}\left|\overline{H}(1)^n-{Z^{n}(m)}\right|\right)=0
\end{equation*}
\end{lemma}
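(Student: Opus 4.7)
The plan is to decompose the difference $R^n := \overline{H}(1)^n - Z^n(m)$ as
\begin{equation*}
R^n \;=\; \sum_{\ell=1}^{l_n(m)} T_\ell^n + e_n,
\end{equation*}
where $T_\ell^n$ collects the $k_n$ values of $\zeta_i^n$ indexed by the $\ell$th small block (the $k_n$ indices sitting between the $\ell$th and $(\ell+1)$th big blocks), and $e_n$ gathers the at most $(m+1)k_n$ leftover indices near $i = n - k_n + 1$. Each piece will be controlled separately via an $L^2$ bound followed by Jensen's inequality $\mathbb{E}|X|\leq \sqrt{\mathbb{E}|X|^2}$.

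Two observations drive the analysis of $\sum_\ell T_\ell^n$. First, because the pre-averaging window has width $k_n$, each $\zeta_i^n$ is $\mathcal{F}_{(i+k_n-1)\Delta_n}$-measurable while being centered at $\mathcal{F}_{(i-1)\Delta_n}$; a conditioning argument then gives $\mathbb{E}(\zeta_i^n \zeta_{i'}^n) = 0$ whenever $i' - i \geq k_n$. Since the gap between consecutive small blocks is $mk_n + 1$, this implies $\mathbb{E}(T_\ell^n T_{\ell'}^n) = 0$ for $\ell \neq \ell'$ as soon as $m \geq 1$. Second, combining the moment bound $\mathbb{E}|\phi_{i,0}^n|^2 \leq C\Delta_n$ from (\ref{eq: phi_i_j}) with the asymptotic $\phi_{k_n}(g) \sim k_n$ from (\ref{eq:phi_g}) yields $\mathbb{E}|\zeta_i^n|^2 \leq C K_{m_n\Delta_n}^2(t_{i-1} - \tau)\,\Delta_n / k_n^2$. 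Within a single small block the $k_n$ summands are correlated, so we use the elementary Cauchy--Schwarz step $|T_\ell^n|^2 \leq k_n \sum_{r=0}^{k_n-1} |\zeta_{J_\ell + r}^n|^2$.

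Summing over the small blocks, invoking the Riemann-sum asymptotic (\ref{eq:K2}), and observing that the small blocks cover only the fraction $1/(m+1)$ of all indices, we obtain $\sum_\ell \mathbb{E}|T_\ell^n|^2 \leq C\,k_n\,((m+1) m_n)^{-1}\int K^2(u)\,du$. Substituting $k_n \sim (\theta\sqrt{\Delta_n})^{-1}$ from (\ref{AsympCndkn}) and applying Jensen then gives
\begin{equation*}
m_n^{1/2}\Delta_n^{1/4}\,\mathbb{E}\Bigl|\textstyle\sum_{\ell=1}^{l_n(m)} T_\ell^n\Bigr| \;\leq\; \frac{C}{\sqrt{m+1}},
\end{equation*}
which vanishes as $m \to \infty$. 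The residual $e_n$ contains at most $(m+1)k_n$ terms whose kernel weights $K_{m_n\Delta_n}(t_{i-1}-\tau)$ are evaluated at times near the right endpoint $T$; \thref{kernel}(3)(ii), i.e.\ $K(x)x^2 \to 0$ as $|x|\to\infty$, forces $K_{m_n\Delta_n}(t-\tau) = o(m_n\Delta_n)$ uniformly on the end region, so $m_n^{1/2}\Delta_n^{1/4}\mathbb{E}|e_n|\to 0$ for every fixed $m$. The principal technical hurdle is the rigorous verification of the orthogonality $\mathbb{E}(\zeta_i^n \zeta_{i'}^n) = 0$ for $|i'-i|\geq k_n$, which requires carefully tracing the $\mathcal{F}$-measurability of $\zeta_i^n$ through the definitions of $\overline{W}_i^n$, $\overline{\epsilon}_i^n$, and $\widehat{\epsilon}_i^n$, and then extracting the $1/(m+1)$ factor from the Riemann sum restricted to the union of small blocks.
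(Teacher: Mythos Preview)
Your argument is correct and reaches the same $C/\sqrt{m}$ bound as the paper, but the decomposition is genuinely different. The paper does \emph{not} group by small blocks; instead it partitions the set $J(n,m)$ of non-big-block indices into $k_n$ residue classes $J(n,m,r)$ modulo $k_n$. Within each class consecutive indices are $(m+1)k_n$ apart, so $\sum_{j\in J(n,m,r)}\zeta_j^n$ is a sum of martingale increments and its $L^2$-norm equals the sum of $\mathbb{E}|\zeta_j^n|^2$; a final triangle inequality over the $k_n$ classes gives $\mathbb{E}|\overline{H}(1)^n-Z^n(m)|\leq k_n\sqrt{C\Delta_n^{1/2}/(m\,m_n)}\sim C/\sqrt{m}$. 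Your route uses orthogonality \emph{between} small blocks (gap $mk_n+1$) and a crude Cauchy--Schwarz $|T_\ell^n|^2\leq k_n\sum|\zeta|^2$ \emph{within} each small block; this is in some sense the dual organization of the same underlying orthogonality $\mathbb{E}(\zeta_i^n\zeta_{i'}^n)=0$ for $|i-i'|>k_n$.

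Two small points. First, the correct measurability is $\zeta_i^n\in\mathcal{F}_{(i+k_n)\Delta_n}$ (the $\widehat{\epsilon}_i^n$ term involves $\epsilon_{(i+k_n-1)\Delta_n}$, and the noise filtration is left-open), so orthogonality needs $i'-i\geq k_n+1$; your inter-block gap $mk_n+1\geq k_n+1$ still suffices, but the claim ``whenever $i'-i\geq k_n$'' is off by one. Second, the paper absorbs the end residual into $J(n,m)$ and handles it uniformly via the modulo-$k_n$ trick, so it never invokes the kernel tail condition $K(x)x^2\to 0$; your separate treatment of $e_n$ via that decay is valid but relies on an assumption the paper's proof of this particular lemma does not need. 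The tradeoff: your block-by-block picture is more geometric, while the paper's residue-class trick is a standard Jacod--Protter device that handles irregular boundary pieces for free and makes the $1/(m+1)$ factor emerge directly from the Riemann-sum spacing rather than from a density argument requiring $k_n/m_n\to 0$.
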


\begin{proof}
Denote by $J(n, m)$  the set of all integers $j$ between  1 and $n-k_{n}+1$, which are \textit{not }in the big blocks (i.e., those corresponding to the small blocks). We further divide $J(n,m)$ into $k_n$ disjoint subsets $J(n,m,r)$ for $r = 1,...,k_n$, where $J(n,m,r)$ is the set of all $j \in J(n,m)$ equal to $r$ modulo $k_n$.  Then,
\begin{equation*}
\overline{H}(1)^{n}-Z^{n}(m)=\sum_{r=1}^{k_{n}} \sum_{j \in J(n,m, r)} \zeta_{j}^{n}.
\end{equation*}
Observe that $\mathbb{E}\left(\left.\zeta_{j}^{n} \right| \mathcal{F}_{(j-1) \Delta_{n}}\right)=0$ and $\zeta_{j}^{n}$ is $\mathcal{F}_{\left(j+k_{n}\right) \Delta_{n}}$ measurable. Then $\sum_{j \in J(n,m, r)} \zeta_{j}^{n}$ is the sum of martingale increments, because any two distinct indices in $ J (n, m, r )$ are more than $k_n$ apart.

Therefore by (\ref{eq:phi_bound}) {and the fact that $\mathbb{E}\left(\left.\zeta_{j}^{n} \right| \mathcal{F}_{(j-1) \Delta_{n}}\right)=0$}, for some constant $C$ (changing from line to line) and large enough $n$,
\begin{equation*}
\begin{aligned}
\mathbb{E}\left(\left|\sum_{j \in J(n,m, r)} \zeta_{j}^{n}\right|^{2}\right) &=\mathbb{E}\left(\sum_{j \in J(n, m, r)}\left|\zeta_{j}^{n}\right|^{2}\right) \\
&\leq C \sum_{j \in J(n, m, r)} \frac{\Delta_n}{\phi_{k_n}^2(g)}K_{{m}_{n} \Delta_n}^2(t_{j-1 } - \tau) \\
&\leq C \frac{1}{(m+1)k_n^3{m}_{n} \Delta_n} \int K^2(u) du \\
&\leq C \frac{\Delta_n^{1/2}}{m {m}_{n}},
\end{aligned}
\end{equation*}
where the last inequality holds because of (\ref{AsympCndkn}) and the second inequality holds because, recalling that two consecutive $j$'s in $J(n,m,r)$ are separated by $(m+1)k_{n}$, we have
\begin{equation*}
\begin{aligned}
 (m+1)k_n {m}_{n} \Delta_n^2\sum_{j \in J(n,m, r)}  K_{{m}_{n} \Delta_n}^2(t_{j-1 } - \tau) \, \stackrel{n\to\infty}{\longrightarrow}\, \int K^2(u)d u.
\end{aligned}
\end{equation*}
Then, 
\begin{equation*}
\mathbb{E}\left({m}_{n}^{1/2}\Delta_n^{1/4}\left|\overline{H}(1)^n-Z^{n}(m)\right|\right) \leq C {m}_{n}^{1/2}\Delta_n^{1/4}k_n \sqrt{\frac{\Delta_n^{1/2}}{m {m}_{n}}} \leq C\left(\frac{1}{\sqrt{m}}\right),
\end{equation*}
for large enough $n$. As $m \rightarrow \infty$, the above quantity goes to $0$ and we get the result.
\end{proof} 

Next, we modify the ``big-blocks'' process $Z^{n}(m)$ defined in (\ref{DfnZmn}) in such a way that each summand involves the volatility at the beginning of the corresponding large block. Recalling the notation in (\ref{eq:notation}), we set
\begin{align}
&\eta_{i, r}^{n} = \frac{1}{\phi_{k_n}(g)} K_{{m}_{n} \Delta_n}\left(t_{i-1}-\tau\right)\left(\phi_{i, r}^{n} - \E \left(\left.\phi_{i, r}^{n}\right| \mathcal{F}_{(i-r-1) \Delta_{n}}\right)\right),\label{DfnEta0}\\
&\eta_{i,r}^{\prime n} = \frac{1}{\phi_{k_n}(g)} K_{{m}_{n} \Delta_n}(t_{i-1}-\tau)\left(\E \left(\left.\phi_{i, r}^{n}\right| \mathcal{F}_{(i-r-1) \Delta_{n}}\right)- \E \left(\left.\phi_{i, r}^{n}\right| \mathcal{F}_{(i-1) \Delta_{n}}\right)\right),\label{DfnEtaprime0}\\
&M^{n}(m)=\sum_{i=1}^{l_{n}(m)} \sum_{r=0}^{m k_{n}-1} \eta_{I(m,n,i)+r, r}^{n}, \quad M^{\prime n}(m)=\sum_{i=1}^{l_{n}(m)}\sum_{r=0}^{m k_{n}-1} \eta_{I(m,n,i)+r, r}^{\prime n}.
\label{DfnEtaprime0b}
\end{align}

\begin{lemma}
Under \thref{kernel,noise,X_continuous}, for a fixed $m$,
\begin{equation*}
\lim _{n \rightarrow \infty} \mathbb{E}\left({m}_{n}^{1/2}\Delta_n^{1/4} \left|Z^{n}(m)-M^{n}(m)-M^{\prime n}(m)\right|\right)=0.
\end{equation*}
\end{lemma}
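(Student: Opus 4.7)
My plan is to begin by collapsing the summand. A direct algebraic manipulation of the definitions in (\ref{DfnEta0}) and (\ref{DfnEtaprime0}) shows that the two conditional-expectation terms telescope, yielding
\begin{equation*}
\zeta^{n}_i - \eta^{n}_{i,r} - \eta^{\prime n}_{i,r} = \frac{K_{m_n\Delta_n}(t_{i-1}-\tau)}{\phi_{k_n}(g)}\,\Xi^{n}_{i,r},\qquad \Xi^{n}_{i,r} := (\phi^{n}_{i,0} - \phi^{n}_{i,r}) - \mathbb{E}\bigl(\phi^{n}_{i,0} - \phi^{n}_{i,r}\bigm|\mathcal{F}_{(i-1)\Delta_{n}}\bigr).
\end{equation*}
The crucial feature is that for $i = I(m,n,\ell) + r$ with $r \in \{0,\ldots,mk_{n}-1\}$, the definition in (\ref{eq:notation}) gives $\phi^{n}_{i,r}$ as $\phi^{n}_{i,0}$ with $\sigma_{(i-1)\Delta_{n}}$ replaced by the \emph{frozen} value $\sigma_{(I(m,n,\ell)-1)\Delta_{n}}$. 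Hence $\phi^{n}_{i,0} - \phi^{n}_{i,r}$ is a linear combination of $(\sigma^{2}_{(i-1)\Delta_{n}} - \sigma^{2}_{(I-1)\Delta_{n}})(\overline{W}^{n}_i)^{2}$ and $(\sigma_{(i-1)\Delta_{n}} - \sigma_{(I-1)\Delta_{n}})\overline{W}^{n}_i\overline{\epsilon}^{n}_i$. I would bound its $L^{2}$-norm by combining (\ref{eq:sigma_bound})--(\ref{eq:sigma2_bound}) at the lag $|i-I|\Delta_{n} \leq mk_{n}\Delta_{n}$, the independence of $\overline{W}^{n}_i$ from $\mathcal{F}_{(i-1)\Delta_{n}}$, and the moment estimate (\ref{eq:epsilon_bound}), conditioning first on $\mathcal{F}^{(0)}_{(i-1)\Delta_{n}}$ so that the $\sigma$-differences factor out. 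This should yield $\mathbb{E}[(\Xi^{n}_{i,r})^{2}] \leq Cm(k_{n}\Delta_{n})^{3} = O(m\Delta_{n}^{3/2})$.

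The structural observation driving the proof is that for each fixed $r$ the sequence $(\Xi^{n}_{I(m,n,\ell)+r,r})_{\ell=1}^{l_{n}(m)}$ forms a martingale difference sequence: every $\Xi^{n}_{i,r}$ is $\mathcal{F}_{(i+k_{n})\Delta_{n}}$-measurable and $\mathcal{F}_{(i-1)\Delta_{n}}$-centered, and consecutive $I(m,n,\ell)$'s are separated by $(m+1)k_{n} > k_{n}$, so the conditioning $\sigma$-algebra of each term strictly dominates the terminal $\sigma$-algebra of the previous one. Martingale orthogonality in $\ell$, together with the Riemann-sum estimate $\sum_{\ell} K^{2}_{m_{n}\Delta_{n}}(t_{I+r-1}-\tau) \leq C/((m+1)k_{n}\Delta_{n}\cdot m_{n}\Delta_{n})$ and $\phi_{k_{n}}(g)\sim k_{n}$, then gives for each fixed $r$:
\begin{equation*}
\mathbb{E}\Bigl(\sum_{\ell=1}^{l_{n}(m)} \frac{K_{m_{n}\Delta_{n}}(t_{I+r-1}-\tau)}{\phi_{k_{n}}(g)}\Xi^{n}_{I+r,r}\Bigr)^{2} \leq \frac{Cm\Delta_{n}}{(m+1)m_{n}}.
\end{equation*}
Summing over the $mk_{n}$ values of $r$ by the triangle inequality combined with Jensen yields
\begin{equation*}
\mathbb{E}\bigl|Z^{n}(m) - M^{n}(m) - M^{\prime n}(m)\bigr| \leq mk_{n}\sqrt{\tfrac{Cm\Delta_{n}}{(m+1)m_{n}}} \;\sim\; \frac{m^{3/2}}{\sqrt{m_{n}}},
\end{equation*}
using $k_{n}\sim 1/\sqrt{\Delta_{n}}$. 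After multiplication by the normalizing rate $m_{n}^{1/2}\Delta_{n}^{1/4}$, this becomes $O(m^{3/2}\Delta_{n}^{1/4})$, which tends to $0$ as $n\to\infty$ for fixed $m$.

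The main obstacle I anticipate is \emph{bookkeeping} rather than any single hard estimate. A naive $L^{1}$ triangle inequality over all $(\ell, r)$ pairs would blow up after the $m_{n}^{1/2}\Delta_{n}^{1/4}$ rescaling, so the martingale orthogonality in $\ell$ is indispensable. On the other hand orthogonality in $r$ is unavailable because different $r$'s use the same Brownian and noise increments, so one must pay a triangle-inequality factor of $mk_{n} \sim m/\sqrt{\Delta_{n}}$; this cost is precisely absorbed by the $\Delta_{n}^{3/2}$ gain in $\mathbb{E}[(\phi^{n}_{i,0}-\phi^{n}_{i,r})^{2}]$ coming from the oscillation of $\sigma$ over a window of length $mk_{n}\Delta_{n}\asymp m\sqrt{\Delta_{n}}$. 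Balancing these two effects is the only delicate part of the argument.
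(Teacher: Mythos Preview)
Your proof is correct and follows essentially the same strategy as the paper's: express the difference as a double sum of centered terms involving $\phi_{i,0}^n-\phi_{i,r}^n$, bound their $L^2$-norm via the oscillation of $\sigma$ over at most $mk_n$ steps, and combine martingale orthogonality in one index with the triangle inequality in the other. The only cosmetic difference is the slicing of the double sum (the paper groups by residue of the index mod $k_n$, giving $k_n$ martingale subsums, rather than by position $r$ within the big block, your $mk_n$ subsums), but both organizations yield the same $O(\Delta_n^{1/4})$ final bound for fixed $m$.
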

\begin{proof}
We use a similar method as in the previous lemma: Let $J^{\prime}(n, m)$ the set of all integers $j$ between 1 and $n - k_n + 1$, which are \textit{inside} the big blocks, that is of the form $j = I(m,n,i) + l$ for some $i \geq 1$ and $l \in \{0,\cdots,m k_n - 1\}$. Let $J^{\prime}(n,m,r)$ be the set of all $j \in J^{\prime}(n, m)$ equal to $r$ modulo $k_n$. We can then write
\begin{equation*}
Z^{n}(m)-M^{n}(m) - M^{\prime n}(m) = \sum_{r=1}^{k_{n}} \sum_{j \in J^{\prime }(n,m, r)} \theta_{j}^{n},
\end{equation*}
where $\theta_{j}^n = \frac{1}{\phi_{k_n}(g)}K_{{m}_{n} \Delta_n}(t_{j-1}-\tau)\left( \phi_{j, 0}^{n} - \phi_{j,l}^n -\E \left(\left.\phi_{j, 0}^{n}-\phi_{j, l}^{n}\right| \mathcal{F}_{(j-1) \Delta_{n}}\right) \right) ,$ when $j=I(m, n, i)+l$.
Note  $\phi_{j,0} $ and $\phi_{j,l}$ have the same noise part, $-\frac{1}{2}\hat{\epsilon}_j^n $, and the cross term $ \overline{W}_j^n \overline{\epsilon}_j^n $ has expectation $0$. Then, for some constant $C$ and large enough $n$,
\begin{equation*}
\begin{aligned}
\E \left| \theta_j^n \right|^2 &\leq \frac{1}{\phi_{k_n}^2(g)}K_{{m}_{n} \Delta_n}^2(t_{j-1}-\tau)\E \left| \phi_{j, 0}^{n} - \phi_{j,l}^n \right|^2\\
& =  \frac{1}{\phi_{k_n}^2(g)}K_{{m}_{n} \Delta_n}^2(t_{j-1}-\tau) \E\left(\left(\sigma_{(j-1)\Delta_n}^2 - \sigma_{(j-l-1)\Delta_n}^2\right)^2 \left(\overline{W}_j^n\right)^4 \right)\\
&\leq C  K_{{m}_{n} \Delta_n}^2(t_{j-1}-\tau)  m k_n  \Delta_n^3 \quad \mbox{for } j \in J^{\prime}(n,m,r),
\end{aligned}
\end{equation*}
where the last inequality follows by conditioning on $\mathcal{F}_{(j-1)\Delta_{n}}$, using that  $\E\Big[\left( \overline{W}_j^n\right)^4 |\mathcal{F}_{(j-1)\Delta_{n}}\Big]  =  3 \phi_{k_n}(g)^2 \Delta_n^2$, and applying (\ref{eq:sigma2_bound}).
As in the proof of the previous lemma,
\begin{equation*}
\begin{aligned}
\mathbb{E}\left(\left|\sum_{j \in J^{\prime}(n, m, r)}\theta_{j}^{n}\right|^{2}\right)&= \mathbb{E}\left(\sum_{j \in J^{\prime}(n, m, r)}\left|\theta_{j}^{n}\right|^{2}\right)\\
&\leq C \Delta_n^3 k_n\sum_{j \in J^{\prime }(n,m, r)} K_{{m}_{n} \Delta_n}^2(t_{j-1}-\tau)\\
&\leq C\frac{\Delta_n}{{m}_{n}} \int K^2(u) d u.
\end{aligned}
\end{equation*}
So we have 
\begin{equation*}
   {\mathbb{E}\left({m}_{n}^{1/2}\Delta_n^{1/4} \left|Z^{n}(m)-M^{n}(m)-M^{\prime n}(m)\right|\right) \leq C {m}_{n}^{1/2}\Delta_n^{1/4} k_n \sqrt{\frac{\Delta_n}{{m}_{n}}}= \mathrm{O}(\Delta_n^{1/4}) \rightarrow 0.}
\end{equation*}
\end{proof}

Now we prove $M^{\prime n}(m) $, defined in (\ref{DfnEtaprime0b}), is asymptotically negligible.
\begin{lemma}
Under \thref{kernel,noise,X_continuous},
\begin{equation*}
\lim _{n \rightarrow \infty} \mathbb{E}\left({m}_{n}^{1/2}\Delta_n^{1/4} \left|M^{\prime n}(m)\right|\right)=0.
\end{equation*}
\end{lemma}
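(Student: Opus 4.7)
The plan is to reduce $\eta^{\prime n}_{i+r,r}$ to an $(\mathcal{F}^{(0)}_s)$-martingale increment depending only on the noise variance $\gamma$, and then to combine block orthogonality with condition~(\ref{eq: AN}) applied to oscillations of the c\`adl\`ag process $\gamma$.

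First I would expand $\phi^n_{i,r}=\sigma^2_{(i-r-1)\Delta_n}(\overline{W}^n_i)^2+2\sigma_{(i-r-1)\Delta_n}\overline{W}^n_i\overline{\epsilon}^n_i+\bigl[(\overline{\epsilon}^n_i)^2-\tfrac12\widehat{\epsilon}^n_i\bigr]$ and check that in the difference of conditional expectations entering $\eta^{\prime n}_{i,r}$ both the $\sigma^2(\overline{W})^2$ term and the cross term cancel. The former cancels because $\sigma_{(i-r-1)\Delta_n}$ is measurable with respect to the smaller of the two conditioning $\sigma$-fields and the Brownian increments inside $\overline{W}^n_i$ are independent of both with identical conditional second moment $\phi_{k_n}(g)\Delta_n$; the latter vanishes because the relevant noise variables $\epsilon_s$, $s\in[(i-1)\Delta_n,(i+k_n-2)\Delta_n]$, are $\mathcal{F}^{(0)}$-conditionally mean zero by \thref{noise} and conditionally independent of the past noise in $\mathcal{F}_{(i-1)\Delta_n}$. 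Writing $\Delta g_j:=g(j/k_n)-g((j-1)/k_n)$, what remains is driven by the $\mathcal{F}^{(0)}$-measurable quantity $B^n_i:=\tfrac12\sum_{j=1}^{k_n}(\Delta g_j)^2(\gamma_{(i+j-2)\Delta_n}-\gamma_{(i+j-1)\Delta_n})$, so that
\[
\eta^{\prime n}_{i,r}=\frac{K_{m_n\Delta_n}(t_{i-1}-\tau)}{\phi_{k_n}(g)}\bigl[\E(B^n_i\mid\mathcal{F}^{(0)}_{(i-r-1)\Delta_n})-\E(B^n_i\mid\mathcal{F}^{(0)}_{(i-1)\Delta_n})\bigr]
\]
is a reverse increment of the $(\mathcal{F}^{(0)}_s)$-martingale $s\mapsto\E(B^n_i\mid\mathcal{F}^{(0)}_s)$ and is in particular centered given $\mathcal{F}^{(0)}_{(i-r-1)\Delta_n}$. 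By localization $|B^n_i|\le\tfrac12\phi'_{k_n}(g)\,\omega^n_i\le C\omega^n_i/k_n$, where $\omega^n_i$ is the oscillation of $\gamma$ on $[(i-1)\Delta_n,(i+k_n-1)\Delta_n]$, so $|\eta^{\prime n}_{i,r}|\le C|K_{m_n\Delta_n}(t_{i-1}-\tau)|\,\omega^n_i/k_n^2$.

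Next I would use block orthogonality: each inner sum $\sum_{r=0}^{mk_n-1}\eta^{\prime n}_{I(m,n,\ell)+r,r}$ is centered given $\mathcal{F}^{(0)}_{(I(m,n,\ell)-1)\Delta_n}$, and successive big blocks are separated by a small block buffer, so distinct-block sums are mutually uncorrelated and
\[
\E|M^{\prime n}(m)|^2=\sum_{\ell=1}^{l_n(m)}\E\Bigl|\sum_{r=0}^{mk_n-1}\eta^{\prime n}_{I(m,n,\ell)+r,r}\Bigr|^2.
\]

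The main obstacle is controlling the within-block second moment sharply: naive Cauchy-Schwarz across the $mk_n$ summands costs a factor $mk_n$ that breaks the target rate. To avoid this, I would change variables $l=r+j-1$ in the double sum hidden in $\sum_r\eta^{\prime n}_{I+r,r}$, swap summation order, and perform Abel summation by parts. Since $K$ is Lipschitz by \thref{kernel} and $g$ is piecewise $C^1$ by condition~(ii), the resulting weights on $\gamma_{(I+l-1)\Delta_n}$ are slowly varying in $l$, so the inner block sum is bounded by $\omega^n_{I(m,n,\ell)}$ times a small factor involving $m_n,k_n,\Delta_n$. Since the oscillation array of the c\`adl\`ag bounded process $\gamma$ satisfies condition~(\ref{eq: AN}) by item~(\ref{lis:AN}), we have $\Delta_n\sum_i\E\omega^n_i\to 0$. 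Summing over $\ell$, combining with (\ref{AsympCndkn}), (\ref{eq:phi_g}), and (\ref{eq:K2}), and applying Jensen's inequality $\E|M^{\prime n}(m)|\le(\E|M^{\prime n}(m)|^2)^{1/2}$, yields the required $m_n^{1/2}\Delta_n^{1/4}\E|M^{\prime n}(m)|\to 0$.
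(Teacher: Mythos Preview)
Your reduction is correct and matches the paper's: indeed $\E[\phi^n_{i,r}\mid\mathcal{H}_{(i-1)\Delta_n}]-(\sigma_{(i-r-1)\Delta_n}\overline W^n_i)^2$ equals exactly your $B^n_i$, which is the quantity the paper calls $\Psi^n_{i,r}$, and the paper rewrites $\eta'^n_{i+r,r}$ as the same difference of conditional expectations of $\Psi$ that you obtain. Block orthogonality also holds as you claim, though the paper does not use it.

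The gap is in your handling of the ``main obstacle.'' Your proposed change of variables $l=r+j-1$ and Abel summation across the double sum cannot be carried out as written, because the inner block sum is \emph{not} a double sum in $(r,j)$ of values of $\gamma$: each summand contains $\E[B^n_{I+r}\mid\mathcal{F}^{(0)}_{(I+r-1)\Delta_n}]$, whose conditioning $\sigma$-field depends on $r$. After the substitution $l=r+j-1$ the conditioning still depends on $r=l-j+1$, so one cannot ``swap summation order'' and collect the coefficients of $\gamma_{(I+l-1)\Delta_n}$ as you suggest. Your subsequent claim that ``the resulting weights on $\gamma_{(I+l-1)\Delta_n}$ are slowly varying in $l$'' therefore has no meaning here.

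The paper avoids this difficulty entirely by invoking the ready-made bound $\E(|\Psi^n_{i,j}|^2\mid\mathcal{F}_{(i-1)\Delta_n})\le C\Delta_n^{3/2}$ (display (\ref{eq:psi2}), from Lemma~16.5.12 in Jacod--Protter). With that bound one simply applies Cauchy--Schwarz to \emph{every} pair within a block, obtaining $\E|\sum_{r}\eta'^n_{I_i+r,r}|^2\le C\Delta_n^{1/2}(\int|K_{m_n\Delta_n}|)^2$, and then a triangle inequality in $L^1$ across blocks finishes: $m_n^{1/2}\Delta_n^{1/4}\E|M'^n(m)|\le C m_n^{1/2}\Delta_n^{1/2}\to 0$. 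No block orthogonality, no oscillation array, no Abel summation over $r$ is needed.

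If you want to keep your explicit-$B^n_i$ route, the correct place to Abel-sum is \emph{inside} each $B^n_i$, over $j$ rather than over $r+j$. Writing $a_j=(\Delta g_j)^2$ and summing by parts gives $|B^n_i|\le C(|a_1|+|a_{k_n}|+\sum_j|a_j-a_{j-1}|)$; since $g'$ is piecewise Lipschitz, $|a_j-a_{j-1}|\le C/k_n^3$ away from finitely many break points and $\le C/k_n^2$ at those points, so $|B^n_i|\le C/k_n^2$ uniformly. This is precisely what underlies (\ref{eq:psi2}), and once you have it, even the crude $L^1$ bound $m_n^{1/2}\Delta_n^{1/4}\sum_i|\eta'^n_{i,r}|\le C m_n^{1/2}\Delta_n^{1/4}k_n^{-3}\Delta_n^{-1}\int|K|\le C m_n^{1/2}\Delta_n^{3/4}\to 0$ suffices. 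Your bound $|B^n_i|\le C\omega^n_i/k_n$ throws away exactly this cancellation.
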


\begin{proof}
Recall that $\Psi_{i,j}^n = \E( \phi_{i,j}^n|\mathcal{H}_{(i-1)\Delta_n}) - (\sigma_{(i-j-1)\Delta_n}\overline{W}_i^n)^2$ and, since $\mathcal{H}_{t}=\mathcal{F}^{(0)}\,\otimes \,\sigma\left(\epsilon_{s} : s \in[0,t)\right)$, 
\begin{align*}
	\mathbb{E}\left(\left.\Psi_{i+r, r}^{n} \right| \mathcal{F}_{(i-1) \Delta_{n}}\right)&=\mathbb{E}\left(\left. \phi_{i+r,r}^n \right| \mathcal{F}_{(i-1) \Delta_{n}}\right)-
	\mathbb{E}\left(\left. (\sigma_{(i-1)\Delta_n}\overline{W}_{i+r}^n)^2\right| \mathcal{F}_{(i-1) \Delta_{n}}\right),\\
	\mathbb{E}\left(\left.\Psi_{i+r, r}^{n} \right| \mathcal{F}_{(i+r-1) \Delta_{n}}\right)&=\mathbb{E}\left(\left.\phi_{i+r,r}^n \right| \mathcal{F}_{(i+r-1) \Delta_{n}}\right)-
	\mathbb{E}\left(\left. (\sigma_{(i-1)\Delta_n}\overline{W}_{i+r}^n)^2\right| \mathcal{F}_{(i+r-1) \Delta_{n}}\right).
\end{align*}
Since $\overline{W}_{i+r}^n$ is a linear combination of $W_{(i+r)\Delta_{n}},\dots, W_{(i+r+k_{n}-1)\Delta_{n}}$, we have:
\[
	\mathbb{E}\left(\left. (\sigma_{(i-1)\Delta_n}\overline{W}_{i+r}^n)^2\right| \mathcal{F}_{(i-1) \Delta_{n}}\right)=\mathbb{E}\left(\left. (\sigma_{(i-1)\Delta_n}\overline{W}_{i+r}^n)^2\right| \mathcal{F}_{(i+r-1) \Delta_{n}}\right),
\]
and, thus, 
\begin{align*}
	\eta_{i+r,r}^{\prime n} &= \frac{1}{\phi_{k_n}(g)} K_{{m}_{n} \Delta_n}(t_{i+r-1}-\tau)\left(\E \left(\left.\phi_{i+r, r}^{n}\right| \mathcal{F}_{(i-1) \Delta_{n}}\right)- \E \left(\left.\phi_{i+r, r}^{n}\right| \mathcal{F}_{(i+r-1) \Delta_{n}}\right)\right)\\
	&= \frac{1}{\phi_{k_n}(g)} K_{{m}_{n} \Delta_n}(t_{i+r-1}-\tau)\left(\mathbb{E}\left(\left.\Psi_{i+r, r}^{n} \right| \mathcal{F}_{(i-1) \Delta_{n}}\right)-\mathbb{E}\left(\left.\Psi_{i+r, r}^{n} \right| \mathcal{F}_{(i+r-1) \Delta_{n}}\right)\right).
\end{align*}	
Next, note that, by (\ref{eq:psi2}), we have 
\begin{equation*}
\begin{aligned}
&\E\left|\mathbb{E}\left(\left.\Psi_{i+r, r}^{n} \right| \mathcal{F}_{(i-1) \Delta_{n}}\right)-\mathbb{E}\left(\left.\Psi_{i+r, r}^{n} \right| \mathcal{F}_{(i+r-1) \Delta_{n}}\right) \right|^2\\
&\quad\leq \E\left( \mathbb{E}\left(\left.\Psi_{i+r, r}^{n} \right| \mathcal{F}_{(i+r-1) \Delta_{n}}\right)^2\right)\\
& \quad \leq \E\left( \mathbb{E}\left(\left.\left(\Psi_{i+r, r}^{n}\right)^2 \right|{\mathcal{F}_{(i+r-1) \Delta_{n}}}\right)\right)\leq C \Delta_n^{3/2}.
\end{aligned}
\end{equation*}
We can then deduce that for $r\neq l $,  
\begin{equation*}
\begin{aligned}
\E\left( \eta_{i+r,r}^{\prime n}\eta_{i+l,l}^{\prime n} \right) &\leq \sqrt{\E\left(  \eta_{i+r,r}^{\prime n}\right)^2 \E\left(  \eta_{i+l,l}^{\prime n}\right)^2 }\\
&\leq C \frac{1}{\phi_{k_n}(g)^2} {|K_{{m}_{n} \Delta_n}(t_{i+r-1}-\tau)| |K_{{m}_{n} \Delta_n}(t_{i+l-1}-\tau)|}  \Delta_n^{3/2}.
\end{aligned}
\end{equation*}
Therefore, denoting for simplicity $I_{i}=I(m, n, i)=(i-1 )(m+1) k_{n}+1$,
\begin{equation*}
\begin{aligned}
\mathbb{E} \left|\sum_{r=0}^{m k_{n}-1} \eta_{I(m,n,i)+r, r}^{\prime n}\right|^2 &\leq C \frac{1}{\phi_{k_n}^2(g)} \left(\sum_{r=0}^{m k_{n}-1}{|K_{{m}_{n} \Delta_n}(t_{I_{i}+r-1}-\tau)|} \right)^2\Delta_n^{3/2} \\
&\leq C \frac{1}{k_n^2}\frac{1}{\Delta_n^2}\left(\int_{t_{I_{i}-1}}^{t_{I_{i}+(mk_{n}-1)\Delta_{n}}} {|K_{{m}_{n} \Delta_n}\left(s-\tau\right)|} ds \right)^2 \Delta_n^{3/2}\\
& \leq C \Delta_n^{1/2}\left(\int_{t_{I_{i}-1}}^{t_{I_{i}+(mk_{n}-1)\Delta_{n}}} {|K_{{m}_{n} \Delta_n}\left(s-\tau\right)|} ds\right)^2.
\end{aligned}
\end{equation*}
The result is proved by the following:
\begin{align*}
{m}_{n}^{1/2}\Delta_n^{1/4} {\mathbb{E}\left|M^{\prime n}(m)\right|} &\leq C {m}_{n}^{1/2}\Delta_n^{1/4} \sum_{i=1}^{l_{n}(m)}  \Delta_n^{1/4}\left(\int_{t_{I_{i}-1}}^{t_{I_{i}}+(m k_n-1) \Delta_n} {|K_{{m}_{n} \Delta_n}\left(s-\tau\right)|} ds \right)  \\
&\leq C {m}_{n}^{1/2}\Delta_n^{1/2} \int |K(u)| du
 \rightarrow 0.
\end{align*}
\end{proof}

At this stage we are ready to prove a CLT for the processes $M_n(m)$, for each fixed $m$. We follow the arguments of \cite{JacodProtter} in page 550. For completeness, we {outline them} here.
Let 
\begin{equation} \label{eq:L}
L\left(g\right)_{t}=\int_{t}^{t+1} g(u-t) d W^{1}_{u}, \quad L^{\prime}\left(g\right)_{t}=\int_{t}^{t+1} g^{\prime}(u-t) d W_{u}^{2},
\end{equation} 
where $W^1$ and $W^2$ are two independent one-dimensional Brownian motions defined on an auxiliary space $\left(\tilde{\Omega}, \tilde{\mathcal{F}},\left(\tilde{\mathcal{F}}_{t}\right)_{t \geq 0}, \tilde{\mathbb{P}}\right)$.
The processes $L(g)$ and $L^{\prime}(g)$ are independent, stationary, centered, and Gaussian with covariance 
\begin{equation*}
\begin{aligned} \mathbb{E}\left(L\left(g\right)_{t} L\left(g\right)_{s}\right) &=\int_{t \vee s}^{(t+1) \wedge(s+1)} g(u-t) g(u-s) d u, \\ 
\mathbb{E}\left(L^{\prime}\left(g\right)_{t} L^{\prime}\left(g\right)_{s}\right) &=\int_{t\vee s}^{(t+1) \wedge(s+1)}g^{\prime}(u-t)g^{\prime}(u-s) d u .\end{aligned}
\end{equation*}
Next, {denoting $\widetilde{\mathbb{E}}$ the expectation with respect to $\tilde{\mathbb{P}}$}, let \begin{equation*}
\begin{aligned}
&\mu\left(v, v^{\prime}\right)=\widetilde{\mathbb{E}}\left(\left(v L\left(g\right)_{s}+v^{\prime} L^{\prime}\left(g\right)_{s}\right)^2 -  v^{\prime 2} \phi(g^{\prime})\right),\\
&\mu^{\prime}\left(v, v^{\prime} ; s, s^{\prime}\right)= \widetilde{\mathbb{E}}\left(\left(\left(v L\left(g \right)_{s}+v^{\prime} L^{\prime}\left(g \right)_{s}\right)^2 - v'^{2}  \phi(g^{\prime})\right)  \left(\left(v L\left(g \right)_{s^{\prime}}+v^{\prime} L^{\prime}\left(g \right)_{s^{\prime}}\right)^2-  v^{\prime 2} \phi(g^{\prime})\right) \right),\\
&R\left(v, v^{\prime}\right)=\int_{0}^{2}\left(\mu^{\prime}\left(v, v^{\prime} ; 1, s\right)-\mu\left(v, v^{\prime}\right) \mu\left(v, v^{\prime}\right)\right) d s.
\end{aligned}
\end{equation*}
As argued in the proof of Theorem 7.20 in  \cite{jacodaitsahalia}, one can show that
\[
	{\frac{1}{\theta} R\left(\sigma_{{t}}, \theta v_t  \right) = 4   \left(\Phi_{22} \sigma_{{t}}^{4}/\theta+2 \Phi_{12} \sigma_{{t}}^{2} \gamma_{t}\theta+\Phi_{11} \gamma_{{t}}^{2}\theta^{3}\right),}
\]
 where $v_{t}  = \sqrt{\gamma_{t}}$ is the conditional standard deviation for $\epsilon_{t} $.
 For a fixed $m$ and $t \in [0,T]$, let
 \begin{equation*}
\gamma(m)_{t}=m \mu\left(\sigma_{t}, \theta v_{t}\right), \quad \gamma^{\prime}(m)_{t}=\int_{0}^{m} d s \int_{0}^{m} d s^{\prime} \mu^{\prime}\left(\sigma_{t}, \theta v_{t} ; s, s^{\prime}\right).
\end{equation*}
\begin{lemma} \thlabel{y(m)}
Under \thref{kernel,noise,X_continuous}, for each $m \geq 1$, as $n\to\infty$, the process ${m}_{n}^{1/2}\Delta_n^{1/4}M^n (m)$ converges in law to a r.v. $\overline{Y}(m)$,  which conditionally on $\mathcal{F}$ is a centered Gaussian r.v. with variance
\begin{equation*}
\mathbb{E}\left(\left.\left(\overline{Y}(m)\right)^2 \right|\mathcal{F}\right)=\frac{1}{m+1}\frac{1}{\theta}  \left(\gamma^{\prime}(m)_{\tau}-\gamma(m)_{\tau}^2\right) \int K^2(u)du.
\end{equation*}
\end{lemma}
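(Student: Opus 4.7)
My plan is to apply the martingale stable central limit theorem (Theorem 2.2.15 in \cite{JacodProtter}) to the triangular array $\xi_i^n := {m}_n^{1/2}\Delta_n^{1/4}\delta(m)_i^n$, where $\delta(m)_i^n = \sum_{r=0}^{mk_n-1}\eta_{I(m,n,i)+r,r}^n$. By the definition of $\eta$ in (\ref{DfnEta0}) and the fact that consecutive big blocks are separated by a whole small block of size $k_n$, the sequence $(\delta(m)_i^n)_{i=1,\dots,l_n(m)}$ is a martingale difference array with respect to the discrete filtration $\mathcal{G}_i = \mathcal{F}_{(I(m,n,i+1)-1)\Delta_n}$. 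The theorem then requires checking (a) convergence in probability of $\sum_i \mathbb{E}[(\xi_i^n)^2\mid\mathcal{G}_{i-1}]$ to the claimed $\mathcal{F}$-conditional variance, (b) a Lyapunov condition $\sum_i\mathbb{E}[|\xi_i^n|^4\mid\mathcal{G}_{i-1}]\to 0$, and (c) asymptotic orthogonality of the array with increments of $W$, $B$, and of any bounded $\mathbb{F}^{(0)}$-martingale orthogonal to $(W,B)$.

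Step (a) is the heart of the proof, and my plan here is a freeze-and-identify argument. Since ${m}_n\sqrt{\Delta_n}\to\infty$, each big block spans real time $mk_n\Delta_n = O(\sqrt{\Delta_n})$, which is much smaller than the kernel bandwidth ${m}_n\Delta_n$. The Lipschitz property of $K$ in Assumption \ref{kernel}, the c\`adl\`ag boundedness of $\sigma,\tilde\sigma,\gamma$, and the estimates (\ref{eq:sigma2_bound})--(\ref{eq:Gamma'}) will allow me to freeze both the kernel weight to $K_{{m}_n\Delta_n}(t(m,n,i)-\tau)$ and the coefficients $\sigma,\gamma$ to their left-endpoint values $\sigma_{t(m,n,i)},\gamma_{t(m,n,i)}$, up to errors that vanish after summation over $i$. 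After this freezing, the inner sum in $\delta(m)_i^n$ becomes a weighted sum of centered overlapping pre-averages of Brownian and noise increments over $mk_n$ consecutive indices. I would then use the Donsker-type scaling $\bar W_{I_i+r}^n \sim \theta^{-1/2}\Delta_n^{1/4}L(g)_{r/k_n}$ and its noise analogue to recognize the double sum of $\mathcal{F}_{(I_i-1)\Delta_n}$-conditional covariances of $\phi^n_{I_i+r,r}$ over $r,s\in\{0,\dots,mk_n-1\}$ as a Riemann approximation of the double integral defining $\gamma'(m)_{t(m,n,i)}$; the centering of $\eta$ in (\ref{DfnEta0}) is what precisely subtracts off the contribution associated with $\gamma(m)^2$. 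Combined with $\phi_{k_n}(g)^{-2}\sim k_n^{-2}$, $k_n\sqrt{\Delta_n}\to 1/\theta$, and the kernel Riemann identity $(m+1)k_n\Delta_n\sum_i K_{{m}_n\Delta_n}^2(t(m,n,i)-\tau)\to({m}_n\Delta_n)^{-1}\int K^2(u)\,du$, this produces the asserted limit $\frac{1}{(m+1)\theta}(\gamma'(m)_\tau-\gamma(m)_\tau^2)\int K^2(u)\,du$.

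Steps (b) and (c) should be routine. For (b), the moment estimates (\ref{eq:phi_bound})--(\ref{eq: phi_i_j}) give $\mathbb{E}[|\delta(m)_i^n|^4\mid\mathcal{G}_{i-1}]\lesssim (mk_n)^2\Delta_n^2\,\phi_{k_n}(g)^{-4}K_{{m}_n\Delta_n}^4(t(m,n,i)-\tau)$, and a similar kernel Riemann-sum bound shows that $\sum_i\mathbb{E}[|\xi_i^n|^4\mid\mathcal{G}_{i-1}]=O(({m}_n\sqrt{\Delta_n})^{-1})\to 0$. For (c), I would decompose each $\eta_{I_i+r,r}^n$ into a stochastic integral against $W$ over $[(I_i+r-1)\Delta_n,(I_i+r+k_n-1)\Delta_n]$ plus a noise piece that is $\mathcal{F}^{(0)}$-conditionally independent; its conditional covariance with $\Delta_j^n W$ or $\Delta_j^n B$ (the latter controlled via the boundedness of $\rho$) is $O(\Delta_n)$ per index and vanishes against orthogonal martingales by a Kunita--Watanabe argument analogous to the one in the proof of Lemma \ref{st_no_noise}. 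The genuine technical obstacle is entirely in step (a), and specifically in the careful identification of the limit covariance structure via $L(g),L'(g)$; once this identification is carried out along the lines of Section 16.5 in \cite{JacodProtter}, all constants align and the conditional variance takes the claimed form.
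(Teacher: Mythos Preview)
Your overall strategy is exactly the one the paper uses: apply Theorem 2.2.15 of \cite{JacodProtter} to the martingale array $\eta(m)_i^n=m_n^{1/2}\Delta_n^{1/4}\sum_{r=0}^{mk_n-1}\eta^n_{I(m,n,i)+r,r}$ relative to $\mathcal{G}_i=\mathcal{F}_{(I(m,n,i+1)-1)\Delta_n}$, and your handling of (a) and (b) lines up with the paper's (the paper formalizes your ``Donsker-type scaling'' identification by invoking Lemma 16.3.9 of \cite{JacodProtter}, which converts $\frac{1}{k_n^2\Delta_n}\sum_{r=0}^{mk_n-1}\phi^n_{I_i+r,r}$ into the functional $F_n(\sigma_{T_n}L^n,L'^n,\widehat L^n)$ of (\ref{eq:F_n-notation}) and passes to the limit; this is precisely the step you describe heuristically).

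There is, however, a real gap in your step (c). You propose checking asymptotic orthogonality only against $W$, $B$, and bounded $\mathbb{F}^{(0)}$-martingales orthogonal to $(W,B)$. But the CLT is applied on the enlarged filtration $(\mathcal{F}_t)=(\mathcal{F}_t^{(0)}\otimes\sigma(\epsilon_s:s<t))$, and the limiting statement is about the $\mathcal{F}$-conditional law of $\overline Y(m)$; Theorem 2.2.15 therefore requires orthogonality against a dense class of $(\mathcal{F}_t)$-martingales, which includes martingales generated by the noise. The paper handles this by the reduction on p.~552 of \cite{JacodProtter} to two classes: $\mathcal{N}^{(0)}$ (bounded $(\mathcal{F}_t^{(0)})$-martingales orthogonal to $W$) and $\mathcal{N}^{(1)}$ (martingales with terminal value $h(\epsilon_{t_1},\dots,\epsilon_{t_w})$). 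Your Kunita--Watanabe argument covers $\mathcal{N}^{(0)}$ and $W$, but you give no mechanism for $N\in\mathcal{N}^{(1)}$; there the decomposition into ``stochastic integral against $W$ plus noise piece'' does not help, since now $N$ itself is built from the noise and is correlated with the $\bar\epsilon^n$, $\hat\epsilon^n$ terms inside $\phi^n_{I_i+r,r}$. The paper disposes of this case using the $L^2$ bound $\mathbb{E}[(\zeta_i^n)^2]\le C(m_n\sqrt{\Delta_n})^{-1}$ together with the fact that only finitely many blocks can intersect $\{t_1,\dots,t_w\}$, so the contribution is $O(w(\Delta_n^{1/4}+(m_n\sqrt{\Delta_n})^{-1}))$. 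Without addressing $\mathcal{N}^{(1)}$ you would at best obtain $\mathcal{F}^{(0)}$-stable convergence, which is weaker than what the lemma claims.
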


\begin{proof}
{Recall that 
\begin{align*}
&M^{n}(m)=\sum_{i=1}^{l_{n}(m)} \sum_{r=0}^{m k_{n}-1} \eta_{I(m,n,i)+r, r}^{n},
\end{align*}
where
\begin{align*}
\eta_{i+r, r}^{n} &= \frac{1}{\phi_{k_n}(g)} K_{{m}_{n} \Delta_n}\left(t_{i+r-1}-\tau\right)\left(\phi_{i+r, r}^{n} - \E \left(\left.\phi_{i+r, r}^{n}\right| \mathcal{F}_{(i-1) \Delta_{n}}\right)\right)\\
\phi_{i+r, r}^{n} &= (\sigma_{(i-1)\Delta_n}\overline{W}_{i+r}^n + \overline{\epsilon}_{i+r}^n )^2 - \frac{1}{2}\hat{\epsilon}_{i+r}^n,\\
\overline{W}_{i}^{n}&=
-\sum_{j=1}^{k_{n}} \left(g\left(\frac{j}{k_{n}}\right) - g\left(\frac{j-1}{k_{n}}\right)\right) W_{(i+j-2)\Delta_{n}}
\\
I(m, n, i)&=(i-1 )(m+1) k_{n}+1,\quad l_{n}(m)=\left[\frac{n-k_n+1}{(m+1) k_{n}}\right].
\end{align*}
For  $i=1,\cdots, l_n(m)$, let
 \begin{equation} \label{eq:eta_m}
 \eta(m)_i^n := {m}_{n}^{1/2}\Delta_n^{1/4}\sum_{r=0}^{m k_{n}-1} \eta_{I(m,n,i)+r, r}^{n},\quad \mathcal{G}_{i}^{n}= \mathcal{F}_{(I(m, n, i+1)-1) \Delta_{n}}.
 \end{equation}
For simplicity, we write $I_{i}=I(m,n,i)$. Note that $\eta(m)_{i}^{n}$ depends on 
\begin{align*}
	&\sigma_{(I_{i}-1)\Delta_{n}},W_{(I_{i}-1)\Delta_{n}},\dots,W_{(I_{i+1}-3)\Delta_{n}},\epsilon_{(I_{i}-1)\Delta_{n}},\dots,\epsilon_{(I_{i+1}-3)\Delta_{n}}.
\end{align*}
Therefore, $\eta(m)^{n}_{i}$ is $\mathcal{G}_{i}^{n}$-measurable and, furthermore, $\mathbb{E}[\eta(m)^{n}_{i}|\mathcal{G}_{i-1}^{n}]=0$.
We} will apply {Theorem 2.2.15 in \cite{JacodProtter}} 
to the martingale increments
 $\eta(m)_i^n, i=1, \cdots, l_{n}(m)$.

 By the Jensen type inequality $|\sum_{r=0}^{mk_{n}-1} a_{r}b_{r}|^{4}\leq{}(\sum_{r=0}^{mk_{n}-1} |a_{r}|)^{3}\sum_{r=0}^{mk_{n}-1} |a_{r}|b_{r}^{4}$ and (\ref{eq: phi_i_j}), we have, for each fixed $m$,
 \begin{equation} \label{eq:jensen}
 \begin{aligned}
 \sum_{i=1}^{l_{n}(m)}  \E\left(\left.\left|\eta(m)_i^n\right|^4\right|\mathcal{G}_{i-1}^{n} \right) 
&\leq  {C}\sum_{i=1}^{l_{n}(m)} {m}_{n}^2 \Delta_n^3 \left(\sum_{r=0}^{m k_n -1}\frac{1}{|\phi_{k_n}(g)|} |K_{{m}_{n} \Delta_n}\left(t_{I_{i}+r-1}-\tau\right)| \right)^4 \\
&\leq  C \sum_{i=1}^{l_{n}(m)} m^4 {m}_{n}^2 \Delta_n^3 \left(\frac{1}{m k_n \Delta_n} \int_{t_{I_{i}-1}}^{t_{I_{i}-1}+m k_n \Delta_n} |K_{{m}_{n} \Delta_n} \left(s - \tau\right)| d s\right)^4 \\
&\leq C \sum_{i=1}^{l_{n}(m)} m^4 {m}_{n}^2 \Delta_n^3 \frac{1}{m k_n \Delta_n} \int_{t_{I_i-1}}^{t_{I_i-1}+m k_n \Delta_n} K^4_{{m}_{n} \Delta_n} \left(s - \tau\right) d s\\
&\leq \mathrm{O}\left( {m}_{n}^2 \Delta_n^{5/2} \frac{1}{\left({m}_{n} \Delta_n\right)^3 } \int K^4(u) du  \right)= \mathrm{O}\left(\frac{1}{{m}_{n} \Delta_n^{1/2}}\right) \rightarrow 0.
 \end{aligned}
\end{equation}
Therefore, for every $\varepsilon >0$,
\begin{equation*}
\begin{aligned}
&\sum_{i=1}^{l_{n}(m)}  \E\left(\left.\left|\eta(m)_i^n\right|^2 \mathbbm{1}_{\left|\eta(m)_{i}^{n}\right|^{2} \geq \varepsilon} \right|\mathcal{G}_{i-1}^{n} \right)  \leq 
\frac{1}{\epsilon} \sum_{i=1}^{l_{n}(m)} \E\left(\left.\left|\eta(m)_i^n\right|^4 \right|\mathcal{G}_{i-1}^{n} \right) \,\stackrel{n\to\infty}{\longrightarrow}\, 0.
\end{aligned}
\end{equation*}
It remains to prove that, for a fixed $m$,
\begin{equation} \label{eq:variance}
{S_{n}}:=\sum_{i=1}^{l_{n}(m)} \mathbb{E}\left(\left.\left(\eta(m)_{i}^{n}\right)^2  \right|\mathcal{G}_{i-1}^{n}\right)   \stackrel{\mathbb{P}}{\longrightarrow} \frac{1}{(m+1)\phi^2(g) }\frac{1}{\theta} \int K^2(u)du \left(\gamma^{\prime}(m)_{\tau}-\gamma(m)_{\tau}^2\right),
\end{equation}
 and also, for any bounded \(\mathcal{F}_{t}\)-martingale $N$ that is orthogonal to $W$, or for $N=W$,
\begin{equation}\label{eq:martingale}
\sum_{i=1}^{l_{n}(m)} \mathbb{E}\left(\eta(m)_{i}^{n}\left(N_{(I_{i+1}-1) \Delta_{n}}-N_{(I_i-1) \Delta_{n}}\right) \,\middle\vert\,{G}_{i-1}^{n}\right) \stackrel{\mathbb{P}}{\rightarrow} 0.
\end{equation}
  We start by proving (\ref{eq:variance}). 
Let
\begin{equation} \label{eq:gamma}
\begin{aligned}
\alpha^n_i & := \frac{1}{k_n^2 \Delta_n}\sum_{r=0}^{m k_{n}-1}  K_{{m}_{n} \Delta_n}\left( t_{I_i+r-1} - \tau\right) \phi^{n}_{I_i+r, r}. \\
& =\frac{1}{k_n^2 \Delta_n}K_{{m}_{n} \Delta_n}\left( t_{I_i-1} - \tau\right) \sum_{r=0}^{m k_{n}-1}  \phi_{I_i+r, r}^{n}  + \mathrm{O}_P\left( \frac{1}{{m}_{n}^2 \Delta_n^{3/2}}\right) ,
\end{aligned}
\end{equation}
where for the second equality above we applied \thref{kernel} and  (\ref{eq:phi_bound}) to show \begin{equation*}
\begin{aligned}
 &\frac{1}{k_n^2 \Delta_n}\sum_{r=0}^{m k_{n}-1}  \left| K_{{m}_{n} \Delta_n}\left( t_{I_i+r-1} - \tau\right)  - K_{{m}_{n} \Delta_n}(t_{I_i-1} - \tau ) \right|\E\left|\phi_{I_i+r, r}^{n}\right|\\
 &\quad\leq  C \frac{1}{k_n^2 \Delta_n}\sum_{r=0}^{m k_{n}-1} \frac{1}{{m}_{n} \Delta_n}\frac{mk_n\Delta_n}{{m}_{n} \Delta_n}\Delta_n^{1/2}=O\left(\frac{1}{{m}_{n}^2\Delta_n^{3/2}}\right).
\end{aligned}
\end{equation*}
For $(I(m,n,i)-1)\Delta_n\leq s<(I(n,m,i+1)-1)\Delta_n$, set
\begin{equation*}
\begin{aligned}
\gamma_s^{n} =   \E\left(\left.\frac{1}{k_n^2 \Delta_n}\sum_{r=0}^{m k_{n}-1}  \phi_{I_i+r, r}^{n} \right|{\mathcal{G}^n_{i-1}} \right),
\quad 
\gamma_{s}^{\prime n} =   \E\left(\left.\left( \frac{1}{k_n^2 \Delta_n}\sum_{r=0}^{m k_{n}-1}  \phi_{I_i+r, r}^{n}   \right)^2\right|{\mathcal{G}^n_{i-1}}\right).
\end{aligned}
\end{equation*}
Then, we have
\begin{equation*} 
\begin{aligned}
{S_{n}}
 &={m}_{n} \Delta_n^{1/2}\frac{k_n^4\Delta_n^2}{\phi_{k_n}^2(g)}\sum_{i=1}^{l_n(m)} \left( \E\left(\left.\left(\alpha_i^n\right)^2\right| \mathcal{G}_{i-1}\right) - \left(\E\left(\left.\alpha_i^n\right| \mathcal{G}_{i-1}\right)\right)^2\right)\\
 & = {m}_{n} \Delta_n^{1/2}\frac{k_n^4\Delta_n^2}{\phi_{k_n}^2(g)} \sum_{i=1}^{l_n(m)} K^2_{{m}_{n} \Delta_n}(t_{{I_i-1}} - \tau ) \left( \gamma^{\prime n}_{t_{{I_i-1}}} - \left(\gamma_{t_{{I_i-1}}}^n \right)^2\right) + \mathrm{O}_{P}\left( \frac{1}{{m}_{n} \sqrt{\Delta_n}}\right).
\end{aligned}
\end{equation*}
If we can show that for any $s \in [0,T]$,
\begin{equation} \label{eq:gamma}
\gamma_{s}^{n} \stackrel{\mathbb{P}}{\longrightarrow} \gamma(m)_{s}, \quad \gamma_{s}^{\prime n}\stackrel{\mathbb{P}}{\longrightarrow} \gamma^{\prime}(m)_{s},
\end{equation}
we can obtain (\ref{eq:variance}):
\begin{equation*}\begin{aligned}
{S_{n}}&= \frac{1}{(m+1)k_n \Delta_n} {m}_{n} \Delta_n^{1/2} k_n^2 \Delta_n^2 \int_0^T K^2_{{m}_{n} \Delta_n}(s- \tau ) \left(\gamma^{\prime}(m)_{s}-\gamma(m)_{s}^2\right) ds+ \mathrm{o}_P(1)\\
&\quad = \frac{1}{\theta(m+1)}\int_{\frac{-\tau}{{m}_{n} \Delta_n}}^{\frac{T-\tau}{{m}_{n} \Delta_n}} K^2(u) \left(\gamma^{\prime}(m)_{\tau + u {m}_{n} \Delta_n}-\gamma(m)_{\tau + u {m}_{n} \Delta_n}^2\right) d u + \mathrm{o}_P(1)\\
&\quad \stackrel{\mathbb{P}}{\longrightarrow} \frac{1}{\theta(m+1)}\int K^2(u) du\left(\gamma^{\prime}(m)_{\tau}-\gamma(m)_{\tau }^2\right) ,
\end{aligned}
\end{equation*}
where the last line can be shown as follows. For all $ \epsilon > 0$, there exists an interval $I = [a,b]$ such that $\int_{I^c} K^2(u) du \leq  \epsilon$. Let $I_n =[\frac{-\tau}{{m}_{n} \Delta_n},\frac{T-\tau}{{m}_{n} \Delta_n}]  $, $f_n(u) = K^2(u)\left(\gamma^{\prime}(m)_{\tau + u {m}_{n} \Delta_n}-\gamma(m)_{\tau + u {m}_{n} \Delta_n}^2\right) $ and $f(u)= K^2(u) du\left(\gamma^{\prime}(m)_{\tau}-\gamma(m)_{\tau }^2\right)  $. Then, we have for some constant $C$,
\begin{equation*}
\begin{aligned}
&\limsup_{n\rightarrow 0} \left|\int_{I_n} f_n(u)d u - \int f(u) du    \right|\\
&\leq \limsup_{n\rightarrow 0} \int_{I} \left| f_n(u) - f(u)\right| du + \int_{I_n\cap I^c} \left| f_n(u) \right|du + \int_{I^c}\left| f(u) \right|du\\
&\leq C \epsilon,
\end{aligned}
\end{equation*}
since $\gamma, \gamma^{\prime}  $ are continuous and bounded and $K$ is bounded. The result follows by letting $\epsilon \rightarrow 0$.

To show (\ref{eq:gamma}), we fix $s \in [0, T]$ and apply Lemma 16.3.9 in \cite{JacodProtter} with the sequence $i_n = I(m,n,i)$,  $T_n = (I(m,n,i)-1)\Delta_n $ if $ I(m,n,i-1)\Delta_n \leq s < I(m,n,i) \Delta_n $. Concretely, with the notation
\begin{equation*}
L_{u}^{n} =\frac{1}{\sqrt{k_n \Delta_n}} \overline{W}_{i_n+\left[k_{n} u\right]}^{n}, 
\quad
L_{u}^{\prime n}=\sqrt{k_{n}} \overline{\epsilon}_{i_n+\left[k_{n} u\right]}^{n} ,
\quad
\widehat{L}_{u}^{n} =k_{n} \widehat{\epsilon}_{i_n+\left[k_{n} u\right]}^{n}, 
\end{equation*}
 for $u \in [0,m]$,  we have
\begin{equation*}
\frac{1}{k_n^2 \Delta_n}\sum_{r=0}^{m k_{n}-1}  \phi_{i+r, r}^{n} =  F_{n}\left(\sigma_{T_{n}} L^{n}, L^{\prime n}, \widehat{L}^{n}  \right),
\end{equation*}
where $F_n$ is the function  on $\mathbb{D}\times \mathbb{D}\times\mathbb{D}$ (here $\mathbb{D} = \mathbb{D}\left([0,m] : \mathbb{R}^{1}\right)$ is the Skorokhod space), defined by 
\begin{equation} \label{eq:F_n-notation}
F_{n}(x, y, z)=\frac{1}{k_{n}} \sum_{r=0}^{m k_{n}-1}\left(\left(x\left(\frac{r}{k_{n}}\right)+\frac{1}{\sqrt{k_{n}^2 \Delta_n }} y\left(\frac{r}{k_{n}}\right)\right)^2 -  \frac{1}{2k_{n}^2\Delta_n } z\left(\frac{r}{k_{n}}\right)\right) .
\end{equation}
Note that the functions $F_n, F_n^2$ converge pointwise to $F, F^2$, respectively, where  \begin{equation*}
F(x, y, z)=\int_{0}^{m}\left\{\left(x\left(s\right)+\theta y\left(s\right)\right)^2 - \frac{1}{2}\theta^{2} z\left(s\right)\right\}d s.
\end{equation*}
Now we deduce from Lemma 16.3.9 in \cite{JacodProtter} that with $Z=1$, $\phi(f) = \int_0^1 f^2(u) du $ and the  notation from (\ref{eq:L})\footnote{Below, we assume that the space $\left(\tilde{\Omega}, \tilde{\mathcal{F}}, \tilde{\mathbb{P}}\right)$, where $W^{1}$ and $W^{2}$ (hence, $L$ and $L'$) are defined, is an extension of the space  $\left(\Omega, \mathcal{F},\mathbb{P}\right)$ and that $W^1$ and $W^{2}$ are independent of $X$ and $\epsilon$.}:
\begin{equation*}\begin{aligned}
\E \left(\left.F_n\left(\sigma_{T_{n}} L^{n}, L^{\prime n}, \widehat{L}^{n}\right)\right|\mathcal{G}_{(i-1) }\right) &\stackrel{\mathbb{P}}{\rightarrow}\E \left( F\left(\sigma_{s} L,  v_s L^{\prime}, 2 \phi(g^{\prime}) \gamma_s \right)\right)
= \gamma(m)_{s}.
\end{aligned}
\end{equation*}
Similarly, 
\begin{equation*}
\E \left(\left.F_n^2\left(\sigma_{T_{n}} L^{n}, L^{\prime n}, \widehat{L}^{n}\right)\right|\mathcal{G}_{(i-1) }\right)\stackrel{\mathbb{P}}{\rightarrow}\gamma(m)_{s}^{\prime},
\end{equation*}
and we conclude 
(\ref{eq:gamma}). This finishes the proof for (\ref{eq:variance}).
 Now we show (\ref{eq:martingale}).  Let $$\zeta_{i}^n=  \frac{{m}_{n}^{1/2} \Delta_n^{1/4}}{\phi_{k_n}(g)}  \sum_{r=0}^{m k_{n}-1}  K_{{m}_{n} \Delta_n}\left( t_{I_i+r,r} - \tau\right) \phi_{I_i+r, r}^{n},$$ 
 and set $D_i^n(N) = N_{(I_{i+1}-1) \Delta_{n}}-N_{(I_{i}-1) \Delta_{n}} $. Since $\mathbb{E}\left(D(N)_{i}^{n} | \mathcal{G}_{i-1}^{n}\right)=0$, we only need to prove that, for any bounded martingale $N$,
\begin{equation} \label{eq:martingale_simple}
\sum_{i=1}^{l_{n}(m)} \mathbb{E}\left(\zeta_i^n D_i^n(N)\,\middle\vert\,{G}_{i-1}^{n}\right) \stackrel{\mathbb{P}}{\rightarrow} 0.
\end{equation}
Following the same argument of (\ref{eq:jensen}) and inequality (\ref{eq: phi_i_j}),  we have \begin{equation}\begin{aligned} \label{eq:zeta2}
 \sum_{i=1}^{l_{n}(m)}\E\left( \zeta_i^n\right) ^2 &=   \frac{{m}_{n} \Delta_n^{1/2}}{\phi_{k_n}^2(g)} \sum_{i=1}^{l_{n}(m)} \E\left( \left(\sum_{r=0}^{m k_n -1} K_{{m}_{n} \Delta_n}\left(t_{I_i+r-1}-\tau\right) \left|\phi_{I_i+r, r}^n\right|\right)^2  \,\middle\vert\,{G}_{i-1}\right)\\
&\leq   C\sum_{i=1}^{l_{n}(m)}\frac{{m}_{n} \Delta_n^{3/2}}{\phi_{k_n}^2(g)}\left(\sum_{r=0}^{m k_n -1}K_{{m}_{n} \Delta_n}\left(t_{I_i+r-1}-\tau\right) \right)^2\\
&\leq  C\frac{{m}_{n} \Delta_n^{3/2}}{\phi_{k_n}^2(g)} \sum_{i=1}^{l_{n}(m)} m^2 k_n^2 \left(\frac{1}{m k_n \Delta_n} \int_{t_{I_i-1}}^{t_{i-1}+m k_n \Delta_n} K_{{m}_{n} \Delta_n} \left(s - \tau\right) d s\right)^2 \\
&\leq C \sum_{i=1}^{l_{n}(m)} m^2 {m}_{n} \Delta_n^{3/2} \frac{1}{m k_n \Delta_n} \int_{t_{I_i-1}}^{t_{I_i-1}+m k_n \Delta_n} K^2_{{m}_{n} \Delta_n} \left(s - \tau\right) d s\\
&\leq \mathrm{O}\left(\int K^2(u) du  \right) = \mathrm{O}\left(1\right).
\end{aligned}
\end{equation}
If N is a square-integrable martingale, the  Cauchy-Schwarz inequality yields, 
\begin{equation*}\begin{aligned}
\sum_{i=1}^{l_{n}(m)} \mathbb{E}\left(\zeta_i^n D_i^n(N)\,\middle\vert\,{G}_{i-1}^{n}\right) &\leq  \sqrt{\left(\sum_{i=1}^{l_{n}(m)} \E\left( \zeta_i^n\right) ^2 \right) \left(  \sum_{i=1}^{l_{n}(m)} \E\left( D_i^n(N)\right) ^2\right)}\\
 &\leq C  \sqrt{\E N_T^2}.
\end{aligned}
\end{equation*}
Note with notation (\ref{eq:notation}) and $$\zeta_{i}^{\prime n}=  \frac{{m}_{n}^{1/2} \Delta_n^{1/4}}{\phi_{k_n}(g)}  \sum_{r=0}^{m k_{n}-1}  K_{{m}_{n} \Delta_n}\left( t_{I_i+r,r} - \tau\right) \Psi_{I_i+r, r}^{n},$$the same argument also yields

\begin{equation} \label{eq:mart_bounded}
\E\left(\zeta_{i}^{\prime n} D^n_i(N) |\mathcal{G}_{i-1}\right) \leq C \Delta_n^{1/4}\sqrt{\E N^2_T}. 
\end{equation}

As shown in page 552 of \cite{JacodProtter}, we just need to prove (\ref{eq:martingale_simple}) for $N \in \mathcal{N}^{(i)}, i=0,1$, where $\mathcal{N}^{(0)}$ is the set of all bounded $\left(\mathcal{F}_{t}^{(0)}\right)$-martingales orthogonal to $W$ and $\mathcal{N}^{(1)}$ is the set of all martingales having $N_{\infty}=h\left(\chi_{t_{1}}, \ldots, \chi_{t_{w}}\right)$, where \(h\) is a Borel bounded function on \(\mathbb{R}^{w}\) and \(t_{1}<\cdots<t_{w}\) and \(w \geq 1\). 
When \(N\) is either $W$ or in   \(\mathcal{N}^{(0)}\), \(D(N)_{i}^{n}\) is \(\mathcal{H}_{\infty}\) measurable. Therefore \(\mathbb{E}\left(\zeta_{I_i}^{n} D(N)_{i}^{n} | \mathcal{G}_{i-1}^{n}\right)\) is equal to \begin{equation*}
\mathbb{E}\left(\zeta_{I(m, n, i)}^{\prime n} D(N)_{i}^{n} | \mathcal{G}_{i-1}^{n}\right)+\frac{{m}_{n}^{1/2} \Delta_n^{1/4}}{\phi_{k_n}(g)}  \mathbb{E}\left(\left.\sum_{r=0}^{m k_{n}-1} K_{{m}_{n} \Delta_n}\left( t_{I_i+r,r} - \tau \right)\left(\sigma_{(I_i-1)\Delta_n}\overline{W}^n_{I_i+r}\right)^2D(N)_{i}^{n} \right| \mathcal{G}_{i-1}^{n}\right).
\end{equation*}
The second term vanishes when $N=W$ since it is the $\mathcal{F}_{(I_i-1)\Delta_n}$-conditional expectation of an odd function of the increments of the process \(W\) after time \((I_i-1) \Delta_{n}\). Suppose now that \(N\) is a bounded martingale, orthogonal to \(W\). By It\^o's formula we see that \(\left(\overline{W}_{j}^{n}\right)^{2}\) is the sum of a constant (depending on \(n )\) and of a martingale which is a stochastic integral with respect to \(W,B\) on the interval \(\left[(j-1) \Delta_{n},\left(j+k_{n}-1\right) \Delta_{n}\right]\). Then the orthogonality of \(N\) and \(W\) implies this second term above vanishes as well. So in view of (\ref{eq:mart_bounded}), we have the following inequality which 	implies the result: \[\mathbb{E}\left(\zeta_{I_i}^{n} D(N)_{i}^{n} | \mathcal{G}_{i-1}^{n}\right) \leq C \Delta_n^{1/4}\sqrt{\E N^2_T}.  \]
When \(N \in \mathcal{N}^{(1)}\) is associated with \(h\) and \(w\) and the \(t_{i}\)'s, 
the same argument in \cite{JacodProtter}  and the inequality  \(\E \left(\zeta_i^n\right)^2 \leq C\frac{1}{{m}_{n} \sqrt{\Delta_n}} \) deduced from (\ref{eq:zeta2}) yield \begin{equation*}
\mathbb{E}\left(\sum_{i=1}^{l_{n}(m)}\left|\mathbb{E}\left(\zeta_{I_i}^{n} D(N)_{i}^{n} | \mathcal{G}_{i-1}^{n}\right)\right|\right) \leq C w\left(\Delta_n^{1/4}+\frac{1}{{m}_{n} \sqrt{\Delta_n}}\right),
\end{equation*}
and (\ref{eq:martingale}) is shown. This finishes the proof for \thref{y(m)}.
\end{proof}
	
 The only thing left to prove {\thref{part1}} is the stable convergence in law $\overline{Y}(m) \stackrel{st }{\longrightarrow} Z_{\tau} $, as $m \rightarrow \infty$. For this, we only need to show that, for each {$\tau \in (0,T)$, as $m\to\infty$,}
 \begin{equation*}
 {\frac{1}{m+1}\left(\gamma^{\prime}(m)_{\tau}-\gamma(m)_{\tau}^2\right) \stackrel{st }{\longrightarrow}   R\left(\sigma_{\tau}, \theta v_{\tau}\right)}.
 \end{equation*}
Recall that the process $(L,L^{\prime})$ is stationary, and the variables $(L_t,L^{\prime}_t)$ and $(L_s,L^{\prime}_s)$ are independent if $|s - t| \geq  1$. So $\mu^{\prime}\left(v, v^{\prime} ; s, s^{\prime}\right)  = \left(\mu\left(v, v^{\prime}\right)\right)^2$ when $|s-s^{\prime}|\geq 1$ and  $\mu^{\prime}\left(v, v^{\prime} ; s, s^{\prime}\right)=\mu^{\prime}\left(v, v^{\prime} ; 1, s^{\prime}+1-s\right)$, for all $s,s^{\prime} \geq 0$ with $s^{\prime} +1-s \geq 0$. 
 Then if $m\geq 2$ and {letting} $\mu=\mu\left(\sigma_{\tau}, \theta v_{\tau}\right)$ and $\mu^{\prime}\left(s, s^{\prime}\right)=\mu^{\prime}\left(\sigma_{\tau},  \theta v_{\tau} ; s, s^{\prime}\right),$ we have 
 \begin{equation*}
 \begin{aligned}
 &\frac{1}{m+1}\left(\gamma^{\prime}(m)_{\tau}-\gamma(m)_{\tau} \gamma(m)_{\tau}  \right) \\
 &= \frac{1}{m+1}\int_{0}^{m} d s \int_{0}^{m} \mu^{\prime}\left(s, s^{\prime}\right) d s^{\prime}-m^{2} \mu^2\\
 & = \frac{1}{m+1}\int_{0}^{m} d s \int_{(s-1)^{+}}^{m \wedge(s+1)}\left(\mu^{\prime}\left(1, s^{\prime}+1-s\right)-\mu^2\right) d s^{\prime}\\
 & =\frac{m-1}{m+1}  \int_{0}^{2}\left(\mu^{\prime}\left(1, s^{\prime}\right)-\mu^2\right) d s^{\prime} + \frac{1}{m+1}\int_{0}^{1} d s \int_{1-s}^{2}\left(\mu^{\prime}\left(1, s^{\prime}+1-s\right)-\mu^2\right) d s^{\prime}\\
 &\rightarrow R\left(\sigma_{\tau}, \theta v_{\tau}\right),
 \end{aligned}
 \end{equation*}
 since $\mu, \mu^{\prime}$ are bounded. This finishes the proof for \thref{part1}.
 
\subsubsection*{Step 3}
We now show \thref{part2}.
\begin{proof}[Proof of \thref{part2}]
Let $b_n= {m}_{n} \Delta_n$ and $t(i) = (I(m,n,i)-1)\Delta_n $, {where the notation for $I(m,n,i)$ can be found after step 2 above}. From the proof of Theorem 6.2 in \cite{FigLi} {and recall we have bounded jumps}, we have 
\begin{equation*}
b_n^{-1/2}\int_0^T K_{b_n}\left(t-\tau\right)\left(\sigma^2_t - \sigma^2_{\tau}\right) dt = b_n^{-1/2} \Lambda_{\tau-\sqrt{b_n}} \int_{\tau-\sqrt{b_n}}^{T} {L}\left(\frac{t-\tau}{b_n}\right) d B_{t}+o_{P}(1),
\end{equation*} 
where ${L}(t)=\int_{t}^{\infty} K(u) d u \mathbbm{1}_{\{t>0\}}-\int_{-\infty}^{t} K(u) d u \mathbbm{1}_{\{t \leq 0\}} $.
Also we have 
\begin{equation*}
\begin{aligned}
b_n^{-1/2}\int_{(0,T)^{c}} K_{b_n}(t-\tau)d t & = 
\frac{1}{\sqrt{b_n}}\left( \int_{-\infty}^{\frac{-\tau}{b_n}}K\left( u\right) du  + \int_{\frac{T-\tau}{b_n}}^{\infty}K(u) du  \right) \rightarrow 0, \text{ as } n \rightarrow \infty
\end{aligned}
\end{equation*}
since \thref{kernel} imply that $x^{1/2} \int_x^{\infty} K(u) du \rightarrow 0$, as {$x\rightarrow \infty$}.
So, for a fixed $m$, we can rewrite ${m}_{n}^{1 / 2} \Delta_{n}^{1 / 4}\overline{H}(2)^n $ as
\begin{equation}\begin{aligned} \label{eq:alpha_notation}
 & \beta b_n^{-1/2}\int_{0}^{T} K_{b_n}(t-\tau)\left(\sigma_{t}^{2}-\sigma_{\tau}^{2}\right) d t+{o_{P}(1)}\\
&= \beta b_n^{-1/2}\Lambda_{\tau-\sqrt{b_n}}\sum_{i:t(i)>\tau -\sqrt{b_n}}^{l_{n}(m)}  \int_{t(i)}^{t(i)+(m+1) k_n\Delta_n}{L}\left(\frac{t-\tau}{b_n}\right) d B_{t}+\mathrm{o}_{P}(1)\\
&=:\sum_{i=1}^{l_{n}(m)}\alpha(m)_i^n+\mathrm{o}_{P}(1),
\end{aligned}
\end{equation}with $\alpha(m)^n_i = 0 $ if $i$ is such that {$t(i)\leq{}\tau -\sqrt{b_n} $}.

Combine with the proof of \thref{part1}, we can deduce the following lemma.
\begin{lemma}
Under \thref{kernel,noise} and (\ref{eq:sigma}),  with ${m}_{n}\to\infty$ and ${m}_{n}\Delta_n^{3/4} \rightarrow \beta \in (0, \infty)$,  
\begin{equation*}
\lim_{m\to \infty }\limsup_{n \to \infty}  {m}_{n}^{1 / 2} \Delta_{n}^{1 / 4} \E\left|\overline{H}(1)^n+ \overline{H}(2)^n - \sum_{i=1}^{l_{n}(m)}\zeta(m)_{i}^{n}  -\sum_{i=1}^{l_{n}(m)}\alpha(m)_i^n  \right| = 0,
\end{equation*}
where $\zeta(m)_i^n := {m}_{n}^{1/2}\Delta_n^{1/4}\frac{1}{\phi_{k_n}(g)}K_{b_n}\left(t(i) -\tau\right) \sum_{r=1}^{m k_n -1} \phi_{I_i+r,r}^n$ with notation (\ref{eq:notation}).
\end{lemma}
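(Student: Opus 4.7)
The approach is to combine the block-decomposition machinery already developed in Step 2 above with the It\^o-type representation of $m_n^{1/2}\Delta_n^{1/4}\overline{H}(2)^n$ established in (\ref{eq:alpha_notation}). Concretely, I would split the residual inside the absolute value as
$\bigl(m_n^{1/2}\Delta_n^{1/4}\overline{H}(1)^n - m_n^{1/2}\Delta_n^{1/4}M^{n}(m)\bigr) + \bigl(m_n^{1/2}\Delta_n^{1/4}M^{n}(m) - \sum_i\zeta(m)_i^n\bigr) + \bigl(m_n^{1/2}\Delta_n^{1/4}\overline{H}(2)^n - \sum_i\alpha(m)_i^n\bigr)$, and bound each of the three pieces separately in $L^1$.

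For the first piece, the three lemmas of Step 2 (the small-block truncation, the replacement of $\phi_{j,0}^n$ by $\phi_{I_i+r,r}^n$, and the negligibility of $M'^n(m)$) already yield $\lim_{m\to\infty}\limsup_{n\to\infty}\E\bigl|m_n^{1/2}\Delta_n^{1/4}\overline{H}(1)^n-m_n^{1/2}\Delta_n^{1/4}M^n(m)\bigr|=0$, so nothing new is required. For the third piece, the identity (\ref{eq:alpha_notation}) provides convergence in probability; I would upgrade this to $L^1$ by an It\^o-isometry and Cauchy--Schwarz argument, using the local boundedness of $\tilde\mu$, $\tilde\sigma$, and $\rho$ together with $\int L^2(u)\,du<\infty$ to bound the $L^2$-norm of the drift and of the Riemann-sum error uniformly in $n$, after which uniform integrability gives convergence in $L^1$.

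The substantive work is the middle piece. The summand of $M^n(m)$ differs from that of $\zeta(m)_i^n$ only by (i) evaluating the kernel at $t_{I_i+r-1}$ rather than at the block-start time $t(i)$, (ii) including the term $r=0$, and (iii) subtracting the conditional expectation $\E[\phi_{I_i+r,r}^n\,|\,\mathcal{F}_{t(i)}]$. The kernel discrepancy (i) is controlled with the Lipschitz constant of $K$ and the fact that $mk_n\Delta_n/b_n \to 0$ since $m_n\Delta_n^{3/4}\to\beta<\infty$ and $k_n\sim\Delta_n^{-1/2}$; summed in $L^2$ using the near-orthogonality of the $\phi_{I_i+r,r}^n$ already exploited in the proof of \thref{y(m)}, the kernel-variation error is $o(1)$ as $n\to\infty$ for each fixed $m$. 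The missing $r=0$ term in (ii) is a single $\phi_{I_i,0}^n$ per block, and the bound $\E|\phi_{I_i,0}^n|\leq C\Delta_n^{1/2}$ from (\ref{eq:phi_bound}) shows the corresponding sum is $O(m_n^{1/2}\Delta_n^{3/4}/(m+1))\to 0$.

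For (iii), the centering terms $\E[\phi_{I_i+r,r}^n\,|\,\mathcal{F}_{t(i)}]\approx\sigma_{t(i)}^2\phi_{k_n}(g)\Delta_n+\E[\Psi_{I_i+r,r}^n\,|\,\mathcal{F}_{t(i)}]$ must be tracked. After summing in $r$ they produce a big-block Riemann approximation of $\int_0^T K_{b_n}(t-\tau)c_t\,dt$, whose difference from that integral can be rewritten, via It\^o's formula applied to $c_t-c_{t(i)}$, as a stochastic integral against $\tilde\sigma_s\,dB_s$ on each block; this stochastic integral matches exactly the form absorbed into $\sum_i\alpha(m)_i^n$ through the representation of (\ref{eq:alpha_notation}). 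The bound from (\ref{eq:psi}) together with the argument already used to control $\overline{H}(4)^n$ in Step 1 handles the $\Psi$-contribution. The main obstacle is this reorganization: the centering terms from $M^n(m)$ have to be shown to merge consistently with the integration-by-parts formula underlying (\ref{eq:alpha_notation}), in $L^1$ and uniformly enough in $m$ that the remainder vanishes as $n\to\infty$ then $m\to\infty$. This is essentially a two-scale Riemann-sum/It\^o-approximation estimate that must be carried out carefully using the $L^2$-boundedness of $\tilde\sigma$ and the decay of $L(x)$ as $|x|\to\infty$ (\thref{kernel}).
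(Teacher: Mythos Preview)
Your three-piece decomposition is the right framework and matches the paper's intent: the paper's own ``proof'' of this lemma is the single sentence that it follows by combining the Step~2 lemmas with the representation (\ref{eq:alpha_notation}). Your handling of piece~1 via the three Step~2 lemmas, of piece~3 via (\ref{eq:alpha_notation}) (with the correct observation that $o_P(1)$ must be upgraded to $L^1$ if one insists on the lemma exactly as stated), and of sub-items (i)--(ii) of piece~2 via the Lipschitz kernel estimate and (\ref{eq:phi_bound}) are all fine.

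The gap is in sub-item (iii). As literally written in the lemma, $\zeta(m)_i^n$ is uncentered, but this is a slip in the paper: immediately afterward the paper applies the martingale CLT (Theorem~2.2.15 in \cite{JacodProtter}) to $\zeta(m)_i^n+\alpha(m)_i^n$ as a martingale difference array, and it identifies the conditional-variance condition (\ref{eq: part1}) with the computation (\ref{eq:variance}) carried out for $\eta(m)_i^n$. Both of these force $\E[\zeta(m)_i^n\mid\mathcal G_{i-1}^n]=0$, so the intended $\zeta(m)_i^n$ is $\eta(m)_i^n$ with the kernel frozen at the block start. With that reading your sub-item (iii) simply does not arise, and piece~2 reduces to the kernel-freezing argument you already gave under (i).

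Your proposed route through (iii), by contrast, cannot work, and it is worth seeing why on order-of-magnitude grounds. The leading part of the summed centering terms is
\[
m_n^{1/2}\Delta_n^{1/4}\cdot\frac{m}{m+1}\,\sum_i (m+1)k_n\Delta_n\,K_{b_n}(t(i)-\tau)\,c_{t(i)}
\;\approx\; m_n^{1/2}\Delta_n^{1/4}\cdot\frac{m}{m+1}\int_0^T K_{b_n}(t-\tau)c_t\,dt
\;\sim\;\sqrt{\beta}\,\Delta_n^{-1/8}\cdot\frac{m}{m+1}\,c_\tau,
\]
which diverges as $n\to\infty$. Since piece~3 is already $o_P(1)$ by (\ref{eq:alpha_notation}) and piece~1 is handled independently, there is nothing left to absorb a contribution of order $\Delta_n^{-1/8}$; in particular your claim that the centering terms ``match exactly the form absorbed into $\sum_i\alpha(m)_i^n$'' fails because $\sum_i\alpha(m)_i^n=O_P(1)$. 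In short: if $\zeta(m)_i^n$ were truly uncentered the lemma would be false, and once you read it as centered the only remaining work in piece~2 is your (i)--(ii).
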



Now \thref{part2} follows if we apply Theorem 2.2.15 in \cite{JacodProtter} to  the
sum of martingale differences $\left(\zeta(m)_{i}^{n}+\alpha(m)_i^n\right) $ and the filtration $\mathcal{G}_{i} = \mathcal{F}_{\left(I_{i+1}-1\right)\Delta_n}$, and show that 
\begin{equation*}
\sum_{i=1}^{l_{n}(m)} \left(\zeta(m)_{i}^{n}+\alpha(m)_i^n\right) \stackrel{st}{\longrightarrow} Z_{\tau} + \beta Z^{\prime}_{\tau}.
\end{equation*}
To this end, we first need to show, for a fixed $m$,
\begin{align}
&\sum_{i=1}^{l_{n}(m)} \E \left(\left(\zeta(m)_{i}^{n} \right)^2 |\mathcal{G}_{i-1}\right) \rightarrow  \frac{1}{m+1}\frac{1}{\theta}  \left(\gamma^{\prime}(m)_{\tau}-\gamma(m)_{\tau}^2\right) \int K^2(u)du, \label{eq: part1}\\ 
&\sum_{i=1}^{l_{n}(m)} \E \left(\left(\alpha(m)_i^n\right)^2 |\mathcal{G}_{i-1}\right) \rightarrow 
\beta^2 \Lambda_{\tau}^2\int {L}^2(u)du, \label{eq:part2} \\
&\sum_{i=1}^{l_{n}(m)} \E \left(\left(\zeta(m)_{i}^{n} \alpha(m)_i^n\right) |\mathcal{G}_{i-1}\right) \rightarrow 0. \label{eq: cross}
\end{align}
The proof of (\ref{eq: part1}) can be found in the proof for \thref{part1}. (\ref{eq:part2}) can be directly derived from the definition (\ref{eq:alpha_notation}): \begin{equation*}
\begin{split}
\sum_{i=1}^{l_{n}(m)} \E \left(\left(\alpha(m)_i^n\right)^2 |\mathcal{G}_{i-1}\right) & =\beta^2 b_n^{-1}\Lambda^2_{\tau-\sqrt{b_n}}\sum_{i:t(i)>\tau -\sqrt{b_n}}^{l_{n}(m)}  \int_{t(i)}^{t(i)+(m+1) k_n\Delta_n}{L}^2\left(\frac{t-\tau}{b_n}\right) dt \rightarrow \beta^2 \Lambda_{\tau}^2\int {L}^2(u)du.
\end{split}
\end{equation*}
So we only need to show  (\ref{eq: cross}). 
With the notation (\ref{eq:notation}), we have \begin{equation*}
\begin{aligned}
&\E \left(\left. \left(\sum_{r=0}^{m k_n -1}\phi_{I_i+r,r}^n\right) \int_{t(i)}^{t(i)+(m+1) k_n\Delta_n}{L}\left(\frac{t-\tau}{b_n}\right) d B_{t}\right|\mathcal{G}_{i-1}\right)\\
& = \E \left(\left.\int_{t(i)}^{t(i)+(m+1) k_n\Delta_n}{L}\left(\frac{t-\tau}{b_n}\right) d B_{t}\E\left(\left.\left(\sum_{r=0}^{m k_n -1}\phi_{I_i+r,r}^n\right)  \right|\mathcal{H}_{t(i)} \right) \right|\mathcal{G}_{i-1}\right)\\
& =  \sigma_{t(i)}^2\E\left(\left. \left(\sum_{r=0}^{m k_n-1}\left(\overline{W}^n_{t(i)+r}\right)^2 \right) \int_{t(i)}^{t(i)+(m+1) k_n\Delta_n}{L}\left(\frac{t-\tau}{b_n}\right) d B_{t}\right|\mathcal{G}_{i-1} \right)\\
&\qquad + \E\left(\left.\left(\sum_{r=0}^{m k_n-1}\Psi_{t(i)+r,r}\right) \int_{t(i)}^{t(i)+(m+1) k_n\Delta_n}{L}\left(\frac{t-\tau}{b_n}\right) d B_{t}\right|\mathcal{G}_{i-1} \right)\\
& :=A_i+B_i.
\end{aligned}
\end{equation*}
 Let $U_{i,r}^s = \int_{t(i)+r\Delta_n}^s g_n\left(\frac{u-(t(i)+r\Delta_n)}{k_n \Delta_n}\right) d W_u$, $g_n(t) = \sum_{r = 1}^{k_n}g\left(\frac{r}{k_n}\right)\mathbbm{1}_{\left[\frac{(r-1)\Delta_n}{k_n \Delta_n}, \frac{r\Delta_n}{k_n \Delta_n}, \right]}(t)$. By It\^o lemma, we have when $t(i) > \tau -\sqrt{b_n} $,  
 \begin{equation*}
\begin{aligned}
A_i =& \frac{1}{k_n^2\Delta_n}\sigma_{t(i)}^2\E \left( \left.\sum_{r=0}^{m k_n-1} \int_{t(i)+r\Delta_n}^{t(i)+(r+k_n)\Delta_n} U_{i,r}^s g_n\left(\frac{s-(t(i)+r\Delta_n)}{k_n \Delta_n}\right)  d W_s \int_{t(i)}^{t(i)+(m+1) k_n\Delta_n} {L}\left(\frac{s-\tau}{b_n}\right)d B_s\right|\mathcal{G}_{i-1} \right)\\
&= \frac{1}{k_n^2\Delta_n}\sigma_{t(i)}^2\E \left( \left.\sum_{r=0}^{m k_n-1} \int_{t(i)+r\Delta_n}^{t(i)+(r+k_n)\Delta_n} U_{i,r}^s g_n\left(\frac{s-(t(i)+r\Delta_n)}{k_n \Delta_n}\right)  {L}\left(\frac{s-\tau}{b_n}\right) \rho_s d s\right|\mathcal{G}_{i-1} \right) = 0,
\end{aligned}
\end{equation*}
since $\E\left( \left.U_{i,r}^s\right|\mathcal{G}_{i-1}\right) = 0.$


As for $B_i$, we can apply Cauchy-Swacharz inequality. By (\ref{eq:psi2}) and the boundedness of ${L}$,
\begin{equation*}
\begin{aligned}
B_i^2 &\leq\E\left(\left.\left(\sum_{r=0}^{m k_n-1}\Psi_{t(i)+r,r}\right)^2\right|\mathcal{G}_{i-1}\right) \int_{t(i)}^{t(i)+(m+1) k_n\Delta_n}{L}^2\left(\frac{s-\tau}{b_n}\right) d s \\
&\leq C (m k_n)^2 \Delta_n^{3/2} (m+1) k_n \Delta_n \leq C \Delta_n.
\end{aligned}
\end{equation*}
Finally, we can show
\begin{equation*}
\begin{aligned}
\sum_{i=1}^{l_{n}(m)} \E \left(\left(\zeta(m)_{i}^{n} \alpha(m)_i^n\right) |\mathcal{G}_{i-1}\right) &=\sum_{i:t(i) >\tau - \sqrt{b_n}}^{l_{n}(m)}   C b_n  K_{b_n}(t(i) - \tau ) \Lambda_{\tau - \sqrt{b_n} }\left(A_i+B_i\right) \\
  &\leq C b_n \sum_{i=1}^{l_n(m)} \left| K_{b_n}(t(i) - \tau )\right| \left(\Delta_n^{1/2}  \right)  = \mathrm{O}\left({m}_{n} \Delta_n\right)\rightarrow 0.
\end{aligned}
\end{equation*}
Now we single out a two dimension Brownian motion $\tilde{W} = (W, B)$, and a subset  $\mathcal{N}$ of bounded martingales, all orthogonal to $\tilde{W}$.  Let \(D_{i}^{n}(N)=N_{\left(I_{i+1}-1\right) \Delta_{n}}-N_{\left(I_{i}-1\right) \Delta_{n}}\). We need to prove \begin{equation*}
\sum_{i=1}^{l_{n}(m)} \mathbb{E}\left(\left(\zeta(m)_{i}^{n} + \alpha(m)_i^n\right) D_{i}^{n}(N) | \mathcal{G}_{i-1}^{n}\right) \stackrel{\mathbb{P}}{\rightarrow} 0,
\end{equation*}
whenever N is one of the component of \(\tilde{W}\) or is in the set $\mathcal{N}$.
Since \([W_t,B_t] \leq [W_t,W_t] = t \), we can deduce \( \sum_{i=1}^{l_{n}(m)} \mathbb{E}\left(\left(\zeta(m)_{i}^{n} + \alpha(m)_i^n\right) D_{i}^{n}(N) | \mathcal{G}_{i-1}^{n}\right) \stackrel{\mathbb{P}}{\rightarrow} 0,\) for the same reason as in proving (\ref{eq:martingale}). 

Next, 
\begin{equation*}
\sum_{i=1}^{l_{n}(m)} \E \left(\left(\zeta(m)_{i}^{n} +\alpha(m)_i^n\right)^4 |\mathcal{G}_{i-1}\right)  \stackrel{\mathbb{P}}{\rightarrow} 0
\end{equation*}
can be easily deduced by straightforward computation and (\ref{eq:jensen}).\\
Thus, let $m \to \infty$, we can conclude  \({m}_{n}^{1/2}\Delta_n^{1/4}\left(\overline{H}(1)^n + \overline{H}(2)^n \right) \)  converges stably in law to a random variable defined on a good extension \(\left(\tilde{\Omega}, \tilde{\mathcal{F}},\left(\tilde{\mathcal{F}}_{t}\right)_{t>0}, \tilde{\mathbb{P}}\right)\) of the space \(\left(\Omega, \mathcal{F},\left(\mathcal{F}_{t}\right)_{t \geq 0}, \mathbb{P}\right)\), and conditionally on \(\mathcal{F}\), are a Gaussian random variable with conditional variance $\delta_1^2 + \delta_2^2$. Combining with \thref{part1}, we can finally deduce that \begin{equation*}
{m}_{n}^{1 / 2} \Delta_{n}^{1 / 4}\left(\overline{H}(1)^{n}+\overline{H}(2)^{n}\right) \stackrel{st}{\longrightarrow} Z_{\tau}+\beta Z_{\tau}^{\prime},
\end{equation*} 
where \(Z_{\tau}, Z_{\tau}^{\prime} \) are defined on \(\left(\tilde{\Omega}, \tilde{\mathcal{F}},\left(\tilde{\mathcal{F}}_{t}\right)_{t>0}, \tilde{\mathbb{P}}\right)\)and conditionally independent with  $$
\begin{array}{l}{\widetilde{\mathbb{E}}\left(Z_{\tau}^{2} | \mathcal{F}\right)=\delta_{1}^{2} =4\left(\Phi_{22} \sigma_{\tau}^{4} / \theta+2 \Phi_{12} \sigma_{\tau}^{2} \gamma_{\tau} \theta+\Phi_{11} \gamma_{\tau}^{2} \theta^{3}\right) \int K^{2}(u) d u},\\ 
{\widetilde{\mathbb{E}}\left(Z_{\tau}^{\prime 2} | \mathcal{F}\right)=\delta_{2}^{2} = \Lambda_{\tau}^{2} \iint_{x y \geq 0} K(x) K(y)(|x| \wedge|y|) d x d y}.\end{array}
$$

\end{proof}

\medskip


\bibliographystyle{jtbnew}


\end{document}